\tikzset{
node distance=1.7cm,
state/.style={draw,circle,outer sep=0pt,minimum size=20pt,inner sep=1pt},
initial/.style={
    initial by arrow, initial text=, initial distance=4mm, draw
},
labeledinitial/.style={
  append after command={
        ([xshift=-4mm]\tikzlastnode.west)
        edge[edge] node[auto]{#1}
        (\tikzlastnode) 
    }
},
labeledinitial top/.style={
    %state,
    append after command={
        ([yshift=4mm]\tikzlastnode.north)
        edge[edge] node[auto]{#1}
        (\tikzlastnode) 
    }
},
final/.style={
%state,
            append after command={
                (\tikzlastnode) edge[edge] ([xshift=4mm]\tikzlastnode.east)
            }
},
labeledfinal/.style={
    %state,
    append after command={
        (\tikzlastnode) edge[edge] node[auto]{#1} ([xshift=4mm]\tikzlastnode.east)
    }
},
final bot/.style={
state,
            append after command={
                (\tikzlastnode) edge[edge] ([yshift=-4mm]\tikzlastnode.south)
            }
},
labeledfinal bot/.style={
    state,
    append after command={
        (\tikzlastnode) edge[edge] node[auto]{#1} ([yshift=-4mm]\tikzlastnode.south)
    }
},
inv final bot/.style={
state,
            append after command={
                (\tikzlastnode) edge[edge, white] ([yshift=-4mm]\tikzlastnode.south)
            }
},
phantom/.style={draw=none, fill=none},
edge/.style={->},
loop/.style={edge},
loop above/.style={
	edge,
	out=120,
	in=60,
	looseness=8
},
loop below/.style={
    edge,
    out=240,
    in=300,
    looseness=8
  },
  loop left/.style={
    edge,
    out=150,
    in=210,
    looseness=8
  },
  loop right/.style={
    edge,
    out=30,
    in=330,
    looseness=8
  },
underbrace/.style={
decorate,decoration={calligraphic brace,amplitude=10pt,mirror, raise=0.5cm}
},
subautomaton/.style={
draw, rectangle,
text width=1cm,
align=center,
text centered,
inner sep=0.15cm,
minimum height=0.25cm
},
}
\newcommand\sbullet[1][.5]{\mathbin{\vcenter{\hbox{\scalebox{#1}{$\bullet$}}}}}
\newcommand{\prog}{\ensuremath{\mathcal{P}}}
\newcommand{\concat}{\sbullet}
\newcommand{\V}{V} % programs variables = indeterminates (finite set)
\newcommand{\set}[1]{\ensuremath{
\left\{#1\right\}
}}
\newcommand{\states}{Q}
\newcommand{\Raut}{\mathcal{A}}
\newcommand{\Baut}{\mathcal{B}}
\newcommand{\redip}{\texttt{ReDiP}\xspace}
\newcommand{\bfalse}{\texttt{false}}
\newcommand{\Aut}{\mathcal{A}}
\newcommand{\supp}[1]{\ensuremath{\mathrm{supp}(#1)}}
\newcommand{\trans}[2]{\ensuremath{\llbracket{#1}\rrbracket(#2)}}
\providecommand*{\cupdot}{%
  \mathbin{%
    \mathpalette\@cupdot{}%
  }%
}
\newcommand*{\@cupdot}[2]{%
  \ooalign{%
    $\m@th#1\cup$\cr
    \sbox0{$#1\cup$}%
    \dimen@=\ht0 %
    \sbox0{$\m@th#1\cdot$}%
    \advance\dimen@ by -\ht0 %
    \dimen@=.5\dimen@
    \hidewidth\raise\dimen@\box0\hidewidth
  }%
}
\providecommand*{\bigcupdot}{%
  \mathop{%
    \vphantom{\bigcup}%
    \mathpalette\@bigcupdot{}%
  }%
}
\newcommand*{\@bigcupdot}[2]{%
  \ooalign{%
    $\m@th#1\bigcup$\cr
    \sbox0{$#1\bigcup$}%
    \dimen@=\ht0 %
    \advance\dimen@ by -\dp0 %
    \sbox0{\scalebox{2}{$\m@th#1\cdot$}}%
    \advance\dimen@ by -\ht0 %
    \dimen@=.5\dimen@
    \hidewidth\raise\dimen@\box0\hidewidth
  }%
}
\newcommand{\guard}{\varphi}
\newcommand{\geom}[2]{\normalfont{\mathtt{Geom}_{#2}\mathtt{(#1)}}\xspace}
\newcommand{\bern}[2]{\ensuremath{\mathtt{Bern}_{#2}\mathtt{(#1)}}\xspace}
\newcommand{\dirac}[2]{\ensuremath{\mathtt{Dirac}_{#2}\mathtt{(#1)}}\xspace}
\newcommand{\unif}[2]{\ensuremath{\mathtt{Unif}_{#2}\mathtt{(#1)}}\xspace}
\newcommand{\negbinomial}[3]{\texttt{NegBin}_{#3}\texttt(#1\texttt,#2\texttt)}
\newcommand{\semantics}[1]{\ensuremath{\llbracket #1 \rrbracket}}
\newcommand{\N}{\ensuremath{\mathbb{N}}}
\newcommand{\R}{\ensuremath{\mathbb{R}}}
\newcommand{\Rgez}{\ensuremath{\R_{\normalfont\texttt+}}}
\newcommand{\Rgezinf}{\ensuremath{{\Rgez^\infty}}}
\newcommand{\Bool}{\ensuremath{\mathbb{B}}}
\newcommand{\Y}{\ensuremath{\mathbf{Y}}}
\newcommand{\semiring}{\mathbb{S}}
\newcommand{\sringsubset}{{\semiring'}} % subset of a semiring
\newcommand{\monoid}{\mathbb{M}}
\newcommand{\lin}{{\Rgez\V}}
\let\stmaryrdLightning\lightning
\newcommand{\skp}{\normalfont\texttt{skip}}
\newcommand{\assign}[2]{\normalfont#1\,\texttt{:=} ~ #2\xspace}
\newcommand{\incr}[2]{\normalfont#1\,\texttt{+=} ~ #2\xspace}
\newcommand{\decr}[1]{\normalfont #1\texttt{-}\texttt{-}\xspace}
\newcommand{\coinflip}[3]{\normalfont \{#1\} ~ \texttt[#2\texttt] ~ \{#3\}}
\newcommand{\iid}[2]{\normalfont \texttt{iid}\texttt(#1\texttt,#2\texttt)\xspace}
\newcommand{\ite}[3]{\normalfont \texttt{if} ~ \texttt(#1\texttt) ~ \texttt\{#2\texttt\} ~ \texttt{else} ~ \texttt\{#3\texttt\}\xspace}
\newcommand{\observe}[1]{\normalfont\texttt{observe} ~ \texttt( #1 \texttt)}
\newcommand{\fps}[2]{\ensuremath{#1\langle\!\langle#2\rangle\!\rangle}}
\newcommand{\weightedunion}[2]{
\ensuremath{
#1 \oplus #2
}
}
\newcommand{\probchoice}[4]{
#1 \prescript{#3}{}{\oplus}^{#4} #2
}
\newcommand{\semanticsaut}[1]{|\!|{#1}|\!|}
\newcommand{\ruletag}[1]{\normalfont(\textsc{#1})}
\newcommand{\sosrule}[3]{\ruletag{#1}\,\frac{#2}{#3}}
\newcommand{\opstate}[2]{\langle\,#1\,,\,#2\,\rangle} % state in the operational markov chain
\newcommand{\val}{\sigma} % variable valuation
\newcommand{\done}{\downarrow}
\newcommand{\markovchain}[2]{\ensuremath{
\mathcal{M}_{#1}\semantics{#2}
}}
\newcommand{\markovstates}{\ensuremath{S}}
\newcommand{\markovtrans}{\ensuremath{P}}
\newcommand{\markovinit}{\ensuremath{I}}
\newcommand{\markovprob}[2]{\ensuremath{\mathrm{Pr}^{\markovchain{\Raut}{#1}}(#2)
}
}
\newcommand{\normalize}[1]{\ensuremath{\mathrm{norm}(#1)}}
\newcommand{\monus}{\mathbin{\text{\@dotminus}}}
\newcommand{\@dotminus}{%
  \ooalign{\hidewidth\raise1ex\hbox{.}\hidewidth\cr$\m@th-$\cr}%
}
\newcommand{\poisson}[1]{\texttt{Poisson}\texttt(#1\texttt)}
\newcommand{\gammadist}[2]{\texttt{Gamma}\texttt(#1\texttt,#2\texttt)}
\newcommand{\negbinom}[2]{\texttt{NegBin}\texttt(#1\texttt,#2\texttt)}
\newcommand{\binomial}[2]{\texttt{Binom}\texttt(#1\texttt,#2\texttt)}
\newcommand{\inlineTrans}[1]{\bigcirc\!\!\xrightarrow{#1}\!\!\bigcirc}
\newcommand{\inlineTransTwo}[2]{\bigcirc\!\!\xrightarrow{#1}\!\!\bigcirc\!\!\xrightarrow{#2}\!\!\bigcirc}
\newcommand{\labelSubs}[3]{#1[#2/#3]}
\newcommand{\myred}{red!80!black}
\newcommand{\red}[1]{\textcolor{\myred}{#1}}
\newcommand{\gray}[1]{\textcolor{gray}{#1}}
\newcommand{\sizeaut}[1]{\ensuremath{|#1|}}
\newcommand{\sizebool}[1]{\ensuremath{|#1|}}
\newcommand{\sizeprog}[1]{\ensuremath{|#1|}}
\newcommand{\bigO}[1]{\ensuremath{\mathcal{O}(#1)}}
\newcommand{\probmass}[1]{\ensuremath{\sum_{\sigma\in\N^V} \semanticsaut{#1}(\sigma)}}
\newcommand{\errorstate}{\ensuremath{\langle\stmaryrdLightning\rangle}}
\newcommand{\semanticspath}[1]{\ensuremath{|\!|#1|\!|}}
\newcommand{\semanticsapath}[1]{\overline{\semanticspath{#1}}}
\newcommand{\APaths}[1]{\ensuremath{\textit{APaths}(#1)}}
\newcommand{\lengthpath}[1]{\ensuremath{|#1|}}
\begin{document}
\title{Weighted Automata for Exact Inference in Discrete Probabilistic Programs}
%
%\titlerunning{Abbreviated paper title}
% If the paper title is too long for the running head, you can set
% an abbreviated paper title here
%
\author{Dominik Geißler\inst{1}\orcidID{0009-0008-8069-1417} \and
Tobias Winkler\inst{2}\orcidID{0000-0003-1084-6408}}
\authorrunning{D. Geißler and T. Winkler}
% First names are abbreviated in the running head.
% If there are more than two authors, 'et al.' is used.
%
\institute{Technische Universität Berlin, Berlin, Germany\and
RWTH Aachen University, Aachen, Germany}
\maketitle              % typeset the header of the contribution
\begin{abstract}
    In probabilistic programming, the \emph{inference problem} asks to determine a program's posterior distribution conditioned on its ``observe'' instructions.
    Inference is challenging, especially when exact rather than approximate results are required.
    Inspired by recent work on probability generating functions (PGFs), we propose encoding distributions on $\N^k$ as weighted automata over a commutative alphabet with $k$ symbols.
    Based on this, we map the semantics of various imperative programming statements to automata-theoretic constructions.
    For a rich class of programs, this results in an effective translation from prior to posterior distribution, both encoded as automata.
    We prove that our approach is sound with respect to a standard operational program semantics.
\keywords{Weighted Automata  \and Probabilistic Programming \and Posterior Inference \and Program Semantics \and Probability Generating Functions}
\end{abstract}

\setcounter{footnote}{0} 

\section{Introduction}
\label{sec:intro}
\emph{Probabilistic programming languages} extend traditional programming languages by
capabilities for \emph{sampling} numbers from pre-defined distributions, and
\emph{conditioning} the current program state on observations~\cite{probprogramming}.
Probabilistic programs have numerous applications, including machine learning~\cite{deepprohprogramming}, cognitive science~\cite{cogscience}, and autonomous systems~\cite{autsystems}.
Semantically, probabilistic programs can be seen as \emph{transformers of probability distributions}:
From an initial distribution over inputs, also called \emph{prior}, to a final distribution over outputs, also called \emph{posterior}~\cite{DBLP:journals/jcss/Kozen81,probprogramming}.
\emph{Inference} means characterizing the posterior resulting from a given prior.

In this paper, we study the imperative language \emph{\redip}~\cite{redip,credip}, short for \emph{rectangular discrete probabilistic programming language}.
This language imposes some syntactical restrictions (see \Cref{sec:programs} for details), while preserving decidability of many (inference) tasks:
For instance, for loop-free programs, the moments of the posterior distributions can be computed exactly, and program equivalence is decidable~\cite{redip}.
We only consider the \emph{loop-free} fragment of \redip in this paper; adding general \texttt{while} loops renders the language Turing-complete.
A key challenge related to \redip is its support for some \emph{infinite-support} distributions.
Implementing the distribution transformer semantics directly thus requires manipulating such distributions in an effective manner.
In \cite{generatingfunctionsfor,redip,credip}, symbolic closed-form expressions of \emph{probability generating functions} (PGFs) were employed for this purpose.
In this paper, we propose an automata-theoretic alternative to the PGF approach.
Our main idea is to encode distributions over the $\N$-valued variables of a \redip program by means of \emph{weighted automata} over a commutative alphabet.
This automaton model, which we call \emph{probability generating automata} (PGA), is a restricted form of a weighted multi-counter system:
Intuitively, each time a PGA takes an $X$-transition, the program variable represented by $X$ is incremented.
The probability that $X=n$ in the distribution described by a PGA is thus the sum of the weights of all paths containing exactly $n$ many $X$-transitions; this extends naturally to joint distributions. 

\begin{figure}[t]
    \begin{minipage}{0.45\textwidth}
        \centering
        \begin{align*}
            & \gray{\texttt{// all variables initially 0}} \\
            & \coinflip{\assign{R}{0}}{\nicefrac{9}{10}}{\assign{R}{1}} \,\fatsemi \\
            & \mathtt{if} ~ (R  = 0) ~ \{ \\
            & \qquad \incr{X}{\negbinom{1}{\nicefrac 1 2}} \\
            & \} ~ \texttt{else} ~ \{ \\
            & \qquad \incr{X}{\negbinom{2}{\nicefrac 1 2}} \\
            & \} \,\fatsemi \\
            & \observe{X \geq 2}
        \end{align*}
    \end{minipage}
    \begin{minipage}{0.54\textwidth}
        \centering
        \begin{tikzpicture}[node distance=5mm and 8mm]
            \node[labeledinitial={$\frac{40}{11}$}, state] (i) {};
            \node[state,above right=of i,yshift=-3mm] (1) {};
            \node[state,right=of 1] (2) {};
            \node[labeledfinal={$\tfrac{1}{2}$}, state, right =of 2] (3) {};
            
            \node[state, below right =of i,yshift=3mm] (l1) {};
            \node[state, right =of l1] (l2) {};
            \node[state, right =of l2] (l3) {};
            \node[state, below =of l1] (l4) {};
            \node[state, right =of l4] (l5) {};
            \node[labeledfinal={$\tfrac{1}{2}$},state, below =of l3] (l6) {};
            
            \draw[edge] (i) -- node[auto]{$\tfrac{9}{10}$} (1);
            \draw[edge] (1) -- node[auto]{$\tfrac{1}{2}X$} (2);
            \draw[edge] (2) -- node[auto]{$\tfrac{1}{2}X$} (3);
            
            \draw[edge] (i) -- node[below left]{$\tfrac{1}{10}R$} (l1);
            \draw[edge] (l1) -- node[auto]{$\tfrac{1}{2}X$} (l2);
            \draw[edge] (l2) -- node[auto]{$\tfrac{1}{2}X$} (l3);
            \draw[edge] (l3) -- node[auto]{$\tfrac{1}{2}$} (l6);
            \draw[edge] (l1) -- node[auto]{$\tfrac{1}{2}$} (l4);
            \draw[edge] (l2) -- node[auto]{$\tfrac{1}{2}$} (l5);
            \draw[edge] (l4) -- node[auto]{$\tfrac{1}{2}X$} (l5);
            \draw[edge] (l5) -- node[auto]{$\tfrac{1}{2}X$} (l6);
            
            \draw[loop above,looseness=5] (3) to node[right=2mm,yshift=1mm]{$\tfrac{1}{2}X$} (3);
            \draw[loop above,looseness=5] (l3) to node[right=2mm,yshift=-2mm]{$\tfrac{1}{2}X$} (l3);
            \draw[loop below,looseness=5] (l6) to node[right=2mm,yshift=2mm]{$\tfrac{1}{2}X$} (l6);
        \end{tikzpicture}
    \end{minipage}
    \caption{
        \emph{Left:} High-risk policyholder model in \redip.
        \emph{Right:} Illustration of the PGA construction for the posterior distribution (defined in detail in \Cref{sec:semantics}). $\tfrac{40}{11}$ is the normalization factor resulting from the \texttt{observe}-operation as described in \Cref{sec:observe}.
    }
    \label{fig:motivatingExample}
\end{figure}

\paragraph{Motivating Example: Inferring High-Risk Policyholders.}
Suppose that an insurance company models the number $X$ of claims filed by each individual policyholder as a Poisson-Gamma model $\theta \sim \gammadist{\alpha}{\beta}; X \sim \poisson{\theta}$, which is equivalent to a negative binomial model $X \sim \negbinom{\alpha}{\frac{\beta}{1+\beta}}$~\cite{Willmot_1986}.\footnote{Poisson and gamma distributions are not supported directly by our language.}
For simplicity, we further assume a Bernoulli prior on $\alpha$ ($\alpha = 1$ with prior probability $0.9$; $\alpha=2$ with prior probability $0.1$) and fix $\beta = 1$; these choices imply that low-risk customers ($\alpha = 1$) file one claim on average, whereas high-risk policyholders ($\alpha = 2$) file two claims.
Now consider the following scenario: An employee of the insurance notices that she is processing, for the second time, a claim of the same policyholder---hence she knows that this individual has filed \emph{at least} two claims.
A typical inference problem is to determine the probability of the customer being in the high-risk category.
This situation can be modeled in \redip as shown in \Cref{fig:motivatingExample}, and solved\footnote{The posterior probability of being high-risk is exactly $\nicefrac{2}{11} \approx 0.182$.} by analyzing the resulting automaton (see \Cref{ex:semanticsMotivating}).

\paragraph{Contributions.}
In summary, the contributions of this paper are as follows:
\begin{itemize}
    \item We propose encoding joint distributions---possibly with infinite support---over non-negative integer variables as certain weighted automata (\Cref{sec:wa}).
    \item We provide an effective translation from programs to automata transformations, enabling posterior inference via common automata-theoretic constructions (\Cref{sec:programs,sec:semantics}).
    \item We prove our approach sound with regards to an existing \emph{operational} program semantics in terms of Markov chains~\cite{DBLP:journals/toplas/OlmedoGJKKM18} (\Cref{sec:operational}).
\end{itemize}
\section{Background on Weighted Automata}
\label{sec:wa}
We define $\Rgez = \set{r \in\R \mid r \geq 0}$, $\Rgezinf = \Rgez \cup \set{\infty}$, and $\Bool = \{0,1\}$.

\paragraph{Semirings and Formal Power Series.}
The following definitions closely follow~\cite{DBLP:reference/hfl/Kuich97}.
A \emph{semiring} is a 5-tuple $(\semiring, +, \cdot, 0, 1)$ where $(\semiring, +, 0)$ is a commutative monoid, $(\semiring, \cdot, 1)$ is a monoid, $\cdot$ distributes over $+$, and $a \cdot 0 = 0\cdot a = 0$ for all $a \in\semiring$.
If the operations and neutral elements are clear from the context, we only use $\semiring$ to refer to the semiring.
We call a semiring $\semiring$ \emph{naturally ordered} if the binary relation $a \sqsubseteq b \overset{\text{\tiny def}}{\iff} \exists c\in \semiring \colon a + c = b$ forms a partial order on $\semiring$.
A semiring together with a sum operator $\sum_{i \in I} a_i$ defined for arbitrary---possibly infinite---families $(a_i)_{i \in I}$ is called \emph{complete} if $\sum_{i \in I} a_i$ behaves as usual finite sums for finite $I$, and is commutative as well as distributive (see~\cite{DBLP:reference/hfl/Kuich97} for the formal definitions).
    A naturally ordered, complete semiring $\semiring$ is called \emph{$\omega$-continuous} if the following condition holds for all \emph{countable} families $(a_i)_{i\in\N}$:
    $
        \forall c \in \semiring \colon \forall n\in \N \colon 
        \sum_{i \leq n} a_i \sqsubseteq c
        \implies
        \sum_{i\in\N}a_i \sqsubseteq c
        .
    $

Two $\omega$-continuous semirings of interest are the \emph{Boolean semiring}
$(\Bool, \lor, \land, 0, 1)$ and the \emph{non-negative extended real semiring} $(\Rgezinf, +,\cdot, 0, 1)$.
In the latter, the sum operator is defined via $\sum_{i\in I} a_i = \sup \set{\sum_{i \in E} a_i \mid E\subseteq I, E \text{ finite}}$, which possibly evaluates to the $\infty$ element.

Given an $\omega$-continuous semiring $(\semiring, +, \cdot, 0, 1)$ and a monoid $(\monoid, \circ, e)$, we define the set $\fps{\semiring}{\monoid}$ of \emph{formal power series} (FPS) as the set of maps $\{f \colon \monoid \to \semiring\}$.
As usual, FPS are denoted as ``formal infinite sums''
$
    f = \sum_{m \in \monoid} f(m) m
$.
We call $f(m)$ the \emph{coefficient} of $m \in \monoid$ and define the \emph{support} of $f$ as $\supp{f} = \{m \in \monoid \mid f(m) \neq 0\}$.
$\fps{\semiring}{\monoid}$ is a semiring with the following operations:
For all $f_1,f_2 \in \fps{\semiring}{\monoid}$ and $m \in \monoid$, $(f_1 + f_2)(m) = f_1(m) + f_2(m)$ and $(f_1\cdot f_2)(m) = \sum_{m=m_1 \circ m_2} f_1(m_1) \cdot f_2(m_2)$.
The neutral elements are $0$ and $1e = e$. If $\semiring$ is $\omega$-continuous, then so is $\fps{\semiring}{\monoid}$~\cite{DBLP:reference/hfl/Kuich97}.

\begin{example}[Formal Power Series]
    \label{ex:fps}
    We illustrate FPS by means of the instances relevant to this paper.
    The overall idea is that FPS generalize the concept of (weighted) formal languages. 
    Let $\V$ be a finite alphabet.
    \begin{itemize}
        \item $\V^*$ is a monoid with concatenation and neutral element $\varepsilon$, and $\fps{\Bool}{\V^*}$ is isomorphic to the \emph{semiring of formal languages} $2^{\V^*}$ over $\V$.
        In this semiring, addition is union and multiplication is word-wise concatenation of languages.
        \item Languages over \emph{commutative} symbols can be modeled using the monoid $\N^\V = \{\val \colon \V \to \N\} \cong \N^{|\V|}$ with pointwise addition.
        In this paper, we employ FPS from $\fps{\Rgezinf}{\N^\V}$ to encode distributions over $\N^\V$.
        For instance,
        \[
            \tfrac 1 2 Y ~+~ \tfrac 1 4 X Y^2 ~+~ \tfrac 1 8 X^2 Y^3 ~+~ \ldots \quad \in~ \fps{\Rgezinf}{\N^\V}
        \]
        describes a joint distribution over the variables $V = \{X, Y, \ldots\}$ where $X$ is geometrically distributed and, with probability $1$, $Y = X+1$ and $Z = 0$ for all $Z \in \V \setminus \{X,Y\}$.
        Note that we use \emph{monomial notation}\footnote{Formally, assuming $\V = \{X_1,\ldots,X_k\}$, $\sigma \in \N^\V$ is written as $X_1^{\sigma(X_1)} \ldots X_k^{\sigma(X_k)}$ where variables with an exponent of $0$ are omitted.} for the variable valuations $\N^\V$.
    \end{itemize}
\end{example}

\paragraph{Weighted Automata over Semirings.}
We adopt the following automaton model:
\begin{definition}[Weighted Automaton~\cite{DBLP:reference/hfl/Kuich97}]
    Let $\semiring$ be an $\omega$-continuous semiring and let $\emptyset\neq \sringsubset \subseteq \semiring$.
    An \emph{$\sringsubset$-automaton over $\semiring$} is a 4-tuple $\Aut = (\states,M,I,F)$ where
    $\states \neq \emptyset$ is a finite set of \emph{states},
    $M \in \sringsubset^{\states\times \states}$ is a \emph{transition matrix}, and
    $I \in \sringsubset^{1\times \states}$ and $F \in \sringsubset^{\states \times 1}$ are vectors of initial and \emph{final weights}, respectively.
\end{definition}
Notice that only the elements in $\sringsubset$ are allowed as transition labels and initial/final weights. 
A state $s \in \states$ is called \emph{initial} \emph{(final)} if $I_s \neq 0$ ($F_s \neq 0$), where $0$ refers to the semiring zero.
The \emph{behavior} (or semantics) of $\Aut$ is defined as
\[
    \semanticsaut{\Aut} ~=~ IM^*F ~\in~ \semiring ~,
    \qquad \text{where} \quad
    M^* = \sum_{n\in \N} M^n
    ~.
\]
The infinite sum $M^*$ is well-defined because $\semiring$ is complete by assumption.
Intuitively, $\semanticsaut{\Aut}$ is the semiring-sum of the weights of all finite-length paths in $\Aut$, where the weight of a path is the semiring-product of the initial weight of its first state, the weights of its transitions, and the weight of the final state.
As usual, we often represent automata graphically; we use \emph{unlabeled} dangling arrows for initial and final states with weight $1$, and \emph{labeled} dangling arrows for initial and final states with a weight other than $1$.

We only consider automata over FPS semirings $\fps{\semiring}{\monoid}$ in this paper.
Let $\Aut = (\states,M,I,F)$ be such an automaton.
We write $\Aut \colon s \xrightarrow[]{am} t$ iff $M_{s,t}(m) = a$, where $s,t, \in \states$, $a \in \semiring$, $m \in \monoid$; $\Aut$ may be omitted if clear from context.
Further, we say that $\Aut$ is \emph{normalized}\footnote{The notion of normalization in \cite{DBLP:reference/hfl/Kuich97} is slightly stronger than ours.} if $\bigcup_{s \in \states}\supp{I_s} \cup \supp{F_s}$ is either $\emptyset$ or $\{e\}$, where $e$ is the unit of $\monoid$.
For $\fps{\Bool}{\V^*}$ and $\fps{\Rgezinf}{\N^\V}$ (see \Cref{ex:fps}), this condition means that the initial and final weights are elements from $\Bool$ and $\Rgezinf$, respectively.

\begin{example}[Relevant Automata in this Paper]
    Let $\V$ be a finite alphabet.
    \begin{itemize}
        \item Define the following subset of $\fps{\Rgezinf}{\N^\V}$:
        \[
            \lin ~=~ \{rX \mid r \in \Rgez, X \in \V\} \cup \Rgez
        \]
        The protagonists of our method are $\lin$-automata over the semiring $\fps{\Rgezinf}{\N^\V}$ (see \Cref{fig:motivatingExample} for an example).
        The behavior of such an automaton is an FPS where the coefficient of $\val \in \N^\V$ equals the sum of the weights of all paths in which every $X \in \V$  appears exactly $\val(X)$ many times.
        An $\lin$-automaton is thus similar to a weighted multi-counter system with $|\V|$ counters admitting increment operations only.
        \item A \emph{nondeterministic finite automaton} (NFA) over alphabet $\V$ is a normalized $2^\V$-automaton over the \emph{semiring of formal languages} $\fps{\Bool}{\V^*} \cong 2^{\V^*}$.
        A \emph{deterministic} finite automaton (DFA) is an NFA with the additional restrictions that (i) there exists exactly one initial state, and (ii) for all $s \in \states$ and $X \in \V$ there exists exactly one $t \in \states$ such that $s \xrightarrow{X} t$.
    \end{itemize}
\end{example}

\paragraph{Probability Generating Functions and Automata.}
A \emph{probability generating function} (PGF) is an FPS $f \in \fps{\Rgez}{\N^\V}$ with \emph{mass} $\sum_{\val \in \N^\V} f(\val) \leq 1$.
A PGF thus describes a probability (sub-)distribution on $\N^\V$, see \Cref{ex:fps}.

\begin{definition}[PGA: Probability Generating Automaton]%
    \label{def:pga}%
    A \emph{PGA} is an $\lin$-automaton $\Raut$ over $\fps{\Rgezinf}{\N^\V}$ s.t.\ $\semanticsaut{\Raut}$ is a PGF, i.e.\ $\sum_{\val \in \N^\V} \semanticsaut{\Raut}(\val) \leq 1$.
\end{definition}
Although all coefficients are included in $[0,1]$, $[0,1]$ alone is not sufficient as we need elements from $\R_{>1}$ for normalization (see \Cref{fig:motivatingExample}). 

\Cref{fig:basicDistrAutomata} provides examples of PGA for basic distributions;
it can be checked that $\semanticsaut{\Raut_{\geom{p}{X}}} = \sum_{i\geq 0} (1{-}p)^i p X^i$, $\semanticsaut{\Raut_{\bern{p}{X}}} = 1{-}p + pX$, $\semanticsaut{\Raut_{\dirac{n}{X}}} = X^n$, and $\semanticsaut{\Raut_{\unif{m}{X}}} = \sum_{i=0}^{m-1} \tfrac 1 m X^i$.
More generally, we say that a distribution is \emph{PGA-definable} if its PGF is the behavior of some PGA.
It follows from the definitions that the class of PGA-definable distributions includes the so-called \emph{discrete-phase distributions}~(see, e.g.,~\cite{navarro2019order}); a more thorough characterization of PGA-definable distributions is left for future work.

It may not always be obvious if a given automaton is a PGA, i.e.\ if the sum of the weights of its finite paths is at most $1$.
However, the following lemma asserts that this can be tested in polynomial time:

\begin{lemma}
    \label{thm:computeMass}
    For every given\footnote{This statement assumes binary-encoded \emph{rational} numbers as transition weights.} $\lin$-automaton $\Raut$ over $\fps{\Rgezinf}{\N^\V}$, the exact mass $\sum_{\val \in \N^\V} \semanticsaut{\Raut}(\val) \in \Rgezinf$ can be computed in polynomial time in the size of $\Raut$.
\end{lemma}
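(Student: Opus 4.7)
The plan is to reduce the statement to a classical fact about non-negative rational matrices. The crucial observation is that the map $\phi \colon \fps{\Rgezinf}{\N^\V} \to \Rgezinf$ defined by $\phi(f) = \sum_{\val \in \N^\V} f(\val)$, i.e.\ ``evaluation at $X = 1$ for all $X \in \V$'', is a complete semiring homomorphism: additivity, multiplicativity, and preservation of countable sums all follow from Fubini-style rearrangement of non-negative series. On the generators of $\lin$ we have $\phi(rX) = \phi(r) = r$. Applying $\phi$ entry-wise to $M$, $I$, $F$ therefore yields non-negative rational $M' \in \Rgez^{\states \times \states}$, $I' \in \Rgez^{1 \times \states}$, $F' \in \Rgez^{\states \times 1}$ with
\[
    \sum_{\val \in \N^\V} \semanticsaut{\Raut}(\val) ~=~ \phi(IM^*F) ~=~ I'\,(M')^*\, F' ~\in~ \Rgezinf,
\]
using that $\phi$ commutes with the Kleene star because it is complete.

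It thus remains to compute $I' (M')^* F'$ in polynomial time for a non-negative rational matrix $M'$. My approach combines SCC decomposition with Gaussian elimination over $\Q$. First, restrict attention via forward and backward reachability to states lying on some path from the support of $I'$ to the support of $F'$. Next, compute the SCC-DAG of the restricted matrix. For each SCC $C$, the Kleene star $(M'|_C)^*$ is finite iff the spectral radius $\rho(M'|_C) < 1$; by Perron--Frobenius for non-negative matrices, this holds iff $(I - M'|_C)$ is invertible \emph{and} $(I - M'|_C)^{-1}$ has only non-negative entries, both checkable in polynomial time via Gaussian elimination. If some reachable SCC fails this test, the mass is $\infty$; otherwise one assembles $(M')^*$ by dynamic programming along the topological order of SCCs, multiplying intra-SCC blocks $(I - M'|_C)^{-1}$ with inter-SCC transitions, and finally computes $I' (M')^* F'$. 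Standard bit-complexity bounds (Hadamard's inequality applied to the intermediate determinants in Gaussian elimination) guarantee that all rationals encountered stay polynomially sized.

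The main obstacle I expect is the careful accounting of possibly infinite entries: one must ensure that divergent SCCs only yield $\infty$ when they lie on a path connecting the supports of $I'$ and $F'$, so that the convention $0 \cdot \infty = 0$ on $\Rgezinf$ is respected and divergence in an ``irrelevant'' subgraph does not spuriously make the mass infinite. The forward/backward reachability pre-trimming resolves this cleanly, but formulating the invariants precisely -- and justifying that intra-SCC solutions assemble into the true Kleene star $(M')^*$ rather than some other fixed point of $R = I + M'R$ -- is where the bookkeeping lies. Once that is in place, the polynomial-time bound follows directly from standard results on exact rational linear algebra.
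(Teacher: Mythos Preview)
Your reduction via the complete semiring homomorphism $\phi$ coincides with the paper's first step (phrased there as transition label substitution $X \mapsto 1$): both arrive at computing $I'(M')^*F'$ for a non-negative rational matrix $M'$. Where you diverge is in how that quantity is evaluated. The paper invokes the characterization of the behavior as $I'B$ with $B$ the least solution of $B = M'B + F'$ over $\Rgezinf$, and reduces the computation to a single linear program (minimize $I'B$ subject to $B = M'B + F'$ and $B \geq 0$; infeasibility signals $\infty$). You instead trim, decompose into SCCs, test each SCC for $\rho < 1$ via the M-matrix criterion that $(I - M'|_C)^{-1}$ exists and is entrywise non-negative, and finish with Gaussian elimination. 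Your route is correct and more self-contained---no LP solver needed---at the cost of more bookkeeping; the paper's is shorter to state. The trimming step you highlight is indeed essential for your argument (an unreachable divergent SCC must not force $\infty$ under the $0\cdot\infty=0$ convention), and once trimming leaves only SCCs with $\rho < 1$, the full trimmed matrix---block-triangular in topological order---also has $\rho < 1$, so you can compute $I'(I - M')^{-1}F'$ in a single Gaussian elimination rather than assembling $(M')^*$ block by block; this sidesteps the fixed-point bookkeeping you anticipate at the end.
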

\begin{proof}
    We construct an $\Rgez$-automaton $\Raut' = (Q, M, I, F)$ from $\Raut$ by removing the symbols $\V$ from the transitions of $\Raut$ (see ``transition label substitution'' in \Cref{sec:setToZero}), as well as from its initial and final weights.
    It can be verified that $\semanticsaut{\Raut'} = \sum_{\val \in \N^\V} \semanticsaut{\Raut}(\val)$.
    Next, we use that $\semanticsaut{\Raut'} = IB$, where $B$ is the componentwise least solution of the linear equation system $B = MB + F$ over \Rgezinf, see~\cite[Thm.~4.1]{DBLP:reference/hfl/Kuich97}.
    Consequently, $\semanticsaut{\Raut'}$ is either the optimal value of the linear program
    “minimize $IB$ s.t.\ $B = MB + F \land B \geq 0$,”
    or $\infty$ if the linear program is infeasible.
    In either case, the outcome can be determined in polynomial time.
\end{proof}

\begin{figure}[t]
    \centering
    \setlength{\tabcolsep}{10pt}
    \begin{tabular}{c c}
        \begin{tikzpicture}[node distance=2mm and 7mm]
            \node[state,labeledfinal=$p$,initial] (q0) {};
            \draw[loop above,looseness=5] (q0) to node[right,xshift=2mm] {$(1{-}p)X$} (q0);
        \end{tikzpicture}
        &
        \begin{tikzpicture}[node distance=2mm and 5mm]
            \node[initial, state] (q0) {$0$};
            \node[state, right=of q0] (q1) {$1$};
            \node[right= of q1] (ph) {$\hdots$};
            \node[final,state, right= of ph] (qn) {$n$};
            \draw[edge] (q0) -- node[above]{$X$} (q1);
            \draw[edge] (q1) -- node[above]{$X$} (ph);
            \draw[edge] (ph) -- node[above]{$X$} (qn);
        \end{tikzpicture} \\
        $\Raut_{\geom{p}{X}}$ & $\Raut_{\dirac{n}{X}}$ \\[1em]
        \begin{tikzpicture}[node distance=2mm and 7mm]
            \node[state, labeledfinal bot=$1{-}p$,initial] (q0) {};
            \node[state, final bot, right=of q0](q1) {};
            \draw[edge] (q0) to node[above] {$p X $} (q1); 
        \end{tikzpicture}
        &
        \begin{tikzpicture}[node distance=2mm and 5mm]
            \node[labeledfinal bot=$\tfrac{1}{m}$,initial, state] (q0) {$0$};
            \node[labeledfinal bot=$\tfrac{1}{m}$,state, right=of q0] (q1) {$1$};
            \node[right= of q1] (ph) {$\hdots$};
            \node[labeledfinal bot=$\tfrac{1}{m}$,state, right= of ph] (qn) {\scriptsize$m{-}1$};
            \draw[edge] (q0) -- node[above]{$X$} (q1);
            \draw[edge] (q1) -- node[above]{$X$} (ph);
            \draw[edge] (ph) -- node[above]{$X$} (qn);
        \end{tikzpicture} \\
        $\Raut_{\bern{p}{X}}$ & $\Raut_{\unif{m}{X}}$ \\
    \end{tabular}
    \caption{PGA for basic distributions ($X \in V, p \in [0,1], n \in \N, m \in \N_{>0}$ are constants).}
    \label{fig:basicDistrAutomata}
\end{figure}
\section{Probabilistic Programs}
\label{sec:programs}

Let $\V$ be a finite alphabet of \emph{program variables}, fixed throughout the rest of the paper. The following is a slight extension\footnote{Specifically, we allow $\equiv_m$ (congruence modulo $m$) in guards, and explicitly incorporate some statements that are only available as syntactic sugar in~\cite{redip,credip}.} of (loop-free) \redip from~\cite{redip,credip}:

\begin{definition}[Syntax of \redip]%
    \label{def:syntax-redip}%
    \redip programs $\prog$ adhere to the following grammar (where $X, Y \in \V$, and $D$ is a PGA-definable distribution):
    \begin{align*}
        \prog \quad\Coloneqq\quad
                  & \assign{X}{0} \quad\quad \tag*{Set variable $X$ to $0$} \\         
        \mid\quad & \incr{X}{n} \tag*{Increment by constant $n \in \N$} \\
        \mid\quad & \incr{X}{D} \tag*{Increment by random sample from $D$} \\
        \mid\quad & \incr{X}{Y} \tag*{Increment by variable} \\
        \mid\quad & \incr{X}{\iid{D}{Y}} \tag*{Increment by sum of $Y$ i.i.d.\ samples from $D$} \\
        \mid\quad & \coinflip{\prog}{p}{\prog} \tag*{Random branching ($p \in [0,1]$)} \\
        \mid\quad & \ite{\guard}{\prog}{\prog} \tag*{Conditional branching} \\
        \mid\quad & \decr{X} \tag*{Decrement (``monus'' semantics)} \\
        \mid\quad & \observe{\guard} \tag*{Conditioning} \\
        \mid\quad & \prog\fatsemi \prog \tag*{Sequential composition} \\
        \guard \quad\Coloneqq\quad & X < n ~\mid~ X \equiv_{m} n ~\mid~ \guard \land \guard ~\mid~ \neg \guard  \tag*{Guards ($n, m \in\N, m > n$)}
    \end{align*}
\end{definition}
The particular instruction set of \redip is chosen so that each instruction roughly corresponds to an elementary automata-theoretic construction, see \Cref{sec:semantics}.
The intended effect of each statement should be sufficiently clear, perhaps with the exception of $\incr{X}{\iid{D}{Y}}$ and $\observe{\guard}$, which we explain in \Cref{sec:iid,sec:observe}, respectively.
Notice that the instructions 1--2 and 4 together allow assigning linear expressions with coefficients in $\N$ to variables.
For instance, $\assign{X}{1 + X + 2Y}$ can be expressed as $\assign{X}{0}\fatsemi \incr{X}{1} \fatsemi \incr{X}{Y} \fatsemi \incr{X}{Y}$.
We use $\skp$ as a shorthand for the effectless program $\incr{X}{0}$.

An important syntactic restriction of \redip is that guards can only compare variables to \emph{constants}, but not to other variables---hence the name ``rectangular''.
Other common comparison operators (e.g., $\geq$, $=$) and Boolean connectives (e.g., $\lor, \rightarrow$) are included as syntactic sugar.
The semantics $\semantics{\guard}$ of guards is standard and defined in \Cref{tab:guards}.
For $\val \in \N^\V$ we often write $\val \models \guard$ instead of $\val \in \semantics{\guard}$.

As a first step towards our automata-theoretic interpretation of \redip in the upcoming \Cref{sec:semantics}, we encode the rectangular guards $\guard$ from \Cref{def:syntax-redip} as DFA.
For $w \in \V^*$ we define its \emph{Parikh image} $\Psi(w) = \lambda X. |w|_X \in \N^\V$; that is, for every $X \in \V$, $\Psi(w)(X)$ is the number of occurrences of $X$ in $w$.
\begin{figure}[t]
    \centering
    \def\looseness{5}
\begin{tikzpicture}
\node[final bot, state, initial] at (0,0) (0) {$0$};
\node[phantom] at (1.5,0) (ph1) {$\hdots$};
\node[state, final] at (3,0) (n1) {$n{-}1$};
\node[state] at (3, -1.5) (n) {$n$};

\draw[loop above,looseness=\looseness] (0) to node[above]{$\Y$} (0);
\draw[edge] (0) -- node[above]{$X$} (ph1);
\draw[edge] (ph1) -- node[above]{$X$} (n1);
\draw[loop above,looseness=\looseness] (n1) to node[above]{$\Y$} (n1);
\draw[edge] (n1) -- node[right]{$X$} (n);
\draw[loop right,looseness=\looseness] (n) to node[right]{$X, \Y$} (n);

%%%

\node[initial, state] at (5,0) (0) {$0$};
\draw[loop above, looseness=\looseness] (0) to node[above]{$X,\Y$} (0);

%%%

\node[initial, state] at (7,0) (0) {$0$};
\node[phantom] at (8.5,0) (ph1) {$\hdots$};
\node[final, state] at (10,0) (n) {$n$};

\node[phantom] at (10, -1.5) (ph2) {$\hdots$};
\node[state] at (8.5, -1.5) (m) {$m {-} 1$};

\draw[edge] (0) -- node[above]{$X$} (ph1);
\draw[edge] (ph1) -- node[above]{$X$} (n);
\draw[edge] (n) -- node[right]{$X$} (ph2);
\draw[edge] (ph2) -- node[above]{$X$} (m);
\draw[edge] (m) -- node[auto]{$X$} (0);

\draw[loop above, looseness=\looseness] (0) to node[above]{$\Y$} (0);
\draw[loop above, looseness=\looseness] (n) to node[above]{$\Y$} (n);
\draw[loop above, looseness=\looseness] (m) to node[above]{$\Y$} (m);
\end{tikzpicture}
    \caption{DFA for guards $\Baut_{X<n}$ ($n>0$), $\Baut_{X<0}$ ($=\Baut_\bfalse$) and $\Baut_{X\equiv_m n}$ ($m>n$) (with $\Y = \V \setminus \set{X}$).}
    \label{fig:guardAutomata}
\end{figure}

\begin{table}[t]
    \centering
    \caption{Guard semantics and translation to automata.}
    \label{tab:guards}
    \setlength{\tabcolsep}{6pt}
    \rowcolors{2}{gray!7}{white} % alternating row colors
    \begin{tabular}{l l l}
        \toprule
        Guard $\guard$ & Semantics $\semantics{\guard}$ & Guard DFA $\Baut_\guard$ \\ \midrule
        $X < n$ & $\{\val \in \N^\V \mid \val(X) < n\}$ & see \Cref{fig:guardAutomata} \\
        $X \equiv_m n$ & $\{\val \in \N^\V \mid \val(X) = n \mod m\}$ & see \Cref{fig:guardAutomata} \\
        $\guard \land \guard'$ & $\semantics{\guard} \cap \semantics{\guard'}$ & $\Baut_{\guard} \times \Baut_{\guard'}$ (standard DFA product) \\
        $\neg\guard$ & $\N^\V \setminus \semantics{\guard}$ & $\overline{\Baut_\guard}$ (standard DFA complement) \\
        \bottomrule
    \end{tabular}
\end{table}

\begin{definition}[Automaton Modeling a Guard]%
    \label{def:modelingGuard}%
    We say that an NFA $\Baut$ over alphabet $\V$ \emph{models a guard $\guard$} if
    \[
        \forall w \in \V^* \colon\qquad \Baut \text{ accepts } w \quad\text{if and only if}\quad \Psi(w) \models \guard
        ~.
    \]
\end{definition}
It follows that the language accepted by an NFA modeling some guard is \emph{closed under permutations}.

\begin{lemma}[Properties of the Automata $\Baut_\guard$]%
    For every guard $\guard$, the automaton $\Baut_\guard$ (defined inductively in \Cref{tab:guards}) is a DFA which models $\guard$ in the sense of \Cref{def:modelingGuard}.
\end{lemma}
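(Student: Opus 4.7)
The plan is to prove the lemma by structural induction on the guard $\guard$ following the grammar of \Cref{def:syntax-redip}.

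\emph{Base cases.} For the two atomic guards I would read off the diagrams in \Cref{fig:guardAutomata} and verify directly that each automaton has a unique initial state and exactly one outgoing transition from every state for every symbol in $\V$, so it is a DFA in the paper's sense. Correctness then reduces to a straightforward counting argument: after reading $w \in \V^*$ the automaton $\Baut_{X<n}$ (for $n > 0$) is in state $\min(\Psi(w)(X), n)$, which is final iff $\Psi(w)(X) < n$; the degenerate case $\Baut_{X<0} = \Baut_\bfalse$ is a single non-final sink, correctly modeling the unsatisfiable guard. Similarly, after reading $w$ the automaton $\Baut_{X \equiv_m n}$ is in state $\Psi(w)(X) \bmod m$, which is the unique final state $n$ iff $\Psi(w) \models X \equiv_m n$.

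\emph{Inductive cases.} Assume as induction hypothesis that $\Baut_\guard$ and $\Baut_{\guard'}$ are DFAs modeling $\guard$ and $\guard'$, respectively. The standard product DFA $\Baut_\guard \times \Baut_{\guard'}$ is then again a DFA and accepts the intersection of the two component languages; hence it accepts $w$ iff $\Psi(w) \models \guard$ and $\Psi(w) \models \guard'$, i.e.\ iff $\Psi(w) \models \guard \land \guard'$. For negation, I would invoke the standard DFA complementation (swap of final and non-final states), which preserves determinism and totality and yields a DFA accepting $\V^* \setminus L(\Baut_\guard)$. By induction, this language equals $\set{w \mid \Psi(w) \not\models \guard} = \set{w \mid \Psi(w) \models \neg\guard}$.

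\emph{Obstacles.} None of the steps is substantial. The one point worth flagging is that correctness of the complement construction requires the automata to be \emph{total} DFAs; this is exactly the purpose of the $\V \setminus \set{X}$ self-loops and the explicit sink at state $n$ in the $X < n$ diagram, and both invariants are preserved by the product construction and by complementation. Everything else is routine verification.
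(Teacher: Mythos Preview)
Your proposal is correct and follows essentially the same approach as the paper: structural induction on $\guard$, checking the base automata by inspection, and appealing to the standard DFA product and complement constructions for the inductive steps. The paper's proof is a two-sentence sketch of exactly this argument; your version is simply more explicit about the state invariants and about the totality requirement for complementation.
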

\begin{proof}
    The base automata in \Cref{fig:guardAutomata} are deterministic and model the corresponding atomic guards.
    The statement then follows by induction on the structure of $\guard$, using the standard constructions for DFA intersection and complement.
\end{proof}

\section{Interpreting \redip with Weighted Automata}
\label{sec:semantics}
As explained in \Cref{sec:intro}, we view \redip programs semantically as \emph{transformers of probability distributions}: from initial (prior) distributions to final (posterior) distributions.
Our philosophy is to represent such distributions with PGA.

In this section, we implement the distribution transformer semantics of \redip by means of effective automata-theoretic constructions.
More precisely, given a PGA $\Raut$ (fixed throughout the rest of this section) encoding an initial distribution and a \redip program $\prog$, we outline an algorithm to construct a PGA $\semantics{\prog}(\Raut)$ for the resulting (unnormalized\footnote{$\semantics{\prog}(\Raut)$ may describe a \emph{sub}-distribution even if $\semanticsaut\Raut$ is a proper distribution, i.e.\ has mass one. \Cref{sec:observe} describes how to obtain a PGA for the \emph{normalized} posterior.}) posterior distribution.
The overall construction, which is recursive, is summarized in \Cref{tab:semantics}.

\begin{restatable}{lemma}{endofunction}%
    \label{thm:endofunction}
    For every \redip program $\prog$, the automata transformer $\semantics{\prog}$ defined in \Cref{tab:semantics} is an endofunction on the set of normalized PGA over variables $\V$.
\end{restatable}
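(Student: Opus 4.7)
The proof plan is structural induction on the program $\prog$. For each syntactic form, I would verify three properties of $\semantics{\prog}(\Raut)$: (i) it is an $\lin$-automaton, i.e.\ every transition weight lies in $\Rgez \cup \{rX \mid r \in \Rgez, X\in \V\}$; (ii) it is normalized, meaning all initial and final weights are plain elements of $\Rgezinf$ rather than non-trivial FPS; and (iii) its behavior is a PGF, i.e.\ has total mass at most $1$, so the output is indeed a PGA. Properties (i) and (ii) should be immediate by inspection of each row of \Cref{tab:semantics}, since each construction is designed so that all counter-incrementing behavior is pushed onto transitions. The substantive content of the lemma therefore lies in the mass bound (iii), and for the recursive cases I would exploit the induction hypothesis on the sub-programs directly.

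For the atomic statements I would argue case-by-case. The statements $\assign{X}{0}$ and $\decr{X}$ correspond to transition-label substitutions (erasing or shifting $X$-counts), so they either preserve or decrease mass. The deterministic increments $\incr{X}{n}$ and $\incr{X}{Y}$ merely relabel paths without touching their weights, hence preserve mass exactly. The sampling statement $\incr{X}{D}$ is realized by concatenation with the PGA $\Raut_D$ of the PGA-definable distribution $D$; since both factors have mass $\leq 1$, so does their product behavior. For $\observe{\guard}$, the construction intersects $\Raut$ with the guard DFA $\Baut_\guard$ from \Cref{tab:guards}, which only filters out paths and hence can only decrease mass. Similarly, the compound statements $\coinflip{\prog_1}{p}{\prog_2}$ and $\ite{\guard}{\prog_1}{\prog_2}$ are handled by the probabilistic choice and guarded union constructions applied to $\semantics{\prog_1}(\Raut)$ and $\semantics{\prog_2}(\Raut)$ (or to suitably partitioned input automata): in the first case the output mass is a convex combination $(1-p) \cdot \mu_1 + p\cdot \mu_2 \leq 1$, and in the second case the two branches operate on disjoint mass and thus sum to at most the mass of $\Raut$. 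Sequential composition $\prog_1 \fatsemi \prog_2$ is handled trivially by the induction hypothesis via $\semantics{\prog_2}(\semantics{\prog_1}(\Raut))$.

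The main obstacle I anticipate is the $\iid$-case $\incr{X}{\iid{D}{Y}}$, where the \emph{number} of i.i.d.\ samples is itself the random variable bound to $Y$. Unlike ordinary concatenation with $\Raut_D$, the construction here must, conceptually, substitute a fresh copy of $\Raut_D$ for each $Y$-transition already present in $\Raut$, so the mass accounting mixes the behavior of $\Raut$ with infinitely many products of $\semanticsaut{\Raut_D}$. Showing that the resulting automaton (a) has transition labels in $\lin$ rather than unwanted polynomial labels, and (b) still has mass $\leq 1$, will require an explicit argument that the mass of the constructed automaton equals the expectation $\mathbb{E}_{\Raut}\!\bigl[\mu_D^{\,Y}\bigr]$, which is bounded by the mass of $\Raut$ since $\mu_D \leq 1$. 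A secondary concern is ensuring that constructions such as concatenation and product do not introduce non-unit monoid content into initial or final weights; this may require an explicit normalization lemma or a small cleanup step absorbing any such content onto fresh $\varepsilon$-transitions.
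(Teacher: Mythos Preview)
Your proposal is correct and follows essentially the same approach as the paper: structural induction on $\prog$, verifying for each construction in \Cref{tab:semantics} that it preserves the normalized-PGA property (the paper factors the per-construction checks into a separate auxiliary lemma on the automata operations, but the content is identical). Your anticipated obstacles dissolve on inspection: the $\texttt{iid}$ case is just another instance of transition substitution---sub-automata are \emph{inserted} at each $Y$-transition rather than multiplied onto labels, so no polynomial labels arise, and the mass bound is exactly the $\mathbb{E}_\Raut[\mu_D^{\,Y}]\le 1$ computation you sketch; likewise, concatenation and product only combine $\Rgez$- and $\Bool$-valued initial/final weights, so normalization is automatic.
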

\begin{proof}[sketch]
    Follows by induction on the structure of $\prog$ since all constructions in \Cref{tab:semantics} preserve normalized PGA, which we detail in the following.
\end{proof}

\begin{table}[t]
    \centering
    \caption{
        Inductive definition of the translation from a prior distribution automaton $\Raut$ to an automaton $\semantics{\prog}(\Raut)$ for the (unnormalized) posterior distribution.
    }
    \label{tab:semantics}
    \setlength{\tabcolsep}{4pt}
    \rowcolors{2}{gray!7}{white} % alternating row colors
    \resizebox{\textwidth}{!}{
    \begin{tabular}{l l l}
        \toprule
        $\prog$& $\trans{\prog}{\Raut}$  & \textit{Construction} \\ \midrule
        $\assign{X}{0}$ & $\labelSubs{\Raut}{X}{1}$ & Label substitution \\
        $\incr{X}{n}$ & $\Raut \concat \Raut_{\dirac{n}{X}}$ & Concatenation \\
        $\incr{X}{D}$ & $\Raut \concat \Raut_{D_X}$ & Concatenation \\
        $\incr{X}{Y}$ & $\labelSubs{\Raut}{Y}{\inlineTransTwo{Y}{X}}$ & Transition substitution \\
        $\incr{X}{\iid{D}{Y}}$ & $\labelSubs{\Raut}{Y}{\inlineTrans{Y} \concat \Raut_{D_X}}$ & Transition substitution \\
        $\coinflip{\prog_1}{p}{\prog_2}$ & $\probchoice{\trans{\prog_1}{\Raut}}{\trans{\prog_2}{\Raut}}{p}{1{-}p}$ & Disjoint union \\ 
        $\ite{\guard}{\prog_1}{\prog_2}$ & $\trans{\prog_1}{\Raut \times\Baut_\guard} \oplus \trans{\prog_2}{\Raut \times\Baut_{\neg\guard}}$ & Products \& disjoint union \\
        $\decr{X}$ & $\Raut^{\decr{X}}$ & Special construction (Def.~\ref{def:decr})\\
        $\observe{\guard}$ & $\Raut \times \Baut_{\guard}$ & Product \\
        $\prog_1\fatsemi\prog_2$ & $\trans{\prog_2}{\trans{\prog_1}{\Raut}}$ & (None required) \\
        \bottomrule
    \end{tabular}
    }
\end{table}

\subsection{Setting a Variable to Zero: Transition Label Substitution}
\label{sec:setToZero}
The idea for applying $\assign{X}{0}$ to the distribution $\semanticsaut{\Raut}$ is as follows.
Each time $\Raut$ takes an $X$-transition, i.e.\ ``increases the $X$-counter'', the updated automaton $\semantics{\assign{X}{0}}(\Raut)$ must not increase that counter.
More formally, we define the automaton $\labelSubs{\Raut}{X}{1}$ which is almost identical to $\Raut$, except that all transitions of the form $\inlineTrans{rX}$ in $\Raut$ (recall $r \in \Rgez$) become ''$\varepsilon$-transitions'' $\inlineTrans{r}$ in $\labelSubs{\Raut}{X}{1}$.
For a normalized $\Raut$, the effect of this construction on the PGF $\semanticsaut{\Raut}$ is a substitution of $X$ by $1$; that is, $X$ is marginalized out.
See \Cref{fig:transLabSubs} for an example.

\begin{figure}[t]
    \centering
    \begin{minipage}{0.39\textwidth}
        \begin{tikzpicture}[node distance=7mm]
            \node at (-0.75,.5) {\footnotesize$\Raut\colon$};
            \node[state,initial] (s) {};
            \node[state,final,right=of s] (t) {};
            \draw[->] (s) -- node[auto] {$\tfrac 1 2 Y$} (t);
            \draw[->] (s) edge[loop above,looseness=5] node[auto,\myred] {$\tfrac 1 2 X$} (s);
        \end{tikzpicture}
        \caption*{$\semanticsaut{\Raut} = \tfrac 1 2 Y + \tfrac 1 4 X Y + \tfrac 1 8 X^2Y + \ldots$}
    \end{minipage}
    \begin{minipage}{0.6\textwidth}
        \begin{tikzpicture}[node distance=7mm]
            \node[align=right] at (-2.25,.5) {\footnotesize$\semantics{\assign{X}{0}}(\Raut) = \labelSubs{\Raut}{X}{1}\colon$};
            \node[state,initial] (s) {};
            \node[state,final,right=of s] (t) {};
            \draw[->] (s) -- node[auto] {$\tfrac 1 2 Y$} (t);
            \draw[->] (s) edge[loop above,looseness=5] node[auto,\myred] {$\tfrac 1 2$} (s);
        \end{tikzpicture}
        \caption*{$\semanticsaut{\labelSubs{\Raut}{X}{1}} = Y$}
    \end{minipage}
    \caption{Transition label substitution $\Raut[X/1]$ for implementing the instruction $\assign{X}{0}$.}
    \label{fig:transLabSubs}
\end{figure}

\subsection{Incrementing a Variable I: Automata Concatenation}
We now consider the statement $\incr{X}{n}$ and its generalization $\incr{X}{D}$ for incrementing $X$ by a constant and a random sample, 
respectively.

First, the automaton $\semantics{\incr{X}{n}}(\Raut)$ must increment the $X$-counter an additional $n$ times compared to $\Raut$.
To achieve this, we append a gadget---in fact, the PGA $\Raut_{\dirac{n}{X}}$ from \Cref{fig:basicDistrAutomata}---to the final states of $\Raut$.
Formally, this construction is the standard \emph{concatenation} $\Raut \concat \Raut_{\dirac{X}{n}}$, defined visually in \Cref{fig:concatUnion}.
For general $\semiring$-automata $\Raut_1,\Raut_2$ ($\semiring$ an $\omega$-continuous semiring), their concatenation satisfies $\semanticsaut{\Raut_1 \concat \Raut_2} = \semanticsaut{\Raut_1} \cdot \semanticsaut{\Raut_2}$; see, e.g.,~\cite[Thm.~4.6]{DBLP:reference/hfl/Kuich97}.

Second, and more generally, to construct $\semantics{\incr{X}{D}}(\Raut)$ we take the concatenation of $\Raut$ and $\Raut_{D_X}$, a PGA for distribution $D$ w.r.t.\ $X$. Thus, after reaching a final state of $\Raut$, the automaton $\Raut \concat \Raut_{D_X}$ additionally traverses $\Raut_{D_X}$, which has the desired effect of incrementing the sample from $\semanticsaut{\Raut}$ by a sample from $D$.  
We require the PGA for distribution $D$ to also be normalized and have a probability mass of 1.

\begin{figure}[t]
\centering
    \begin{minipage}{0.49\textwidth}
        \centering
        \resizebox{\textwidth}{!}{
			\begin{tikzpicture}

\node[draw, minimum height=4cm, minimum width=3cm] at (3,0) (A1) {};
\node at (A1.center) {\huge$\mathcal{A}_1$};

\coordinate (lu) at (0.5, 1.5) ;
\coordinate (ll) at (0.5,-1.5) ;

\coordinate (liu) at (2,1.5) ;
\coordinate (lil) at (2,-1.5) ;

\draw[edge] (lu) -- (liu) node[pos=0.2, above]{\Large$I_1$};
\draw[edge] (ll) -- (lil) node[pos=0.2, below]{\Large$I_n$};

\coordinate (ru) at (5.5, 1.5) ;
\coordinate (rl) at (5.5,-1.5) ;

\coordinate (lur) at (8, 1.5) ;
\coordinate (llr) at (8,-1.5) ;
\coordinate (riu) at (4,1.5) ;
\coordinate (ril) at (4,-1.5) ;

\draw[edge] (riu) -- (lur) node[midway, above]{\Large $F_1I_{1'}$};
\draw[edge] (ril) -- (llr) node[midway,below]{\Large$F_nI_{n'}$};

\draw[edge] (riu) -- (llr);
\draw[edge] (ril) -- (lur);

%%%

\node[draw, minimum height=4cm, minimum width=3cm] at (9,0) (A1) {};
\node at (A1.center) {\huge$\mathcal{A}_2$};

\coordinate (ru) at (11.5, 1.5) ;
\coordinate (rl) at (11.5,-1.5) ;

\coordinate (riu) at (10,1.5) ;
\coordinate (ril) at (10,-1.5) ;

\draw[edge] (riu) -- (ru) node[pos=0.8, above]{\Large$F_{1'}$};
\draw[edge] (ril) -- (rl) node[pos=0.8, below]{\Large$F_{n'}$};

\end{tikzpicture}
			}
    \end{minipage}
    \begin{minipage}{0.49\textwidth}
        \centering
        \resizebox{\textwidth}{!}{
			\begin{tikzpicture}
\node[draw, minimum height=4cm, minimum width=3cm] at (3,0) (A1) {};
\node at (A1.center) {\huge$\mathcal{A}_1$};

\coordinate (lu) at (0.5, 1.5) ;
\coordinate (ll) at (0.5,-1.5) ;

\coordinate (liu) at (2,1.5) ;
\coordinate (lil) at (2,-1.5) ;

\draw[edge] (lu) -- (liu) node[pos=0.2, above]{\Large$pI_1$};
\draw[edge] (ll) -- (lil) node[pos=0.2, below]{\Large$pI_n$};

\coordinate (ru) at (5.5, 1.5) ;
\coordinate (rl) at (5.5,-1.5) ;

\coordinate (riu) at (4,1.5) ;
\coordinate (ril) at (4,-1.5) ;

\draw[edge] (riu) -- (ru) node[pos=0.8, above]{\Large$F_1$};
\draw[edge] (ril) -- (rl) node[pos=0.8, below]{\Large$F_n$};

%%%

\node[draw, minimum height=4cm, minimum width=3cm] at (9,0) (A1) {};
\node at (A1.center) {\huge$\mathcal{A}_2$};

\coordinate (lu) at (6.5, 1.5) ;
\coordinate (ll) at (6.5,-1.5) ;

\coordinate (liu) at (8,1.5) ;
\coordinate (lil) at (8,-1.5) ;

\draw[edge] (lu) -- (liu) node[pos=0.2, above]{\Large$qI_{1'}$};
\draw[edge] (ll) -- (lil) node[pos=0.2, below]{\Large$qI_{n'}$};

\coordinate (ru) at (11.5, 1.5) ;
\coordinate (rl) at (11.5,-1.5) ;

\coordinate (riu) at (10,1.5) ;
\coordinate (ril) at (10,-1.5) ;

\draw[edge] (riu) -- (ru) node[pos=0.8, above]{\Large$F_{1'}$};
\draw[edge] (ril) -- (rl) node[pos=0.8, below]{\Large$F_{n'}$};

\end{tikzpicture}
	}
    \end{minipage}
    \caption{
        Concatenation $\Aut_1 \concat \Aut_2$ (left) and weighted disjoint union $\probchoice{\Raut_1}{\Raut_2}{p}{q}$ (right) of weighted automata $\Aut_1$ and $\Aut_2$ with disjoint sets of states $\{1,\ldots,n\}$ and $\{1',\ldots,n'\}$, respectively.
        We write $\probchoice{\Raut_1}{\Raut_2}{}{}$ instead of $\probchoice{\Raut_1}{\Raut_2}{1}{1}$.
    }
    \label{fig:concatUnion}
\end{figure}

\subsection{Incrementing a Variable II: Transition Substitution}
\label{sec:iid}%
Next, we consider $\incr{X}{Y}$.
The idea for implementing this statement is a refined variant of transition label substitution.
Intuitively, the automaton $\semantics{\incr{X}{Y}}(\Raut)$ should be similar to $\Raut$, but every time $\Raut$ increments $Y$, the new automaton increments both $Y$ \emph{and} $X$.
We thus replace every transition of the form $\inlineTrans{rY}$ in $\Raut$ by the sub-automaton\footnote{We omit initial and final weights here and assume implicitly, that the left-most state is initial and the right-most state is final (with a weight of 1).} $\inlineTransTwo{rY}{X}$.
Notice that this is also meaningful if $X=Y$.
We call this operation \emph{transition substitution}:

\begin{definition}[Transition Substitution]%
    For $\lin$-automata $\Raut,\Raut'$, we define the $\lin$-automaton $\labelSubs{\Raut}{X}{\Raut'}$ as in \Cref{fig:transSubs}.
\end{definition}
Observe that label substitution $\labelSubs{\Raut}{X}{1}$ from \Cref{sec:setToZero} is just a shorthand for the transition substitution $\labelSubs{\Raut}{X}{{\bigcirc}}$.

We now consider the special statement $\incr{X}{\iid{D}{Y}}$.
Its intended semantics is that of the program ``$\texttt{for}~i=1..Y~\texttt\{~\incr{X}{D}~\texttt\}$'', i.e.\ $X$ is incremented by a sum of $Y$ many i.i.d.\ samples from the distribution $D$.
Notice that $\incr{X}{Y}$ is equivalent to $\incr{X}{\iid{\dirac{1}{}}{Y}}$.
The motivation for including the rather specialized $\texttt{iid}$ statement as a primitive in \redip is that it is naturally supported on PGFs in closed form~\cite{redip} and, as we show here, on PGA.
Indeed, we can construct an automaton $\semantics{\incr{X}{\iid{D}{Y}}}(\Raut)$ which is again similar to $\Raut$, except that every time $\Raut$ traverses a $Y$-transition, the new automaton additionally traverses $\Raut_{D_X}$ as a sub-automaton.
This amounts to the following transition substitution: $\semantics{\incr{X}{\iid{D}{Y}}}(\Raut) = \labelSubs{\Raut}{Y}{\inlineTrans{Y} \concat \Raut_{D_X}}$.

The \texttt{iid} statement is also of practical interest:
For example, it holds that $\binomial{N}{p} = \texttt{iid}(\bern{p}{}, N)$, $\negbinom{N}{p} = \texttt{iid}(\geom{p}{}, N)$, etc.
Therefore, \redip allows sampling from these distributions even if their parameter $N$ is a program variable, which may itself be non-trivially distributed.

\begin{figure}[t]
    \centering
    \begin{tikzpicture}[node distance=1.5cm]
    \node[state] (s) {$s$};
    \node[phantom, left=0.25cm of s,label={180:$\Raut$:}] (ls) {};
    \node[state, right of=s] (t) {$t$};

    \draw[edge] (s) --  node[auto]{$rX$} (t);

    \node[state] at (5,0) (ns) {$s$};
    \node[phantom, left=0.25cm of ns,label={180:$\Raut[X/\Raut']$:}] (lns) {};
     \node[draw, minimum height=2cm, minimum width=1.5cm] at (7.5,0) (A') {};
  \node at (A'.center) {$\mathcal{A}'$};
  \node[state] at (10,0) (nt) {$t$};
  
\node[phantom] at(6.5,0.1) {$\vdots$};
\node[phantom] at(8.5,0.1) {$\vdots$};
  
  \coordinate (ulA) at (7,0.5);
	\coordinate (blA) at (7,-0.5);
  \coordinate (urA) at (8,0.5);
  \coordinate (brA) at (8,-0.5);
  
  \draw[edge] (ns) -- node[above]{$rI_1$} (ulA);
  \draw[edge] (ns) -- node[below]{$rI_n$}(blA);
  \draw[edge] (urA) -- node[above]{$F_1$}(nt);
\draw[edge] (brA) -- node[below]{$F_m$}(nt);
\end{tikzpicture}\vspace{-2em}
    \caption{Transition substitution. The automaton $\labelSubs{\Raut}{X}{\Raut'}$ arises from $\Raut$ by replacing \emph{all} transitions $\inlineTrans{rX}$ in $\Raut$ (left) by the gadget depicted on the right.}
    \label{fig:transSubs}
\end{figure}

\subsection{Random Branching: Disjoint Union}
To implement the random branching instruction $\coinflip{\prog_1}{p}{\prog_2}$, we construct an automaton that, intuitively speaking, behaves like $\semantics{\prog_1}(\Raut)$ with probability $p$ and like $\semantics{\prog_2}(\Raut)$ with probability $1-p$.
This is achieved by multiplying each initial weight of $\semantics{\prog_1}(\Raut)$ and $\semantics{\prog_2}(\Raut)$ by $p$ and $1-p$, respectively, and taking the \emph{disjoint union} of the resulting two automata (in symbols: $\probchoice{\semantics{\prog_1}(\Raut)}{\semantics{\prog_2}(\Raut)}{p}{1-p}$; see \Cref{fig:concatUnion}).
The automata $\semantics{\prog_1}(\Raut)$ and $\semantics{\prog_2}(\Raut)$ themselves are obtained recursively.

\subsection{Conditional Branching: Product with Guard Automata}
\label{sec:ite}
We now consider the statement $\ite{\guard}{\prog_1}{\prog_2}$.
The basic idea is to ``partition'' the distribution $\semanticsaut{\Raut}$ relative to $\guard$ and $\neg\guard$, recursively transform the resulting (sub-)distributions---the first by $\semantics{\prog_1}$ and the second by $\semantics{\prog_2}$---and finally sum up the results~\cite{DBLP:journals/jcss/Kozen81}. To realize the partitioning on automaton-level, we employ the DFA $\Baut_\guard$ and $\Baut_{\neg\guard}$ from \Cref{sec:programs}, and take the products $\Raut \times \Baut_\guard$ and $\Raut \times \Baut_{\neg\guard}$.
To this end, we rely on the following product construction, a weighted variant of the standard product of an $\varepsilon$-NFA with a (non-$\varepsilon$) NFA:

\begin{definition}[Product]%
    Let $\Raut = (\states, M, I, F)$ be a $\lin$-automaton and let $\Baut = (\states', M', I', F')$ be an NFA with alphabet $\V$.
    We define the \emph{product} $\Raut \times \Baut$ as the $\lin$-automaton with states $\states \times \states'$ and transitions according to the following rules:
    \begin{align*}
        \sosrule{$X$-trans.}{\Raut \colon q \xrightarrow{a X} r \qquad \Baut \colon s \xrightarrow[]{X} t}{\Raut \times \Baut \colon (q,s) \xrightarrow[]{aX} (r,t)}
        \qquad
        \sosrule{$\varepsilon$-trans.}{\Raut \colon q \xrightarrow[]{a} r \qquad s \in \states'}{\Raut \times \Baut \colon (q,s) \xrightarrow[]{a} (r,s)}
    \end{align*}
    The initial and final weight of $(q,s) \in \states \times \states'$ is $I_q \cdot I_s$ and $F_q \cdot F_s$, respectively.
\end{definition}
If $\Raut$ is normalized, then so is $\Raut \times \Baut$.
The next lemma asserts that the products $\Raut \times \Baut_\guard$ and $\Raut \times \Baut_{\neg\guard}$ have the desired effect (recall that $\Baut_\guard$ and $\Baut_{\neg\guard}$ are closed under permutations by definition):

\begin{restatable}{lemma}{filter}%
    \label{thm:filter}%
    Let $\Raut$ be a normalized $\lin$-automaton over $\fps{\Rgezinf}{\N^\V}$ and let $\Baut$ be a DFA\footnote{\Cref{thm:filter} holds more generally if $\Baut$ is only unambiguous instead of deterministic.} over alphabet $\V$.
    Assume that the language accepted by $\Baut$ is closed under permutations.
    Then, for all $\val \in \N^\V$, it holds that
    \[
        \semanticsaut{\Raut \times \Baut}(\sigma)
        ~=~
        \begin{cases}
            \semanticsaut{\Raut}(\val) & \text{ if $\Baut$ accepts some $w \in V^*$ with $\Psi(w) = \val$,} \\
            0 & \text{ else.}
        \end{cases}
    \]
\end{restatable}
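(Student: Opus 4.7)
The plan is to expand both sides as sums over paths and exhibit a weight-preserving correspondence. Recall that for any weighted automaton $\Aut = (Q, M, I, F)$ the behavior equals $I M^* F = \sum_\pi w(\pi)$, where $\pi$ ranges over all finite state-sequences and $w(\pi)$ is the FPS-product of $\pi$'s initial weight, transition weights, and final weight. For the normalized $\lin$-automaton $\Raut$, each such path-weight is of the form $w_{\text{sc}}(\pi) \cdot X^{\Psi(\pi)}$, where $w_{\text{sc}}(\pi) \in \Rgezinf$ collects the scalar factors and $\Psi(\pi) \in \N^\V$ denotes the Parikh image of the sequence of letters on the $X$-transitions of $\pi$. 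Hence $\semanticsaut{\Raut}(\val) = \sum_{\pi \colon \Psi(\pi) = \val} w_{\text{sc}}(\pi)$, and an analogous decomposition governs $\Raut \times \Baut$.

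Next I would establish a bijection between product-paths $(q_0, s_0) \to \cdots \to (q_n, s_n)$ in $\Raut \times \Baut$ and pairs consisting of a path $\pi_\Raut = q_0 \to \cdots \to q_n$ in $\Raut$ together with a sequence $s_0, \ldots, s_n$ of $\Baut$-states that, by the two SOS rules, is forced to stutter ($s_i = s_{i+1}$) on every scalar step of $\pi_\Raut$ and to take $s_i \xrightarrow{X} s_{i+1}$ on every $X$-step. Since $\Baut$ is an NFA its initial, final, and transition weights lie in $\Bool$, and since $\Raut$ is normalized its initial and final weights carry no variables; together these facts yield the factorization
\[
    w(\text{product path}) ~=~ w(\pi_\Raut) \cdot I^{\Baut}_{s_0} \cdot F^{\Baut}_{s_n}
    \quad\text{and}\quad
    \Psi(\text{product path}) = \Psi(\pi_\Raut).
\]
Summing the Boolean factor $I^{\Baut}_{s_0} \cdot F^{\Baut}_{s_n}$ over all compatible $(s_0, \ldots, s_n)$ yields the number of accepting $\Baut$-paths over the word $w(\pi_\Raut) \in \V^*$ spelled by the $X$-transitions of $\pi_\Raut$; determinism of $\Baut$ pins this number down to $0$ or $1$, equaling $1$ precisely when $\Baut$ accepts $w(\pi_\Raut)$.

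The final step invokes permutation-closure of $L(\Baut)$: whether $\Baut$ accepts $w(\pi_\Raut)$ depends only on $\Psi(w(\pi_\Raut)) = \Psi(\pi_\Raut) = \val$, so the indicator pulls outside the sum over $\pi_\Raut$ and gives
\[
    \semanticsaut{\Raut \times \Baut}(\val) ~=~ \bigl[\Baut \text{ accepts some } w \text{ with } \Psi(w) = \val\bigr] \cdot \semanticsaut{\Raut}(\val),
\]
which is the claimed case split. The main subtlety I anticipate is the bookkeeping of $\varepsilon$-steps: each scalar transition of $\Raut$ must contribute exactly one product-transition (not one per $\Baut$-state), which is ensured by the second SOS rule fixing $s_i = s_{i+1}$, and determinism of $\Baut$ is then what prevents over-counting when the indicator is pulled out of the sum. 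Normalization of $\Raut$ is likewise essential: without it, initial or final weights could themselves carry monomials in the variables, breaking the clean separation of scalar and Parikh-image factors on which the whole argument rests.
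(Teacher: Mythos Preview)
Your proposal is correct and follows essentially the same approach as the paper: both arguments decompose the behavior into a sum over paths, set up a correspondence between product-paths and pairs of an $\Raut$-path together with a compatible $\Baut$-run on the spelled word, use determinism of $\Baut$ to ensure this run is unique (hence the Boolean factor is $0$ or $1$), and invoke permutation-closure so that acceptance depends only on the Parikh image. The only stylistic difference is that the paper argues by a two-case split (accepts versus does not accept) whereas you phrase the same content algebraically via an indicator that factors out of the sum.
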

\noindent To complete the construction for $\ite{\guard}{\prog_1}{\prog_2}$, we obtain $\semantics{\prog_1}(\Raut \times \Baut_\guard)$ and $\semantics{\prog_2}(\Raut \times \Baut_{\neg\guard})$ recursively and recombine them via disjoint union (see \Cref{tab:semantics}).

\subsection{Decrementing a Variable: Specialized Construction}
The decrement statement $\decr{X}$ is somewhat involved since our program variables are non-negative.
Following~\cite{redip}, we adopt the convention that decrementing $0$ has no effect, i.e.\ $\decr{X}$ is equivalent to $\ite{X>0}{\decr{X}}{\skp}$.
Based on this, we propose to implement $\decr{X}$ similar to \texttt{if-else} from \Cref{sec:ite} as follows:
We first filter $\Raut$ w.r.t.\ $X > 0$; that is, we take the product $\Raut \times \Baut_{X>0}$.
Afterwards, we convert \emph{the first} $X$-transition on any path of this product to an $\varepsilon$-transition.
This is achieved by a transition label substitution, which is, however, only applied to the transitions of the form $(\ldots,s) \xrightarrow{rX} (\ldots,t)$, where $s \neq t$ are the two states of $\Baut_{X>0}$.
Finally, we take the disjoint union with $\Raut \times \Baut_{X=0}$, which has the same behavior as $\labelSubs{\Raut}{X}{0}$. See \Cref{fig:decr} for an example.
Formally:

\begin{definition}[Decrement Automaton]%
    \label{def:decr}%
    Let $\Raut$ be a normalized PGA and let $s,t$ be the two states of $\Baut_{X>0}$.
    Consider the PGA $\Raut' = (\Raut \times \Baut_{X>0}) \oplus \labelSubs{\Raut}{X}{0}$.
    The \emph{decrement PGA} $\Raut^{\decr{X}}$ is obtained from $\Raut'$ via replacing every transition $(q,s) \xrightarrow{rX} (u,t)$ by $(q,s) \xrightarrow{r} (u,t)$, where $q,u$ are arbitrary states of $\Raut$, and $r \in \Rgez$.
\end{definition}

\begin{figure}[t]
    \centering
    \begin{tikzpicture}[node distance=2mm and 7mm]
        \node at (-1.25,0.5) {$\Raut_{\geom{\tfrac 1 2}{X}}\colon$};
        \node[state,labeledfinal=$\tfrac 1 2$,initial] (q0) {$q$};
        \draw[loop above,looseness=5] (q0) to node[above] {$\tfrac 1 2 X$} (q0);
    \end{tikzpicture}
    \qquad
    \begin{tikzpicture}[node distance=2mm and 12mm]
        \node at (-1.25,0.5) {$\Baut_{X>0}\colon$};
        \node[state,initial] (s) {\red{$s$}};
        \node[state,final,right=of s] (t) {\red{$t$}};
        \draw[loop above,looseness=5] (s) to node[above] {$\Y$} (s);
        \draw[loop above,looseness=5] (t) to node[above] {$X,\Y$} (t);
        \draw[->] (s) to node[above] {$X$} (t);
    \end{tikzpicture}
    \\
    \begin{tikzpicture}[node distance=2mm and 12mm]
        \node at (-1.25,0.5) {$\Raut^{\decr{X}}_{\geom{\tfrac 1 2}{X}}\colon$};
        \node[state,initial] (s) {$q,\red{s}$};
        \node[state,labeledfinal=$\tfrac 1 2$,initial,right=of t] (other) {};
        \node[state,labeledfinal=$\tfrac 1 2$,right=of s] (t) {$q,\red{t}$};
        \draw[loop above,looseness=5] (t) to node[above] {$\tfrac 1 2 X$} (t);
        \draw[->] (s) to node[above] {$\tfrac 1 2$} (t);
    \end{tikzpicture}
    \caption{Example for the decrement construction from  \Cref{def:decr} (with $\Y = \V\setminus\set{X}$).}
    \label{fig:decr}
\end{figure}

\subsection{Conditioning (\texttt{observe}): Product}
\label{sec:observe}
The purpose of the $\observe{\guard}$ statement is to discard the portion of the input distribution $\semanticsaut{\Raut}$ violating $\guard$.
This may yield a proper sub-distribution, the \emph{unnormalized posterior}.
Implementing $\observe{\guard}$ on PGA-level works similarly to conditional branching (\Cref{sec:ite}):
We define $\semantics{\observe{\guard}}(\Raut) = \Raut \times \Baut_\guard$, i.e.\ we filter the distribution $\semanticsaut{\Raut}$ according to $\guard$.

A PGA $\normalize{\Raut}$ for the \emph{normalized} distribution of any arbitrary PGA $\Raut$---this applies in particular to the unnormalized posterior PGA $\semantics{\prog}{(\Raut)}$ resulting from a program $\prog$---can be effectively constructed as follows:
Let $M_{\Raut}$ be the probability mass of the (sub-)distribution $\semanticsaut\Raut$, which can be determined using \Cref{thm:computeMass}, and let
\[
    \normalize{\Raut}
    ~=~
    \begin{cases}
        \text{$\Raut$ with initial weights multiplied by $\tfrac{1}{M_\Raut}$} & \text{if $M_\Raut > 0$} ~, \\
        \text{undefined} & \text{if $M_\Raut = 0$} ~.
    \end{cases}
\]
If defined, it follows that $\semanticsaut{\normalize{\Raut}} = \tfrac{1}{M_\Raut} \semanticsaut{\Raut}$, as desired.
Notice that we could have defined $\normalize{\cdot}$ equivalently by scaling up the final instead of the initial weights.

\begin{example}%
    \label{ex:semanticsMotivating}%
    \Cref{fig:motivatingExample} shows a program and the resulting normalized posterior PGA\footnote{Unreachable states have been removed and some transitions are slightly simplified.} with respect to a prior Dirac distribution where all variables are $0$ with probability $1$.
    The normalizing constant $\tfrac{11}{40}$ can be computed ``by hand'' (i.e.\ without solving an explicit LP) as follows:
    We consider the upper and the lower branch of the automaton in \Cref{fig:motivatingExample} separately.
    The upper branch has a mass of
    \[
        \tfrac{9}{10} \cdot \tfrac{1}{2} \cdot \tfrac{1}{2} \cdot (\tfrac{1}{2})^* \cdot \tfrac{1}{2}
        ~=~
        \tfrac{9}{40}
    \]
    whereas the lower branch has mass
    \[
        \tfrac{1}{10} \cdot \big(\tfrac{1}{2}\cdot\tfrac{1}{2}\cdot(\tfrac{1}{2})^*\cdot\tfrac{1}{2}\cdot(\tfrac{1}{2})^* \cdot\tfrac{1}{2} + 2 \cdot(\tfrac{1}{2}\cdot\tfrac{1}{2} \cdot\tfrac{1}{2}\cdot(\tfrac{1}{2})^*\cdot\tfrac{1}{2}) \big)
        ~=~
        \tfrac{1}{20}
        ~.
    \]
    Here, we have used that $(\tfrac 1 2)^* = \sum_{n = 0}^{\infty} (\tfrac 1 2)^n = 2$. 
    The mass of the unnormalized posterior PGA, i.e.\ the normalizing constant, is therefore $\tfrac{9}{40} + \tfrac{1}{20} = \tfrac{11}{40}$.
\end{example}

\subsection{Complexity}

We conclude this section by examining the overall complexity of the construction resulting from \Cref{tab:semantics}.
We define the \emph{size} $\sizeaut{\Raut}$ of a PGA $\Raut$ as the number of transitions with non-zero weight (or $\sizeaut{\Raut} = 1$, if $\Raut$ has no transitions),
and the size of a guard $\guard$ (see \Cref{tab:guards}) recursively as $\sizebool{\guard_1 \land \guard_2} = \sizebool{\guard_1} + \sizebool{\guard_2}$, $\sizebool{\neg\guard} = \sizebool{\guard}$, and $\sizebool{\guard} = 1$ for the other two atomic cases.
Similarly, the size $\sizeprog{\prog}$ of a \redip program $\prog$ is defined recursively as $\sizeprog{\prog_1\fatsemi\prog_2} = \sizeprog{\prog_1} + \sizeprog{\prog_2}, \sizeprog{\ite{\guard}{\prog_1}{\prog_2}} = \sizeprog{\coinflip{\prog_1}{p}{\prog_2}} = 1 + \sizeprog{\prog_1} + \sizeprog{\prog_2}$, and $\sizeprog{\prog} = 1$ for all base cases. 
Intuitively, the size of a program is roughly proportional to the length of the program text.

\begin{restatable}[Complexity]{theorem}{complexitysemantics}%
    \label{thm:complexity-semantics}%
    For every \redip\ program $\prog$ and PGA $\Raut$ we have
    \[
    \sizeaut{\semantics{\prog}(\Raut)}
    ~\in~
    \bigO{\sizeaut{\Raut} \cdot \mathcal{D}^{\sizeprog{\prog}} \cdot (n+1)^{\sizeprog{\prog}\eta}}
    \]
    where $\mathcal{D} = \max\set{\sizeaut{\Raut_D} \colon \Raut_D \text{ distribution automaton in }\prog}$ is the size of the largest distribution automaton, $n  =\max\set{m \colon m \text{ constant\footnote{Recall from \Cref{def:syntax-redip} that integer constants can occur on the right hand side of increment instructions and in guards.} in }\prog}$ is the largest constant and $\eta = \max\set{\sizebool{\guard}\colon \guard \text{ guard in }\prog}$ is the largest guard in $\prog$ (with the convention that $\max\emptyset = 1$).
\end{restatable}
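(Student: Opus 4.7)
The plan is to proceed by structural induction on the program $\prog$, bounding $\sizeaut{\semantics{\prog}(\Raut)}$ in each case in terms of $\sizeaut{\Raut}$ and size parameters of the constructions from \Cref{tab:semantics}. The key auxiliary observation is a size bound on the guard DFA: a straightforward induction on $\guard$ using \Cref{tab:guards} shows that $\sizeaut{\Baut_\guard} \in \bigO{n^{\sizebool{\guard}}}$, since the atomic automata in \Cref{fig:guardAutomata} have $\bigO{n}$ states and conjunction/negation are handled by the standard DFA product/complement, which are multiplicative/size-preserving respectively. Taking $\eta = \max\set{\sizebool{\guard},1}$ uniformly over guards in $\prog$, every guard DFA invoked during the translation has size at most $\bigO{n^\eta}$.

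Next, I would tabulate the size impact of each construction in \Cref{tab:semantics}. Label substitution (for $\assign{X}{0}$) and transition substitution (for $\incr{X}{Y}$) do not increase the number of transitions beyond a constant factor. Concatenation $\Raut \concat \Raut'$ is additive, so $\incr{X}{n}$ adds $\bigO{n}$ transitions and $\incr{X}{D}$ adds $\bigO{\sizeaut{\Raut_D}}$. Transition substitution $\labelSubs{\Raut}{Y}{\inlineTrans{Y}\concat\Raut_{D_X}}$ for the \texttt{iid} statement multiplies the number of $Y$-transitions by $\bigO{\sizeaut{\Raut_D}}$, yielding size $\bigO{\sizeaut{\Raut}\cdot\sizeaut{\Raut_D}}$. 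Disjoint union (random branching) is additive. The product $\Raut \times \Baut_\guard$ used by \ite{\guard}{\cdot}{\cdot}, $\observe{\guard}$, and $\decr{X}$ (which uses $\Baut_{X>0}$ of constant size) is multiplicative: $\sizeaut{\Raut \times \Baut_\guard} \in \bigO{\sizeaut{\Raut}\cdot\sizeaut{\Baut_\guard}} \subseteq \bigO{\sizeaut{\Raut}\cdot n^\eta}$.

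With these local bounds in hand, the inductive step proceeds case by case. Sequential composition $\prog_1 \fatsemi \prog_2$ is immediate: by the induction hypothesis,
\[
    \sizeaut{\semantics{\prog_2}(\semantics{\prog_1}(\Raut))}
    ~\in~ \bigO{\sizeaut{\semantics{\prog_1}(\Raut)} \cdot \sizeaut{\Raut_D}^{\sizeprog{\prog_2}} \cdot n^{\sizeprog{\prog_2}\eta}}
    ~\subseteq~ \bigO{\sizeaut{\Raut}\cdot\sizeaut{\Raut_D}^{\sizeprog{\prog}}\cdot n^{\sizeprog{\prog}\eta}},
\]
where the exponents add because $\sizeprog{\prog} = \sizeprog{\prog_1} + \sizeprog{\prog_2}$. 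For \ite{\guard}{\prog_1}{\prog_2} the product with $\Baut_\guard$ (resp.\ $\Baut_{\neg\guard}$) contributes a factor $n^\eta$ and bumps the program-size exponent by the $+1$ in $\sizeprog{\ite{\guard}{\prog_1}{\prog_2}} = 1 + \sizeprog{\prog_1} + \sizeprog{\prog_2}$, after which the induction hypothesis is applied to the branches and the two results are combined by disjoint union. Random branching is analogous but without the guard factor. The decrement and $\observe{\guard}$ cases are single-statement leaves for which the factor $n^\eta$ (or $\bigO 1$ for $\decr{X}$) from the product construction suffices.

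The main obstacle is simply matching exponents in the composite cases: one must verify that absorbing the local $\sizeaut{\Raut_D}$ and $n^\eta$ factors into the corresponding $+1$'s in $\sizeprog{\prog}$ precisely reproduces the claimed product form. The only slightly subtle case is \texttt{iid}, where the multiplicative (rather than additive) effect of transition substitution is what forces $\sizeaut{\Raut_D}^{\sizeprog{\prog}}$ to appear as a power rather than a sum; everything else is dominated by this bound. Collecting the constants hidden in the $\bigO{\cdot}$ over the finite induction yields the stated complexity.
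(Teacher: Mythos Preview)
Your proposal is correct and follows essentially the same approach as the paper: both first bound $\sizeaut{\Baut_\guard} \in \bigO{n^{\sizebool{\guard}}}$ by induction on $\guard$, then tabulate the size effect of each construction in \Cref{tab:semantics}, and finally induct over the program. The only minor difference is that the paper inducts on the integer $\sizeprog{\prog}$ (decomposing $\prog = \prog_n \fatsemi \prog_1$ with $\sizeprog{\prog_1}=1$, and dismissing \texttt{if}/\texttt{choice} as dominated by sequential composition), whereas you do a full structural induction handling all composite cases explicitly; both variants yield the same bound.
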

\noindent Mitigating the exponential worst-case complexity by minimizing the intermediate PGA is a promising direction for future work.
\section{Soundness Relative to Operational Semantics}
\label{sec:operational}
In this section, we prove that our PGA transformations from \Cref{tab:semantics} are sound with respect to an existing (small-step) operational semantics for imperative probabilistic programs as presented in~\cite{DBLP:journals/toplas/OlmedoGJKKM18}.
This semantics is defined in terms of a (discrete-time) \emph{Markov chain}, where the states are tuples $\opstate{\prog}{\val}$, containing the program statement to be executed next and the current variable valuation.
Additionally, there are some special states, which are detailed below.
The transitions of the Markov chain model individual execution steps.
An excerpt of the rules defining this Markov chain is given in \Cref{fig:opsem}.
We remark that the straightforward rule \ruletag{Sample} was not included in~\cite{DBLP:journals/toplas/OlmedoGJKKM18}; we have incorporated it to model the \redip statement $\incr{x}{D}$, which increments the variable $x$ by a random sample from a distribution $D$ over $\N$. In the subsequent discussion, we disregard programs containing the special \texttt{iid} instruction because its operational semantics is somewhat involved.
In fact, the \texttt{iid} statement has not been discussed in prior work on operational semantics~\cite{DBLP:journals/toplas/OlmedoGJKKM18}.
A complete construction is left for future work.

Formally, a \emph{Markov chain} can be defined as a $[0,1]$-weighted automaton $\mathcal{M} = (\markovstates,\markovtrans,\markovinit)$ without final weights where the initial weights $\markovinit$ sum to at most one, and where the matrix $\markovtrans$ is (row-)stochastic.

\begin{definition}[Markov Chain Semantics]\label{def:operationalMarkovChain}%
    %Let \redipset denote the set of \redip programs (without iid-statements).
    Let $\Raut$ be a PGA (for the initial distribution) and let $\prog$ be a \redip program without $\mathtt{iid}$ statements.
    We define the \emph{operational Markov chain} $\markovchain{\Raut}{\prog} = (\markovstates,\markovtrans,\markovinit)$ as follows:
    \begin{itemize}
        \item $\markovstates ~=~ \big(\text{\textnormal\redip} \cup \{\downarrow\}\big) \times\N^V ~\cup~ \{\langle\lightning\rangle\}$
        \item $\markovinit(s) ~=~ \begin{cases}
            \semanticsaut{\Raut}(\val) & \text{ if }s = \opstate{\prog}{\val}~, \\
            0 & \text{ else}
        \end{cases}$
        \item $\markovtrans(s,t) ~=~ \begin{cases}
            p & \text{ if } s \xrightarrow{p} t \text{ is derivable from the rules in \Cref{fig:opsem}}, \\
            0 & \text{ else}
        \end{cases}$
    \end{itemize}
\end{definition}
States of the form $\opstate{\downarrow}{\val}$ indicate that the program has \emph{terminated} in state $\val$; the special absorbing state $\langle\lightning\rangle$ is entered once an observation violation occurs.
%\Cref{def:operationalMarkovChain} is illustrated in \Cref{ex:mc-semantics}.
%Note that we exclude the i.i.d.\ statement from this operational equivalence, i.e.\ we only consider programs that do not contain any \texttt{iid}-statements.
%
We can now state our soundness theorem formally.
%Intuitively, it indicates that, for all variable valuations $\sigma'\in\N^V$,\tw{können wir $\val'$ hier und in Theorem 2 in $\val$ umbenennen?} the probability of reaching the terminal state $\opstate{\downarrow}{\val'}$ in $\markovchain{\Raut}{\prog}$ \emph{conditioned} that $\val'$ does not violate any \texttt{observe}-statements, coincides with the posterior probability assigned to $\val'$ by the PGF $\semanticsaut{\normalize{\semantics{\prog}(\Raut)}}$, where $\semantics{\prog}(\Raut)$ results from \Cref{tab:semantics}.

%with the coefficient of the semantics of the normalized automaton produced by following the semantics as defined in \Cref{tab:semantics}. 
\begin{restatable}[Soundness w.r.t.\ Operational Semantics]{theorem}{opequiv}\label{thm:opequiv}%
    Let $\Raut$ be a PGA and let $\prog$ be a \redip program (without $\mathtt{iid}$ statements) such that $\normalize{\semantics{\prog}(\Raut)}$ is defined.
    Then, for all $\val\in\N^V$, it holds that
    \[
        \semanticsaut{\normalize{\semantics{\prog}(\Raut)}}(\val)
        ~=~
        \markovprob{\prog}{\Diamond\opstate{\downarrow}{\val} \mid \neg \Diamond\langle\lightning\rangle}
        ~,
    \]
    where the right hand side is the \emph{conditional} probability of reaching state $\opstate{\downarrow}{\val}$ in $\markovchain{\Raut}{\prog}$ given that $\langle\lightning\rangle$ is not reached (see \cite{DBLP:journals/toplas/OlmedoGJKKM18} for the formal definition).
\end{restatable}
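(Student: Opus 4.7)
The overall plan is to prove, by structural induction on $\prog$, the \emph{unnormalized} identity
\[
    \semanticsaut{\semantics{\prog}(\Raut)}(\val)
    ~=~
    \markovprob{\prog}{\Diamond\opstate{\done}{\val}}
    \quad \text{for every } \val \in \N^\V ~,
\]
and then deduce the stated conditional claim by dividing both sides by the total mass of the unnormalized posterior. Summing the identity over $\val$ yields $\sum_\val \semanticsaut{\semantics{\prog}(\Raut)}(\val) = \markovprob{\prog}{\neg\Diamond\langle\lightning\rangle}$, so $\normalize{\cdot}$ on the left matches conditioning on $\neg\Diamond\langle\lightning\rangle$ on the right whenever this mass is positive---exactly the well-definedness assumption on $\normalize{\semantics{\prog}(\Raut)}$.

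A key auxiliary step, to be proved jointly by the same induction, is a \emph{linearity in the prior} lemma: for every $\prog$ there is a kernel $p_\prog \colon \N^\V \times \N^\V \to [0,1]$ such that for every PGA $\Raut$ and every $\val$,
\[
    \semanticsaut{\semantics{\prog}(\Raut)}(\val)
    ~=~
    \sum_{\val_0 \in \N^\V}
        \semanticsaut{\Raut}(\val_0)\cdot p_\prog(\val_0,\val)
    ~.
\]
The Markov-chain side admits the analogous decomposition directly from \Cref{def:operationalMarkovChain}, since $\markovinit$ is exactly $\semanticsaut{\Raut}$ and trajectories starting at $\opstate{\prog}{\val_0}$ are independent of $\Raut$. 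Matching the two kernels reduces the main identity to the case of a Dirac prior, where both sides admit an explicit path-level description and can be compared coefficient-wise.

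With this reduction in hand, I would verify each atomic construct by inspecting the corresponding row of \Cref{tab:semantics} against the operational rules. Assignments, increments, and random samples correspond to concatenation with an auxiliary PGA and match because $\semanticsaut{\Raut_1 \concat \Raut_2} = \semanticsaut{\Raut_1}\cdot\semanticsaut{\Raut_2}$ is a convolution, mirroring successive MC steps. The \texttt{observe} statement matches by \Cref{thm:filter}: the product $\Raut \times \Baut_\guard$ retains exactly the mass on valuations satisfying $\guard$, which is precisely what survives before the MC transitions into $\langle\lightning\rangle$. Random choice matches weighted disjoint union by inspection, and \texttt{if-else} combines filtering with disjoint union and two invocations of the inductive hypothesis. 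The \texttt{decr} case follows by unfolding \Cref{def:decr} and checking the $X>0$ branch (where the first $X$-transition is rewritten to an $\varepsilon$-transition, consuming one unit of $X$) and the $X=0$ branch separately.

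The main obstacle I anticipate is sequential composition $\prog_1 \fatsemi \prog_2$. Operationally, trajectories to $\opstate{\done}{\val}$ factor through intermediate configurations $\opstate{\prog_2}{\val'}$, giving a convolution indexed by $\val'$; on the automata side, $\semantics{\prog_2}(\semantics{\prog_1}(\Raut))$ must reproduce this convolution. This is subtle because the intermediate PGA $\semantics{\prog_1}(\Raut)$ is generally neither a disjoint union of Dirac PGA nor a proper probability distribution (due to earlier \texttt{observe}s). The linearity lemma is precisely what resolves this: it lets us regard $\semantics{\prog_1}(\Raut)$ as a mixture of Dirac-prior PGA weighted by its own coefficients, apply the inductive hypothesis for $\prog_2$ componentwise, and recombine. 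The hard part is therefore proving that every construction in \Cref{tab:semantics} respects pointwise mixtures of its input PGA---most delicately for the guard-DFA product and the decrement gadget, both of which branch internally on the current valuation. Once these distributivity properties are in place, the composition step collapses to elementary algebra.
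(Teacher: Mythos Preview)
Your plan is essentially the same as the paper's: both prove the unnormalized identity
$\semanticsaut{\semantics{\prog}(\Raut)}(\val) = \markovprob{\prog}{\Diamond\opstate{\done}{\val}}$
by structural induction on $\prog$ and then divide by the total posterior mass. The paper packages this induction as a separate lemma (\Cref{thm:non-norm-equiv}) and, like you, decomposes the inductive cases through Dirac priors $\Raut_\sigma$---so your ``linearity in the prior'' kernel $p_\prog$ is exactly the quantity $\semanticsaut{\semantics{\prog}(\Raut_{\val_0})}(\val)$ that the paper manipulates.

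Two small points of comparison are worth noting. First, the paper carries a \emph{second} inductive invariant, namely $\markovprob{\prog}{\Diamond\errorstate} = \probmass{\Raut} - \probmass{\semantics{\prog}(\Raut)}$, and uses it to compute the denominator; you instead obtain the denominator by summing the first identity over $\val$, which tacitly relies on almost-sure termination of loop-free programs (true, but worth stating). Second, you are actually more careful than the paper about the linearity step: the paper's sequential-composition case silently jumps from $\sum_{\sigma,\sigma''} \semanticsaut{\Raut}(\sigma)\,\semanticsaut{\semantics{\prog_1}(\Raut_\sigma)}(\sigma'')\,\semanticsaut{\semantics{\prog_2}(\Raut_{\sigma''})}(\sigma')$ to $\semanticsaut{\semantics{\prog_2}(\semantics{\prog_1}(\Raut))}(\sigma')$, which is precisely the distributivity property you flag as ``the hard part''. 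So your plan is sound and, if anything, fills a gap the paper leaves implicit; the cleanest way to discharge it is via the PGF-level identities in \Cref{lemma:semantics-constructions}, each of which is manifestly linear in $\semanticsaut{\Raut}$.
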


\begin{example}[Operational Markov Chain]%
    \label{ex:mc-semantics}%
    Consider the program
    \[
        \prog
        \quad=\quad
        \overbrace{\{\underbrace{\incr{X}{Y}}_{\prog_{11}}\} \,[\nicefrac{1}{2}]\, \{\underbrace{\skp}_{\prog_{12}}\}}^{\prog_1}
        ~\fatsemi~
        \underbrace{\observe{X=0}}_{\prog_2}
        ~.
    \]
    Assuming an initial PGA $\Raut$ with $\semanticsaut{\Raut} = \frac{1}{2} + \frac{1}{2}Y^2$, the reachable fragment of the resulting operational Markov chain $\markovchain{\Raut}{\prog}$ is depicted in \Cref{fig:opsemantics}.
    We have:
    \begin{itemize}
        \item $\markovprob{\prog}{\Diamond\opstate{\downarrow}{\val_{00}}} = \frac{1}{2} = \semanticsaut{\semantics{\prog}(\Raut)}(\val_{00})$
        \item $\markovprob{\prog}{\Diamond\opstate{\downarrow}{\val_{02}}} = \frac{1}{4} = \semanticsaut{\semantics{\prog}(\Raut)}(\val_{02})$
        \item $\markovprob{\prog}{\Diamond\errorstate} = \frac{1}{4} = 1 - \frac{3}{4} = \sum_{\sigma\in\N^V} \semanticsaut{\Raut}(\sigma) - \sum_{\sigma\in\N^V}\semanticsaut{\semantics{\prog}(\Raut)}(\sigma)$
        \item $\markovprob{\prog}{\Diamond\opstate{\done}{\val_{00}}~\vert~\neg\Diamond\errorstate} = \frac{\nicefrac{1}{2}}{\nicefrac{3}{4}} = \frac{2}{3} = \semanticsaut{\normalize{\semantics{\prog}(\Raut)}}(\val_{00})$
        \item $\markovprob{\prog}{\Diamond\opstate{\done}{\val_{02}}~\vert~\neg\Diamond\errorstate} = \frac{\nicefrac{1}{4}}{\nicefrac{3}{4}} = \frac{1}{3} = \semanticsaut{\normalize{\semantics{\prog}(\Raut)}}(\val_{02})$
    \end{itemize}
%    Assume we want to calculate the posterior distribution for a PGA $\Raut$ with $\semanticsaut{\Raut} = \frac{1}{2} + \frac{1}{2}Y^2$.
%    We can construct the operational Markov chain\footnote{Displayed are only the configurations that have a probability different from 0, sinks are omitted.} and compare the result to the automaton that is obtained by applying the semantics (where $\val_{ij}$ indicates $\val(X) = i, \val(Y) = j$)
%    The final (non-normalized) automaton\footnote{The automaton was simplified by merging transitions and states.} is depicted in \Cref{fig:opsemantics}.
\end{example}

\begin{figure}[t]
    \centering
    \addtolength{\jot}{1em} % increase space between rows
    \begin{gather*}
        \sosrule{Sample}{D \in Distr(\N) \qquad D(n) = p}{\opstate{\incr{X}{D}}{\val} ~\xrightarrow{p}~ \opstate{\done}{\val[X \gets \val(X) + n]}}
        \qquad
        \ruletag{Obs-f} \frac{\sigma\not\models\guard}{\opstate{\observe{\guard}}{\val}~\rightarrow~\langle\stmaryrdLightning\rangle }\\
        \ruletag{Choice-l} \frac{}{\opstate{\coinflip{\prog_1}{p}{\prog_2}}{\val}~\xrightarrow{p}~\opstate{\prog_1}{\val}}
        \qquad
        ...
    \end{gather*}
    \caption{Excerpt of the operational Markov chain semantics (see \Cref{sec:op-equiv} for full details). $\textsc{Sample}$ is a new rule added for this paper.}
    \label{fig:opsem}
\end{figure}

\begin{figure}[t]
    \centering
    \begin{tikzpicture}
\node[labeledinitial top={$\nicefrac{1}{2}$}] (init00) {$\opstate{\prog_1\fatsemi\prog_2}{\val_{00}}$};

\node[xshift=-0.5cm, yshift=-1cm] (p100) at (init00.south west) {$\opstate{\prog_{11}\fatsemi\prog_2}{\val_{00}}$};

\node[xshift=0.5cm, yshift=-1cm] (p200) at (init00.south east)  {$\opstate{\prog_{12}\fatsemi\prog_2}{\val_{00}}$};

\draw[edge] (init00) -- node[left, above, pos=0.7]{$\nicefrac{1}{2}$} (p100);
\draw[edge] (init00) -- node[right, above, pos=0.7]{$\nicefrac{1}{2}$}(p200);

\node[yshift=-0.5cm] (p2p100) at (p100.south) {$\opstate{\prog_2}{\val_{00}}$};

\node[yshift=-0.5cm] (p2p200) at (p200.south) {$\opstate{\prog_2}{\val_{00}}$};

\draw[edge] (p100) -- (p2p100);
\draw[edge] (p200) -- (p2p200);

\node[yshift=-0.5cm] (dp2p100) at (p2p100.south) {$\opstate{\done}{\val_{00}}$};
\node[yshift=-0.5cm] (dp2p200) at (p2p200.south) {$\opstate{\done}{\val_{00}}$};

\draw[edge] (p2p100) -- (dp2p100);
\draw[edge] (p2p200) -- (dp2p200);

\draw[->] (dp2p100) edge[loop left, out=-190, in=-170,looseness=3] (dp2p100);
\draw[->] (dp2p200) edge[loop left, out=10, in=-10,looseness=3] (dp2p200);

%%%%%%%

\node[xshift=5cm, labeledinitial top={$\nicefrac{1}{2}$}] (init02) at (init00.east) {$\opstate{\prog_1\fatsemi\prog_2}{\val_{02}}$};

\node[xshift=-0.5cm, yshift=-1cm] (rp100) at (init02.south west) {$\opstate{\prog_{11}\fatsemi\prog_2}{\val_{02}}$};

\node[xshift=0.5cm, yshift=-1cm] (rp200) at (init02.south east)  {$\opstate{\prog_{12}\fatsemi\prog_2}{\val_{02}}$};

\draw[edge] (init02) -- node[left, above, pos=0.7]{$\nicefrac{1}{2}$} (rp100);
\draw[edge] (init02) -- node[right, above, pos=0.7]{$\nicefrac{1}{2}$}(rp200);

\node[yshift=-0.5cm] (rp2p100) at (rp100.south) {$\opstate{\prog_2}{\val_{22}}$};

\node[yshift=-0.5cm] (rp2p200) at (rp200.south) {$\opstate{\prog_2}{\val_{02}}$};

\draw[edge] (rp100) -- (rp2p100);
\draw[edge] (rp200) -- (rp2p200);

\node[yshift=-0.5cm] (rdp2p100) at (rp2p100.south) {$\langle\stmaryrdLightning\rangle$};
\node[yshift=-0.5cm] (rdp2p200) at (rp2p200.south) {$\opstate{\done}{\val_{02}}$};

\draw[edge] (rp2p100) -- (rdp2p100);
\draw[edge] (rp2p200) -- (rdp2p200);

\draw[->] (rdp2p100) edge[loop left, out=-200, in=-160,looseness=3] (rdp2p100);
\draw[->] (rdp2p200) edge[loop left, out=10, in=-10,looseness=3] (rdp2p200);

\end{tikzpicture}
    \caption{Operational Markov chain of the program $\prog$ from \Cref{ex:mc-semantics}.
        $\val_{ij}$ refers to a variable valuation $\val$ satisfying $\val(X) = i$ and $\val(Y) = j$.}
    \label{fig:opsemantics}
\end{figure}

%\begin{figure}
%    \centering
%    \begin{tikzpicture}
%    \node[labeledfinal bot={$\nicefrac{1}{2}$},initial, state] (0) {};
%    \node[state, right of=0] (1) {};
%    \node[state, right of=1, final] (2) {};
%    
%    \draw[edge] (0) -- node[auto]{$\nicefrac{1}{4}Y$} (1);
%    \draw[edge] (1) -- node[auto]{$Y$} (2);
%    \end{tikzpicture}
%\end{figure}

\section{Related Work} 
\label{sec:relwork}%
The general idea of probabilistic programming as an engineer-friendly paradigm for encoding statistical inference problems goes back to at least~\cite{DBLP:conf/aaai/KollerMP97}.
As mentioned in \Cref{sec:intro}, our work is primarily inspired by more recent research focused on the analysis of probabilistic programs and exact inference using probability generating functions (PGFs)~\cite{generatingfunctionsfor,redip,zaiser,credip}.
Our approach can be seen as an automata-theoretic counterpart to these methods.

Several alternative techniques for exact inference have been proposed. The tool \texttt{Dice}~\cite{scaling-exact-inference} leverages weighted model counting, but it is not directly applicable to programs that sample from distributions with infinite support.
The \texttt{PSI} tool~\cite{psi} employs symbolic representations of probability density functions to perform inference.
In~\cite{SPPL}, sum-product expressions are used to compactly represent complex distributions.
\emph{Weakest pre-expectations}~\cite{DBLP:conf/stoc/Kozen83,DBLP:series/mcs/McIverM05}, a probabilistic generalization of classic weakest preconditions, were amended to conditioning and inference in~\cite{DBLP:journals/toplas/OlmedoGJKKM18}.
The more recent paper~\cite{recursive} focuses on exact inference in programs with recursion by compiling them to systems of polynomial equations.
Probabilistic model checking techniques for computing exact conditional probabilities exist as well~\cite{DBLP:conf/tacas/BaierKKM14}, but these methods are generally restricted to programs unwinding to \emph{finite} Markov chains.

Finally, we mention that there exist mature probabilistic programming systems such as \texttt{Stan}~\cite{JSSv076i01}, \texttt{WebPPL}~\cite{dippl}, and others.
These tools typically focus on \emph{approximate} inference using Monte-Carlo sampling methods.

\section{Conclusion and Future Work}
\label{sec:conc}
We have introduced a method for performing exact inference within a class of syntactically restricted discrete probabilistic programs with infinite support distributions.
The core idea is to represent a program's posterior distribution as a weighted automaton derivable from the program text via standard automata-theoretic constructions.

For future work, we aim to explore the possibility of relaxing some of the syntactic restrictions imposed by the \redip language.
Notably, \cite{zaiser} has recently extended the PGF-based approach from~\cite{redip,credip} to support certain \emph{continuous} distributions.
Investigating how such extensions could be integrated into our PGA encoding presents an intriguing direction.
Another promising avenue is to enrich the PGA model with more expressive automata-theoretic features, such as stacks or registers.
Given the extensive body of work in automata theory, we believe this framework is well-suited for characterizing broader classes of probabilistic programs where inference---and possibly other analysis problems---remain decidable.

We also plan to implement our approach, potentially building on existing probabilistic model checkers such as \texttt{Storm}~\cite{DBLP:journals/sttt/HenselJKQV22} or \texttt{PRISM}~\cite{DBLP:conf/cav/KwiatkowskaNP11}.
After all, the PGA resulting from our construction often resembles finite-state Markov chains, a class of models on which these tools perform very well~\cite{DBLP:conf/isola/BuddeHKKPQTZ20}.

%\newpage

% ---- Bibliography ----
%
% BibTeX users should specify bibliography style 'splncs04'.
% References will then be sorted and formatted in the correct style.
%
\bibliographystyle{splncs04}
\bibliography{literature}

\newpage
\appendix
\allowdisplaybreaks % allow long math align environments to break pages in appendix to avoid very awkward formatting
\section{Proofs}
\subsection{Proof of \Cref{thm:filter}}
\filter*
\begin{proof}
    Let $\val \in \N^\V$ be arbitrary.
    There are two cases:
    \begin{enumerate}
        \item $\Baut$ accepts some $w \in \V^*$ with $\Psi(w) = \val$.
        This means that $\Baut$ in fact accepts \emph{all} such $w$, since $\Baut$ is closed under permutations by assumption.
        Now suppose that $(q_0,\ldots,q_n) \in Q^+$ is a finite path in $\Raut$ with weight
        \[
        I_{q_0} \cdot M_{q_0,q_1} \cdot \ldots \cdot  M_{q_{n-1}, q_n} \cdot F_{q_n} ~=~ a \vec{X}^\sigma
        ~,
        \]
        where $a \in \Rgez$.
        We have to show that we have \emph{exactly one path} of the form $((q_0,..), \ldots, (q_n,..))$ in $\Raut \times \Baut$ with the same weight $a \vec{X}^\sigma$.
        To determine this path, we choose $w$ as the sequence of symbols along $(q_0,\ldots,q_n)$, this time forgetting about the real weights but not about the order.
        Notice that $|w| = m \leq n$ and $m = 0$ is possible.
        Let $(s_0,\ldots,s_m) \in Q'^+$ be the unique accepting path of $\Baut$ on $w$.
        We can then construct the desired path as $((q_0,s_0),\ldots,(q_n,s_m))$ where, for all $i \in {1,\ldots,n}$, we set $s_i = s_{i-1}$ if $\Raut$ makes an $\varepsilon$-transition from $q_{i-1}$ to $q_i$;
        it can be checked that the weight of this path is $a \vec{X}^\val$ by construction of the product.
        \item $\Baut$ accepts no $w \in \V^*$ with $\Psi(w) = \val$.
        Then it follows by construction of the product that all paths on $\Raut \times \Baut$ with weight $a \vec{X}^\sigma$ (such paths are inscribed with some $w$ satisfying $\Psi(w) = \val$) must end in a state $(q,s)$ where $s$ is \emph{not} accepting, i.e.\ $F_s = 0$, and therefore $a = 0$.
    \end{enumerate}
\end{proof}

\subsection{Proof of \Cref{thm:endofunction}}
The proof of \Cref{thm:endofunction} relies on the following auxiliary lemma:

\begin{lemma}\label{lemma:properties-constructions}
Let $X,Y\in V, n \in\N, i\in\set{0,1}, p,q \in [0,1]$ and $\guard$ a Boolean guard. Then for normalized PGAs $\Raut,\Raut_1, \Raut_2$ the following holds:
\begin{enumerate}
\item $\Raut[X/i]$ is normalized PGA
\item $\Raut[Y/\Raut_2]$ is a normalized PGA
\item $\Raut_1 \concat \Raut_2$ is a normalized PGA
\item $\probchoice{\Raut_1}{\Raut_2}{p}{q}$ is normalized (and PGA, if $\probmass{\Raut_1} + \probmass{\Raut_2}\leq 1$ or if $p + q \leq 1$)
\item $\Raut \times \Baut_\guard$ is a normalized PGA
\item $\Raut^{\decr{X}}$ is a normalized PGA
\end{enumerate}
\end{lemma}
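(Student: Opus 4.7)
The plan is to dispatch each of the six items separately, in each case establishing (a)~that the initial and final weights of the resulting automaton remain supported only on the unit monomial (the normalization condition), and (b)~that the total mass $\probmass{\cdot}$ is at most~$1$. Part~(a) is immediate for every construction by inspection---none of them introduce initial or final weights outside $\Rgezinf e$, and even the decrement construction only rewrites intermediate transitions---so the real work concerns bounding the mass.

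\textbf{Items 1, 3, 4 and 5.} For concatenation (item~3), the identity $\semanticsaut{\Raut_1\concat\Raut_2}=\semanticsaut{\Raut_1}\cdot\semanticsaut{\Raut_2}$ yields $\probmass{\Raut_1\concat\Raut_2}=\probmass{\Raut_1}\cdot\probmass{\Raut_2}\le 1$. For weighted disjoint union (item~4), the behaviour is the weighted sum of the two inputs, so either stated hypothesis ($\probmass{\Raut_1}+\probmass{\Raut_2}\le 1$, or $p+q\le 1$ with both masses $\le 1$) bounds the result by~$1$. For product with a guard automaton (item~5), \Cref{thm:filter} shows that the coefficients of $\Raut\times\Baut_\guard$ coincide with those of $\semanticsaut{\Raut}$ on $\val$ accepted by $\Baut_\guard$ and vanish elsewhere, so the mass can only decrease. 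For label substitution (item~1), substituting $X$ by $0$ zeroes out every monomial containing $X$, while $X\mapsto 1$ marginalizes $X$ out; in both cases the remaining coefficients sum to at most $\probmass{\Raut}$.

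\textbf{Item~2 (transition substitution).} A short path-tracing argument on the gadget of Figure~\ref{fig:transSubs} shows that every accepting path of $\Raut[Y/\Raut_2]$ decomposes into an ``outer'' accepting path $\pi$ of $\Raut$ together with, for each $Y$-transition of $\pi$, an accepting path of $\Raut_2$, and that the resulting weights multiply. Summing and then marginalizing gives
\[
    \probmass{\Raut[Y/\Raut_2]}
    ~=~
    \sum_{\pi\text{ in }\Raut} w(\pi)\cdot \probmass{\Raut_2}^{\#_Y(\pi)}
    ~\le~
    \probmass{\Raut}
    ~\le~ 1 ~,
\]
where $w(\pi)$ denotes the marginalized weight of $\pi$ with $Y$-transitions ignored, $\#_Y(\pi)$ is the number of $Y$-transitions in $\pi$, and the inequality uses $\probmass{\Raut_2}\le 1$.

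\textbf{Item~6 (decrement).} This is the main obstacle. Items~1 and~5 already give that $\Raut\times\Baut_{X>0}$ and $\Raut[X/0]$ are normalized PGAs, and by \Cref{thm:filter} together with the effect of $X\mapsto 0$ their masses partition $\probmass{\Raut}$ according to whether $\val(X)>0$ or $\val(X)=0$; hence item~4 with $p=q=1$ makes the intermediate $\Raut'=(\Raut\times\Baut_{X>0})\oplus\Raut[X/0]$ a normalized PGA. The subtle final step is the transition-label rewrite: because $\Baut_{X>0}$ is deterministic with two states and its $s$-component carries no self-loop on~$X$, every accepting path of $\Raut\times\Baut_{X>0}$ traverses the $s\to t$ crossing exactly once, and that crossing is necessarily an $X$-transition. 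Rewriting each such crossing into an $\varepsilon$-transition therefore strips exactly one factor of~$X$ from every monomial in the behaviour---which is mass-preserving once we marginalize $X\mapsto 1$---so $\Raut^{\decr{X}}$ still has mass at most~$1$ and remains normalized.
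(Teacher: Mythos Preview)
Your proof is correct and follows essentially the same approach as the paper: a case-by-case verification that each construction preserves normalization (trivial by inspection) and that the total mass stays bounded by~$1$, invoking \Cref{thm:filter} for the product and the mass-partition argument for the decrement. Your path-level treatment of item~2 and your more explicit justification of why the $s\to t$ rewrite in item~6 is mass-preserving are slight elaborations of what the paper does at the FPS level, but the underlying arguments coincide; one minor quibble is that in item~2 your $w(\pi)$ should simply be the marginalized weight of $\pi$ (all variables sent to~$1$)---the phrase ``with $Y$-transitions ignored'' is misleading since the real factor $r$ of each $Y$-transition must still be included.
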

\begin{proof}
Let $\Raut,\Raut_1,\Raut_2$ normalized PGA, $X,Y \in V, n \in\N, i \in\set{0,1},p,q\in [0,1]$ and $\guard$ a Boolean guard. Recall that a $\lin$-automaton $\Raut = (Q,M,I,F)$ is called normalized if $\bigcup_{s\in Q} \supp{I_s}\cup \supp{F_s} \subseteq\set{1}$, i.e.\ $I$ and $F$ are $\Rgez$-vectors, and PGA if $\sum_{\sigma\in\N^V} \semanticsaut{\Raut}(\sigma)\leq 1$.

\noindent\textit{Case 1}: Since $\Raut$ is normalized, the substitution does not change the initial or final weights. Thus, $\Raut[X/i]$ is also normalized.

With $i = 1$ we do not change the probability mass of $\Raut$. With $i = 0$, we potentially remove probability mass from $\Raut$. Since our PGA-definition also includes sub-distributions, $\Raut[X/i]$ is also a PGA for both values of $i$.

\noindent\textit{Case 2}: For normalization, see Case 1. As for the PGA-property we have
\begin{align*}
\probmass{\Raut_1[Y/\Raut_2]} &= \sum_{\sigma\in\N^V,\sigma(Y) = 0} \semanticsaut{\Raut_1}(\sigma) + \\
&\quad \sum_{\sigma\in\N^V, \sigma(Y) \neq 0} \semanticsaut{\Raut_1}(\sigma)\prod_{\sigma(Y)} \left(\sum_{\sigma'\in\N^V}\semanticsaut{\Raut_2}(\sigma')\right) \\
&\leq \sum_{\sigma\in\N^V,\sigma(Y) = 0} \semanticsaut{\Raut_1}(\sigma) + \sum_{\sigma\in\N^V, \sigma(Y) \neq 0} \semanticsaut{\Raut_1}(\sigma)\prod_{\sigma(Y)} 1\\
&= \sum_{\sigma\in\N^V,\sigma(Y) = 0} \semanticsaut{\Raut_1}(\sigma) + \sum_{\sigma\in\N^V, \sigma(Y) \neq 0} \semanticsaut{\Raut_1}(\sigma) \\
&= \sum_{\sigma\in\N^V} \semanticsaut{\Raut}(\sigma) \\
&\leq 1
\end{align*}
\noindent\textit{Case 3}: The new initial (final) weights of $\Raut_1\concat\Raut_2$ are those of $\Raut_1$ ($\Raut_2$). Since both are normalized, so is $\Raut_1\concat\Raut_2$. 

For the PGA-property we have
\begin{align*}
\probmass{\Raut_1\concat\Raut_2} &= \sum_{\sigma\in\N^V}\semanticsaut{\Raut_1}(\sigma)\left(\sum_{\sigma'\in\N^V}\semanticsaut{\Raut_2}(\sigma')\right) \\
&\leq \sum_{\sigma\in\N^V}\semanticsaut{\Raut_1}(\sigma) \\
&\leq 1
\end{align*}

\noindent\textit{Case 4}: Since $p,q\in [0,1]$ and $\Raut_1,\Raut_2$ are normalized, the normalization of $\probchoice{\Raut_1}{\Raut_2}{p}{q}$ follows.

Assume $\probmass{\Raut_1} +\probmass{\Raut_2}  \leq 1$. Then
\begin{align*}
\probmass{\probchoice{\Raut_1}{\Raut_2}{p}{q}} &= \sum_{\sigma\in\N^V} p\cdot\semanticsaut{\Raut_1}(\sigma) + q\cdot\sum_{\sigma\in\N^V}\semanticsaut{\Raut_2}(\sigma') \\
&= p\probmass{\Raut_1} + q\probmass{\Raut_2}  \\
&\leq\probmass{\Raut_1} + \probmass{\Raut_2} \\
&\leq 1
\end{align*}
Assume $p + q \leq 1$. Then 
\begin{align*}
\probmass{\probchoice{\Raut_1}{\Raut_2}{p}{q}} &= \sum_{\sigma\in\N^V} p\cdot\semanticsaut{\Raut_1}(\sigma) + \sum_{\sigma\in\N^V}q\cdot\semanticsaut{\Raut_2}(\sigma') \\
&= p\probmass{\Raut_1} + q\probmass{\Raut_2}  \\
&\leq p + q\\
&\leq 1
\end{align*}

\noindent\textit{Case 5}: By construction, the initial and final weights of $\Raut\times\Baut_\guard$ are the multiplication of the initial and final weights of $\Raut$ and $\Baut_\guard$. Thus, normalization follows from the normalization of $\Raut$ and $\Baut_\guard$.

From \Cref{thm:filter} it follows that $\sum_{\sigma\in\N^V} \semanticsaut{\Raut\times\Baut_\guard}(\sigma) \leq \sum_{\sigma\in\N^V}\semanticsaut{\Raut}(\sigma)$ and thus $\Raut\times\Baut_\guard$ is also PGA.

\noindent\textit{Case 6}: Normalization follows from Cases 4 and 5.

We can see that 

\begin{align*}
    \sum_{\sigma\in\N^V} \semanticsaut{\Raut}(\sigma) &= \sum_{\sigma\in\N^V, \sigma(X)= 0} \semanticsaut{\Raut}(\sigma) + \sum_{\sigma\in\N^V, \sigma(X)\neq 0} \semanticsaut{\Raut}(\sigma) \\
    &= \sum_{\sigma\in\N^V} \semanticsaut{\Raut[X/0]}(\sigma) + \sum_{\sigma\in\N^V} \semanticsaut{\Raut\times\Baut_{X>0}}(\sigma)
\end{align*}
Since $\Raut$ is PGA, so is $\Raut^{\decr{X}}$ from Cases 1 and 5 (as the substitution by 1 does not change the probability mass).
\end{proof}
Thus we can see that every operation we introduce preserves the normalization (and PGA-property, if certain requirements are met). We will see now that these requirements are \emph{always} met with the semantics we propose. 

\endofunction*

\begin{proof}
We prove this statement by structural induction on $\prog$. Assume that the input automaton $\Raut$ is a normalized PGA. 

\noindent\textit{Case }$\assign{X}{0}$: From \Cref{lemma:properties-constructions}(1).

\noindent\textit{Case }$\incr{X}{n}$: From \Cref{lemma:properties-constructions}(3) since $\Raut_{\dirac{n}{X}}$ is a normalized PGA.

\noindent\textit{Case }$\incr{X}{D}$: From \Cref{lemma:properties-constructions}(3) since $\Raut_{D_X}$ is a normalized PGA.

\noindent\textit{Case }$\incr{X}{Y}$: From \Cref{lemma:properties-constructions}(2) since $\inlineTransTwo{Y}{X}$ is a normalized PGA.

\noindent\textit{Case }$\incr{X}{\iid{D}{Y}}$: From \Cref{lemma:properties-constructions}(2,3) since $\inlineTrans{Y}$ and $\Raut_{D_X}$ are normalized PGA.

\noindent\textit{Case }$\decr{X}$: From \Cref{lemma:properties-constructions}(6).

\noindent\textit{Case }$\observe{\guard}$: From \Cref{lemma:properties-constructions}(5).

\noindent\textit{Induction Hypothesis (IH)}: We assume that for any program $\prog$, $\semantics{\prog}$ is an endofunction on the set of normalized PGA over variables $V$.

\noindent\textit{Case }$\coinflip{\prog_1}{p}{\prog_2}$: From \Cref{lemma:properties-constructions}(4) since $p + (1-p) = 1\leq 1$ and IH.

\noindent\textit{Case }$\ite{\guard}{\prog_1}{\prog_2}$: From \Cref{lemma:properties-constructions}(4,5), \Cref{thm:filter} since \\$\sum_{\sigma\in\N^V}\semanticsaut{\Raut\times\Baut_\guard}(\sigma) + \sum_{\sigma\in\N^V}\semanticsaut{\Raut\times\Baut_{\neg\guard}}(\sigma) = \sum_{\sigma\in\N^V}\semanticsaut{\Raut}(\sigma) \leq 1$ and IH.

\noindent\textit{Case }$\prog_1\fatsemi\prog_2$: From IH.

\end{proof}
%\newpage

\subsection{Proof of \Cref{thm:complexity-semantics}}
To show \Cref{thm:complexity-semantics} we first examine the complexity of every construction we introduce. For this, we define by $\sizeaut{\Raut}_X$ for $X\in V$ the amount of $X$-transitions in $\Raut$. Similarly, by $|M|$ we denote the number of nonzero entries in a vector/matrix $M$. Let $I(\Raut)$ and $F(\Raut)$ denote the initial and final vector of $\Raut$, respectively. In \Cref{tab:complexity-base-cases} we provide the complexity of the distribution and guard automata. In \Cref{tab:complexity-constructions} we see the complexity of all constructions introduced in \Cref{sec:semantics}. Lastly, the complexity of the semantics defined in \Cref{tab:semantics} is given in \Cref{tab:complexity-semantics}. We also provide upper bounds for each of the complexities.

\begin{table}[t]
    \caption{Complexity of distribution and guard automata for $n, m\in\N$ with $m>n, X\in V, p\in [0,1]$ and $\guard, \guard_1,\guard_2$ Boolean guards.}
    \label{tab:complexity-base-cases}
    \centering
    \medskip
    \setlength{\tabcolsep}{15pt}
    \rowcolors{2}{gray!7}{white}
    \begin{tabular}{l l l}
        \toprule
        Automaton & $\sizeaut{\cdot}$\\
        \midrule
        $\Raut_{\dirac{n}{X}}$ & $\max\set{n-1, 1}$\\
        $\Raut_{\bern{p}{X}}$ & $1$ \\
        $\Raut_{\geom{p}{X}}$ & $1$ \\
        $\Raut_{\negbinomial{m}{p}{X}}$ & $2m - 1$ \\
        $\Raut_{\unif{m}{X}}$ & $\max\set{m-1, 1}$\\
        $\Baut_{X<0}$ &$1$ \\
        $\Baut_{X<n}$ & $\max\set{n, 1}$\\
        $\Baut_{X\equiv_m n}$ & $m$\\
        $\Baut_{\neg\guard}$ & $\sizeaut{\Baut_\guard}$ \\
        $\Baut_{\guard_1\land\guard_2}$ &$\sizeaut{\Baut_{\guard_1}}\sizeaut{\Baut_{\guard_2}}$ \\
        \bottomrule
    \end{tabular}
\end{table}

\begin{table}[t]
    \caption{Complexity of automata constructions. $\Raut, \Raut_1,\Raut_2$ are arbitrary normalized PGA, $p,q\in [0,1]$ and $X,Y \in V$.}
    \label{tab:complexity-constructions}
    \centering
    \medskip
    \setlength{\tabcolsep}{15pt}
    \rowcolors{2}{gray!7}{white}
    \begin{tabular}{l l l}
        \toprule
        Construction $c$ & $\sizeaut{\cdot}$ \\
        \midrule
        $\Raut[X/0]$ & $\max\set{\sizeaut{\Raut} - \sizeaut{\Raut}_X, 1}$\\
        $\Raut[X/1]$ & $\sizeaut{\Raut}$\\
        $\Raut_1\concat\Raut_2$ & $\sizeaut{\Raut_1} + \sizeaut{\Raut_1} + |F(\Raut_1)||I(\Raut_2)|$\\
        $\probchoice{\Raut_1}{\Raut_2}{p}{q}$ & $\sizeaut{\Raut_1} + \sizeaut{\Raut_2}$\\
        $\Raut_1[Y/\Raut_2]$ & $\sizeaut{\Raut_1} + \sizeaut{\Raut_1}_Y(\sizeaut{\Raut_2} + |F(\Raut_2)|)$\\
        $\Raut_1 \times \Raut_2$ &  $\sizeaut{\Raut_1}\sizeaut{\Raut_2}$\\
        $\Raut^{\decr{X}}$ & $3\sizeaut{\Raut} - \sizeaut{\Raut}_X$ \\
        \bottomrule
    \end{tabular}
\end{table}

\begin{table}[t]
    \caption{Complexity of semantics and upper bound on the complexity. $\Raut$ is an arbitrary normalized PGA, $X,Y \in V$, $n \in\N$, $p \in [0,1]$, $\guard$ a Boolean guard, $D$ a PGA-definable distribution (with corresponding automaton $\Raut_D$), $N = \max\set{m\colon m \text{ constant in } \prog}$ the largest constant in $\prog$ and $\prog_1,\prog_2$ \redip programs.}
    \label{tab:complexity-semantics}
    \centering
    \smallskip
    \setlength{\tabcolsep}{4pt}
    \rowcolors{2}{gray!7}{white}
    \resizebox{\textwidth}{!}{
        \begin{tabular}{l l l}
            \toprule
            $\prog$ & $\sizeaut{\semantics{\prog}(\Raut)}$ & $\geq \sizeaut{\semantics{\prog}(\Raut)}$\\
            \midrule
            $\assign{X}{0}$ &$\max\set{\sizeaut{\Raut} - \sizeaut{\Raut}_X, 1}$&$\sizeaut{\Raut}$ \\
            $\incr{X}{n}$ &$\sizeaut{\Raut}+|F(\Raut)|+n$&$2\sizeaut{\Raut}+ N$ \\
            $\incr{X}{D}$ &$\sizeaut{\Raut}+\sizeaut{\Raut_D}+|F(\Raut)||I(\Raut_D)|$&$\sizeaut{\Raut} + \sizeaut{\Raut_D} +\sizeaut{\Raut}\sizeaut{\Raut_D}$ \\
            $\incr{X}{Y}$ & $\sizeaut{\Raut} + 3\sizeaut{\Raut}_Y$& $4\sizeaut{\Raut}$ \\
            $\incr{X}{\iid{D}{Y}}$ &$\sizeaut{\Raut} + \sizeaut{\Raut}_Y(\sizeaut{\Raut_D} + |F(\Raut_D)| + |I(\Raut_D)|)$&$\sizeaut{\Raut} + 3\sizeaut{\Raut}\sizeaut{\Raut_D}$ \\
            $\decr{X}$ &$3\sizeaut{\Raut} - \sizeaut{\Raut}_X$& $3\sizeaut{\Raut}$\\
            $\observe{\guard}$ &$\sizeaut{\Raut}\sizeaut{\Baut_\guard}$& $\sizeaut{\Raut}(N+1)^{\sizebool{\guard}}$ (\Cref{lemma:complexity-auxiliary})\\
            $\coinflip{\prog_1}{p}{\prog_2}$ & $\sizeaut{\semantics{\prog_1}(\Raut)} +\sizeaut{\semantics{\prog_2}(\Raut)}$& $2\max_{\prog}\{\sizeaut{\semantics{\prog}(\Raut)}\}$ \\
            $\ite{\guard}{\prog_1}{\prog_2}$ &$\sizeaut{\semantics{\prog_1}(\Raut\times\Baut_{\guard})} + \sizeaut{\semantics{\prog_2}(\Raut\times\Baut_{\neg\guard})}$& $2\max_{\prog}\{\sizeaut{
            \semantics{\prog}(\Raut\times\Baut_\guard)
            } \}$  \\
            $\prog_1\fatsemi\prog_2$ &$\sizeaut{\semantics{\prog_2}(
            \semantics{\prog_1}(\Raut)
            )}$& $\max_{\prog_1,\prog_2}\{\sizeaut{\semantics{\prog_2}(
            \semantics{\prog_1}(\Raut)
            )}\}$\\
            \bottomrule
        \end{tabular}
    }
\end{table}

One crucial observation is that the complexity of individual guard automata is bounded by the size of the largest constant $n$ in the program. Therefore, we can give a general over-approximation for the size of \emph{any} guard automaton depending on the number of conjunctions which the guard automaton contains. Note that we could also prove a similar lemma for supported distribution automata. However, since the language also supports custom distributions (as long as they are PGA-definable and have a probability mass of 1) we use the size of the largest distribution automaton as another parameter.

\begin{lemma}[Complexity of Auxiliary Automata]\label{lemma:complexity-auxiliary}
Let $\prog$ be a \redip\ program and $n = \max\set{m\colon\ m \text{ constant in }\prog}$ be the largest number in $\prog$. Then for any guard $\guard$, 
$|\Baut_\guard| \leq (n+1)^{{\sizebool{\guard}}}$.
\end{lemma}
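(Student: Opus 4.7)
The plan is to prove this by structural induction on the guard $\guard$, using the recursive definitions of $\sizebool{\cdot}$ from \Cref{sec:semantics} and the sizes of the guard automata from \Cref{tab:complexity-base-cases}. Since $\sizebool{\guard}$ counts atomic subguards (with $\sizebool{\neg\guard} = \sizebool{\guard}$ and $\sizebool{\guard_1 \land \guard_2} = \sizebool{\guard_1} + \sizebool{\guard_2}$), the exponent in $n^{\sizebool{\guard}}$ is exactly the number of atomic guards, and the product structure of $\Baut_{\guard_1 \land \guard_2}$ matches this exponential behaviour nicely.

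For the base cases I would argue as follows. If $\guard = X < n'$, then $\sizeaut{\Baut_\guard} = n'$, and since $n$ is the largest constant in $\prog$, we have $n' \leq n = n^1 = n^{\sizebool{\guard}}$. The case $\guard = X < 0$ is immediate since $\sizeaut{\Baut_\guard} = 0 \leq n$. Finally, for $\guard = X \equiv_m n'$, the automaton has size $m$, and since $m$ itself is a constant appearing in $\prog$, we have $m \leq n = n^{\sizebool{\guard}}$.

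For the inductive step, the negation case is trivial: $\sizeaut{\Baut_{\neg\guard}} = \sizeaut{\Baut_\guard} \leq n^{\sizebool{\guard}} = n^{\sizebool{\neg\guard}}$ by the induction hypothesis. For conjunction, the DFA product construction yields
\[
    \sizeaut{\Baut_{\guard_1 \land \guard_2}} ~=~ \sizeaut{\Baut_{\guard_1}} \cdot \sizeaut{\Baut_{\guard_2}} ~\leq~ n^{\sizebool{\guard_1}} \cdot n^{\sizebool{\guard_2}} ~=~ n^{\sizebool{\guard_1} + \sizebool{\guard_2}} ~=~ n^{\sizebool{\guard_1 \land \guard_2}} ,
\]
applying the induction hypothesis to each conjunct and then using the additivity of $\sizebool{\cdot}$ under conjunction.

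Overall this is a short and routine induction, so I do not anticipate a real obstacle. The only subtlety is handling the edge case $n = 0$ (no constants appear in $\prog$, or the only constant is $0$): in that situation the $X<0$ and $X<n'$ automata have size $0$, which is compatible with the bound, and no $X \equiv_m n'$ atom can occur since that would force $m \geq 1$. With the mild convention $n \geq 1$ whenever $\prog$ actually contains a nontrivial guard, the bound holds uniformly.
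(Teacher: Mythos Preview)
Your proposal is correct and follows essentially the same approach as the paper: both argue by induction, using the multiplicativity of $\sizeaut{\cdot}$ under the DFA product and the fact that atomic guard automata have size at most $n$. You use structural induction on $\guard$ and handle negation explicitly, whereas the paper inducts on $\sizebool{\guard}$ and only treats the conjunction case in the step; since negation changes neither $\sizebool{\cdot}$ nor $\sizeaut{\cdot}$, the difference is purely cosmetic.
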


\begin{proof}
We prove this statement by induction on $\sizebool{\guard}$. Let $\guard$ s.t.\ $\sizebool{\guard} = 1$. From \Cref{tab:complexity-base-cases} we can see that $\sizeaut{\Baut_\guard}\leq \max\set{1,n,m} \leq (n+1) = (n+1)^{\sizebool{\guard}}$ since $n$ is the biggest number in $\prog$ by assumption (also accounting for $n=0$).

\noindent\textit{Induction Hypothesis (IH)}: We assume that for any fixed length $\ell\in\N$ with $\sizebool{\guard} = \ell$ we have $\sizeaut{\Baut_\guard}\leq n^{\sizebool{\guard}}$.

\noindent Let $\guard$ s.t.\ $\sizebool{\guard} = \ell+1$. Then $\guard = \guard_1 \land \guard_\ell$ with $\sizebool{\guard_1} = 1$ and thus $\sizebool{\guard_\ell} = \ell$. We have:
\begin{align*}
\sizeaut{\Baut_{\guard}} &= \sizeaut{\Baut_{\guard_1\land\guard_\ell}} \\
&= \sizeaut{\Baut_{\guard_1}}\sizeaut{\Baut_{\guard_\ell}} \\
&\leq  \sizeaut{\Baut_{\guard_1}}(n+1)^{\sizebool{\guard_\ell}}\quad \text{(IH)} \\
&\leq (n+1)(n+1)^{\sizebool{\guard_\ell}} \\
&= (n+1)^{\sizebool{\guard_\ell} + 1} \\
&= (n+1)^{\sizebool{\guard}}
\end{align*}
\end{proof}
With this lemma we can now characterize the size of the automaton describing the posterior distribution.
\begin{lemma}[Complexity of Base Statements]\label{lemma:complexity-base}
For any \redip program $\prog$ with $\sizeprog{\prog} = 1$ and any PGA $\Raut$ we have
\[
	\sizeaut{\semantics{\prog}(\Raut)} \in\bigO{\sizeaut{\Raut}\cdot\mathcal{D}\cdot (n+1)^{\eta}}
\]
where $\mathcal{D} = \max\set{\sizeaut{\Raut_D} \colon\ \Raut_D \text{ distribution automaton in }\prog}$ is the size of the largest distribution automaton in $\prog$, $n = \max\set{m\colon\ m \text{ constant in }\prog}$ is the largest constant in $\prog$ and $\eta=\max\set{\sizebool{\guard}\colon\ \guard \text{ guard in }\prog}$ is the size of the largest guard in $\prog$ (with the convention that $\max\emptyset = 1$).
\end{lemma}

\begin{proof}
Let $\prog$ be a \redip\ program s.t.\ $\sizeprog{\prog} = 1$, $\Raut$ a PGA, $\mathcal{D}$ the size of the largest distribution automaton in $\prog$, $n = \max\set{m\colon\ m \text{ constant in }\prog}$ the largest constant in $\prog$ and $\eta = \max\set{\sizebool{\guard}\colon\ \guard \text{ guard in }\prog}$ the size of the largest guard in $\prog$. From \Cref{tab:complexity-semantics} we can see that 
\begin{align*}
\sizeaut{\semantics{\prog}(\Raut)} &\leq \max\{\sizeaut{\Raut}, 2\sizeaut{\Raut}+n, \sizeaut{\Raut} + \sizeaut{\Raut_D} + \sizeaut{\Raut}\sizeaut{\Raut_D}, 4\sizeaut{\Raut},\\
&\qquad \sizeaut{\Raut} + 3\sizeaut{\Raut}\sizeaut{\Raut_D}+3\sizeaut{\Raut}+\sizeaut{\Raut}(n+1)^{\sizebool{\guard}}\} \\ 
&= \max\{2\sizeaut{\Raut} + n,
\sizeaut{\Raut} + \sizeaut{\Raut_D} + \sizeaut{\Raut}\sizeaut{\Raut_D},
4\sizeaut{\Raut}, \sizeaut{\Raut} + 3\sizeaut{\Raut}\sizeaut{\Raut_D}, \sizeaut{\Raut}(n+1)^{\sizebool{\guard}}\}
\end{align*}
From this we have
\begin{align*}
\sizeaut{\semantics{\prog}(\Raut)} &\in \bigO{
\sizeaut{\Raut}\cdot\mathcal{D} + \sizeaut{\Raut}\cdot (n+1)^{\sizebool{\guard}} + n
} \\
&= \bigO{
\sizeaut{\Raut}\cdot\mathcal{D} + \sizeaut{\Raut}\cdot (n+1)^{\eta}
} &&(\sizeaut{\Raut},\eta\geq 1)\\
&\subseteq \bigO{\sizeaut{\Raut}\cdot\mathcal{D}\cdot (n+1)^{\eta}} && (\mathcal{D} \geq 1)
 \\
\end{align*}
Note that since by convention $\max\emptyset = 1$ this analysis also covers programs without constants (or with $n = 0$) or sampling. 
\end{proof}

\complexitysemantics*
\begin{proof}
Let $\Raut$ be a PGA, $\mathcal{D} = \max\set{\sizeaut{\Raut_D}\colon\ \Raut_D\text{ distribution automaton in }\prog}$ the size of largest distribution automaton in $\prog$, $n = \max\set{m\colon\ m \text{ constant in }\prog}$ the largest constant in $\prog$ and $\eta  =\max\set{\sizebool{\guard}\colon\ \guard \text{ guard in }\prog}$ the size of the largest guard in $\prog$. We prove this claim by induction on the program size $\sizeprog{\prog}$.

Let $\prog$ s.t.\ $\sizeprog{\prog} = 1$. Then from \Cref{lemma:complexity-base} we have 
\begin{align*}
	\sizeaut{\semantics{\prog}(\Raut)} \in\bigO{\sizeaut{\Raut}\cdot \mathcal{D}\cdot (n+1)^{\eta}}
	&= \bigO{\sizeaut{\Raut}\cdot \mathcal{D}^{\sizeprog{\prog}}\cdot (n+1)^{\sizeprog{\prog}\eta}}
\end{align*}

\noindent\textit{Induction Hypothesis (IH)}: We assume that for any öGA $\Raut$ and $\prog$ with $\sizeprog{\prog} = \ell$ we have $\sizeaut{\semantics{\prog}(\Raut)} \in\bigO{\sizeaut{\Raut}\cdot\mathcal{D}^{\sizeprog{\prog}}\cdot (n+1)^{\sizeprog{\prog}\eta}}$.

\noindent First note that if- and probabilistic choice statements increase the size of the automaton less than the sequential composition, hence we need not to consider them here. Let $\prog=\prog_\ell\fatsemi\prog_1$ with $\sizeprog{\prog_1} = 1$. Then
\[
\sizeprog{\semantics{\prog_\ell\fatsemi\prog_1}(\Raut)} = \sizeprog{(\semantics{\prog_1}\underbrace{\semantics{\prog_\ell}(\Raut)}_{\in \bigO{\sizeaut{\Raut}\cdot\mathcal{D}^{\sizeprog{\prog_\ell}}\cdot (n+1)^{\sizeprog{\prog_\ell}\eta}}})}
\]
Hence 
\begin{align*}
\sizeprog{\semantics{\prog_\ell\fatsemi\prog_1}(\Raut)} &\in \bigO{(\sizeaut{\Raut}\cdot \mathcal{D}^{\sizeprog{\prog_\ell}}\cdot (n+1)^{\sizeprog{\prog_\ell}\eta})\cdot \mathcal{D}\cdot (n+1)^\eta}\\
&= \bigO{\sizeaut{\Raut}\cdot\mathcal{D}^{\sizeprog{\prog_\ell} +1}\cdot (n+1)^{(\sizeprog{\prog_\ell}+1)\eta}} \\
&= \bigO{\sizeaut{\Raut}\cdot\mathcal{D}^{\sizeprog{\prog}}\cdot (n+1)^{\sizeprog{\prog}\eta}}
\end{align*}
\end{proof}

\begin{figure}[t]
    \centering 
    \addtolength{\jot}{1em} % increase space between rows
    \begin{gather*}
        \sosrule{Assgn}{}{\opstate{\assign{X}{E}}{\val} ~\rightarrow~ \opstate{\done}{\val[X \gets E(\sigma)]}} \\
        \sosrule{Done}{}{\opstate{\done}{\val}~\rightarrow~\opstate{\done}{\sigma}}\qquad \sosrule{Error}{}{\errorstate ~\rightarrow~\errorstate} \\
        \sosrule{Obs-T}{\val\models\guard}{\opstate{\observe{\guard}}{\val} ~\rightarrow~ \opstate{\done}{\sigma}} \qquad
        \ruletag{Obs-f} \frac{\val\not\models\guard}{\opstate{\observe{\guard}}{\val}~\rightarrow~\errorstate }\\
        \sosrule{Seq-1}{\opstate{\prog_1}{\val}~\xrightarrow{p}~\opstate{\prog_1'}{\val'}}{\opstate{\prog_1\fatsemi\prog_2}{\sigma}~\xrightarrow{p}~\opstate{\prog_1'\fatsemi\prog_2}{\val'}} \qquad
        \sosrule{Seq-2}{\opstate{\prog_1}{\val}~\xrightarrow{p}~\opstate{\done}{\val'}}{\opstate{\prog_1\fatsemi\prog_2}{\val}~\xrightarrow{p}~\opstate{\prog_2}{\val'}} \\
        \sosrule{Seq-3}{\opstate{\prog_1}{\sigma}~\rightarrow~\errorstate}{\opstate{\prog_1\fatsemi\prog_2}{\val}~\rightarrow~\errorstate} \qquad
        \sosrule{Sample}{D \in Distr(\N) \qquad D(n) = p}{\opstate{\incr{X}{D}}{\val} ~\xrightarrow{p}~ \opstate{\done}{\val[X \gets \val(X) + n]}} \\
        \sosrule{Choice-L}{}{\opstate{\coinflip{\prog_1}{p}{\prog_2}}{ \val} ~\xrightarrow{p}~\opstate{\prog_1}{\val}} \\
        \sosrule{Choice-R}{}{\opstate{\coinflip{\prog_1}{p}{\prog_2}}{ \val} ~\xrightarrow{1-p}~\opstate{\prog_2}{\val}} \\
        \sosrule{If-T}{\val\models\guard}{\opstate{\ite{\guard}{\prog_1}{\prog_2}}{\val}~\rightarrow~ \opstate{\prog_1}{\val}} \\
        \sosrule{If-F}{\val\not\models\guard}{\opstate{\ite{\guard}{\prog_1}{\prog_2}}{\val}~\rightarrow~ \opstate{\prog_2}{\val}} \\
    \end{gather*}
    \caption{Construction Rules for the Operational Markov Chain Semantics \cite{DBLP:journals/toplas/OlmedoGJKKM18}. $\sigma[X\gets n]$ indicates a state $\sigma'$ where $\sigma'(X) = n$ and $\sigma'(Y) = \sigma(Y)$ for all other variables $Y \in V \setminus\set{X}$.}
\end{figure}

%\newpage
\subsection{Proof of \Cref{thm:opequiv}}\label{sec:op-equiv}

To prove the soundness of the program semantics, we first need a more precise characterization of how the syntactic statements change the semantics of the automaton.  
\begin{definition}[Path]\label{def:paths}%
    We call a sequence of states $P= s_1,\hdots, s_n$ a path on an automaton $\Raut = (Q,M,I,F)$ with $\set{s_1,\hdots,s_n}\subseteq Q$. The weight of a path is defined by\[
    \semanticspath{s_1,\hdots,s_n} = \prod_{i=1}^{n-1} M_{s_{i},s_{i+1}}
    \]
    and $\semanticsapath{s_1,\hdots,s_n} = I_{s_1}\semanticspath{s_1,\hdots,s_n}F_{s_n}$.
    We call a path accepting if $\semanticsapath{s_1,\hdots,s_n}\neq 0$. The length of a path $\lengthpath{P}$ is the length of the sequence. We denote the set of all accepting paths on $\Raut$ by $\APaths{\Raut}$.
\end{definition}

\begin{lemma}[{\cite[Chap.\ 4, Lem.\ 3.2]{weighted-automata}}]\label{lemma:power-transition}%
    For $\Raut = (Q,M,I,F)$ and every $n \in\N$ it holds that 
    \[
M^n_{s,t} = \sum_{P=s,\hdots,t;\lengthpath{P} = n} \semanticspath{P}\qquad \textrm{for all }s,t\in Q
    \]
\end{lemma}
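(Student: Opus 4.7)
The plan is a straightforward induction on $n \in \N$, driven by the recursive definition of matrix exponentiation together with the concatenation behavior of path weights in the sense of \Cref{def:paths}.

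For the base case, I would just unpack the definitions. For the smallest value of $n$ (either $n = 0$, where $M^0$ is the identity matrix and the only path from $s$ to itself is the single-state path of weight $1$—the empty product—while no such path exists when $s \neq t$, or $n = 1$, where $M^1_{s,t} = M_{s,t}$ matches the weight of the unique two-state path $s, t$), the claim reduces to a direct computation with no nontrivial content.

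For the inductive step, assuming the formula for $n$, I would expand using the recursive definition of matrix powers:
\[
    M^{n+1}_{s,t} ~=~ \sum_{u \in Q} M^n_{s,u} \cdot M_{u,t}
    ~=~ \sum_{u \in Q} \Bigl(\sum_{\substack{P = s,\ldots,u \\ \lengthpath{P}=n}} \semanticspath{P} \Bigr) \cdot M_{u,t}~,
\]
invoking the induction hypothesis inside. Distributing the product $M_{u,t}$ into the inner sum (which is justified since $Q$ is finite so these sums are finite in $\fps{\semiring}{\monoid}$, and in any case distributivity holds in complete semirings), and then swapping the two summations, I would obtain a single sum indexed by pairs $(P, u)$ consisting of a length-$n$ path $P = s,\ldots,u$ and a successor transition from $u$ to $t$. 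The key bookkeeping step is the bijection between such pairs and length-$(n{+}1)$ paths from $s$ to $t$: every such path $P' = s_1,\ldots,s_n,s_{n+1}$ factors uniquely as its length-$n$ prefix $s_1,\ldots,s_n$ followed by the final transition $s_n \to s_{n+1} = t$, and by the product definition of path weight in \Cref{def:paths} we have $\semanticspath{P'} = \semanticspath{s_1,\ldots,s_n} \cdot M_{s_n, t}$, so weights respect this factorization.

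The main obstacle is essentially nonexistent: this is a purely algebraic identity, independent of the specific semiring $\fps{\semiring}{\monoid}$ used elsewhere in the paper, and finiteness of $Q$ ensures that all sums over paths of a fixed length are finite. The only subtlety worth flagging is the off-by-one convention for path length ($\lengthpath{P}$ counts states rather than transitions), which must be tracked consistently in the base case and in the decomposition step above; once that is pinned down, everything else follows mechanically.
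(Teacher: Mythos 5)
The paper does not actually prove \Cref{lemma:power-transition}; it is imported by citation from the weighted-automata literature, so there is no in-paper argument to compare against. Your induction is the standard textbook proof and it is correct: the base case is a definition chase, and the inductive step rests on the unique factorization of a path with $n{+}1$ transitions into its prefix with $n$ transitions followed by its final transition, with $\semanticspath{\cdot}$ multiplicative across that split by \Cref{def:paths}; since $Q$ is finite, every sum involved is finite, so only the ordinary semiring axioms (distributivity and commutativity of $+$) are needed and no completeness or $\omega$-continuity assumptions enter. One remark on the off-by-one you flag: it is real, and it is in fact a small inconsistency in the paper itself rather than in your argument. \Cref{def:paths} declares $\lengthpath{P}$ to be the length of the state sequence, under which a path with $\lengthpath{P}=n$ has weight a product of only $n-1$ matrix entries, and the displayed identity would then have to read $M^{n-1}_{s,t}$ on the left. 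The statement is correct under the convention that $\lengthpath{P}=n$ means $n$ transitions (equivalently $n{+}1$ states), which is the reading your base case and your prefix-plus-last-transition decomposition implicitly adopt; your proof establishes that intended version.
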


\begin{lemma}
    For $\Raut = (Q,M,I,F)$ we have $\semanticsaut{\Raut} = \sum_{P\in\APaths{\Raut}} \semanticsapath{P}$.
\end{lemma}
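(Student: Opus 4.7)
The plan is to unfold the definition of $\semanticsaut{\Raut}$ and reduce the claim to the previously established Power Transition Lemma by exchanging sums. Starting from $\semanticsaut{\Raut} = IM^*F$ and using $M^* = \sum_{n\in\N} M^n$, I would first apply distributivity of the complete semiring to rewrite
\[
\semanticsaut{\Raut} ~=~ \sum_{n\in\N} IM^nF ~=~ \sum_{n\in\N}\, \sum_{s,t \in Q} I_s \cdot M^n_{s,t} \cdot F_t .
\]

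Next, I would invoke the Power Transition Lemma to substitute
\[
M^n_{s,t} ~=~ \sum_{P=s,\ldots,t;\, \lengthpath{P}=n} \semanticspath{P} .
\]
After this substitution, commuting the resulting (nested) sums---which is legal because $\semiring$ is complete and $\omega$-continuous---yields a single sum indexed by all finite paths $P \in Q^+$, where each summand has the form $I_{\text{first}(P)} \cdot \semanticspath{P} \cdot F_{\text{last}(P)}$. By \Cref{def:paths}, this summand is exactly $\semanticsapath{P}$, so
\[
\semanticsaut{\Raut} ~=~ \sum_{P \in Q^+} \semanticsapath{P} .
\]

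Finally, I would observe that every non-accepting path $P$ satisfies $\semanticsapath{P} = 0$ by definition, and hence contributes the zero element to the semiring sum. Therefore restricting the index set from $Q^+$ to $\APaths{\Raut}$ preserves the value, establishing the claim. The only subtlety is justifying the interchange of the (possibly infinite) sums over $n$, over pairs $(s,t)$, and over paths of length $n$; this is standard for $\omega$-continuous semirings as formalised in~\cite{DBLP:reference/hfl/Kuich97} and is the one place where the assumptions on $\semiring$ are actually used. Beyond that, the proof is routine bookkeeping.
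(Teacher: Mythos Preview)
Your proposal is correct and essentially matches the paper's proof: both unfold $IM^*F$, expand the matrix product over $s,t\in Q$ and the star over path lengths, invoke \Cref{lemma:power-transition}, and then collapse the nested sums into a single sum over paths using distributivity in the complete semiring. The only cosmetic difference is the order in which the sums over $n$ and over $(s,t)$ are introduced, and you are slightly more explicit about why non-accepting paths may be dropped.
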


\begin{proof}
    \begin{align*}
          \semanticsaut{\Raut} &= IM^*F &\\
        &= \sum_{s,t \in Q} I_s\bigl(M^*\bigr)_{s,t}F_t &\\
        &= \sum_{s,t \in Q} I_s\Bigl(\sum_{i\geq 0} M^i\Bigr)_{s,t}F_t &\\
        &= \sum_{s,t \in Q} I_s\Bigl(\sum_{i\geq 0} M_{s,t}^i\Bigr)F_t& \\
        &= \sum_{s,t\in Q} I_s \Bigl(\sum_{i\geq 0} \sum_{P=s,\hdots,t; |P| = i} \semanticspath{P}\Bigr) F_t && \text{(\Cref{lemma:power-transition})}
    \end{align*}
    We can see that we enumerate all possible paths starting with increasing length and multiplied by the initial and final weight of $s$ and $t$ respectively. By distributivity, we can see that this is equal to the sum of all accepting paths by \Cref{def:paths}. 
\end{proof}
We now examine how the semantics of the individual automata constructions change the semantics of the underlying PGF represented by the input PGA.
\begin{lemma}\label{lemma:semantics-constructions}%
       For $i\in\set{0,1}, p,q\in [0,1], X,Y \in V$ and $\Raut, \Raut_1,\Raut_2$ normalized PGA we have:
       \begin{enumerate}
        \item $\semanticsaut{\Raut[X/i]} = \semanticsaut{\Raut}[X/i]$
        \item $\semanticsaut{\Raut_1[Y/\Raut_2]} = \semanticsaut{\Raut_1}[Y/\semanticsaut{\Raut_2}]$
        \item $\semanticsaut{\probchoice{\Raut_1}{\Raut_2}{p}{q}} = p\semanticsaut{\Raut_1} + q\semanticsaut{\Raut_2}$ 
        \item $\semanticsaut{\Raut_1\concat\Raut_2} = \semanticsaut{\Raut_1}\cdot\semanticsaut{\Raut_2}$
        \item $\semanticsaut{\Raut^{\decr{X}}} = \semanticsaut{\Raut\times\Baut_{X>0}}X^{-1} + \semanticsaut{\Raut}[X/0]$
    \end{enumerate}
\end{lemma}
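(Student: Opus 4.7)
The plan is to prove the five identities in \Cref{lemma:semantics-constructions} by exploiting the path-based characterization $\semanticsaut{\Raut} = \sum_{P \in \APaths{\Raut}} \semanticsapath{P}$ established by the preceding lemmas. In each case, I will set up a weight-preserving correspondence between accepting paths in the constructed automaton on the left-hand side and suitably structured collections of paths in the input automata, and then verify that the monomial contributions agree.

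Parts 3 and 4 are classical facts about weighted automata over $\omega$-continuous semirings. The weighted disjoint union $\probchoice{\Raut_1}{\Raut_2}{p}{q}$ partitions its accepting paths into those inherited from $\Raut_1$ (with initial weights scaled by $p$) and those from $\Raut_2$ (scaled by $q$), yielding the corresponding linear combination of behaviors. The concatenation $\Raut_1 \concat \Raut_2$ realizes the semiring product, because every accepting path factors uniquely as a $\Raut_1$-path ending in a final state followed by a $\Raut_2$-path beginning in an initial state. Both are instances of \cite[Thm.~4.6]{DBLP:reference/hfl/Kuich97}, so the proof amounts to checking that the constructions of \Cref{fig:concatUnion} match Kuich's verbatim.

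For part 1, every accepting path $P$ in the normalized PGA $\Raut$ contributes a monomial $a \prod_{Z \in \V} Z^{k_Z}$ to $\semanticsaut{\Raut}$, where $k_Z$ is the number of $Z$-transitions along $P$ and $a \in \Rgez$ collects the initial weight, the scalar transition coefficients, and the final weight. The automaton $\labelSubs{\Raut}{X}{i}$ has exactly the same paths, but each $X$-transition of weight $rX$ is reweighted to $r \cdot i$, so its contribution becomes $a \cdot i^{k_X} \prod_{Z \neq X} Z^{k_Z}$; summing over accepting paths yields $\semanticsaut{\Raut}[X/i]$. For part 2, an accepting path in $\labelSubs{\Raut_1}{Y}{\Raut_2}$ decomposes uniquely into a ``spine'' accepting path of $\Raut_1$ together with, for each of its $Y$-transitions, a choice of accepting $\Raut_2$-path through the inserted gadget. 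Distributivity of product over sum in $\fps{\Rgezinf}{\N^\V}$ lets us sum over these choices, producing $\semanticsaut{\Raut_1}$ with every occurrence of $Y$ replaced by $\semanticsaut{\Raut_2}$.

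Part 5, the decrement identity, will be the main obstacle, as it requires carefully unpacking the two-branch structure of $\Raut^{\decr{X}}$ from \Cref{def:decr}. The $\labelSubs{\Raut}{X}{0}$ summand contributes $\semanticsaut{\Raut}[X/0]$ by part 1 (instantiated at $i = 0$). For the other summand, the crucial observation is that the two states of $\Baut_{X>0}$ are joined by exactly one $X$-edge and that the loop at the initial state reads only symbols other than $X$; hence every accepting path in $\Raut \times \Baut_{X>0}$ crosses from the initial-layer to the final-layer via exactly one transition of the form $(q,s) \xrightarrow{rX} (u,t)$. The decrement construction strips the $X$-factor from precisely these transitions, reducing each accepting path's $X$-exponent by one. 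By \Cref{thm:filter}, the support of $\semanticsaut{\Raut \times \Baut_{X>0}}$ lies in $\set{\val \in \N^\V \mid \val(X) \geq 1}$, so multiplication by $X^{-1}$ is well-defined on this FPS. The path correspondence then gives $\semanticsaut{\Raut \times \Baut_{X>0}}\,X^{-1}$ for this summand, and combining the two summands via part 3 (with $p = q = 1$) yields the claimed identity.
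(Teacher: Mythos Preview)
Your proposal is correct and follows essentially the same path-based approach as the paper's proof: both arguments rely on the characterization $\semanticsaut{\Raut} = \sum_{P \in \APaths{\Raut}} \semanticsapath{P}$ and establish, for each construction, a weight-preserving correspondence between accepting paths of the built automaton and (combinations of) accepting paths of the inputs. Your treatment of part~1 is slightly more uniform (handling $i \in \{0,1\}$ via the single expression $a \cdot i^{k_X}$ rather than two separate cases), and you defer parts~3 and~4 to \cite[Thm.~4.6]{DBLP:reference/hfl/Kuich97} where the paper spells them out briefly, but the substance is the same in all five parts.
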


\begin{proof}
Let $p,q\in [0,1], X,Y \in V$ and $\Raut, \Raut_1,\Raut_2$ normalized PGA.

\noindent\textit{Case 1}: Let $P \in\APaths{\Raut}$. For $i = 0$, we can see that that the corresponding path $P'$ is only accepting on $\Raut[X/0]$ if no transition on $P$ contains any $X$-transition. Hence
\begin{align*}
        \semanticsaut{\Raut[X/0]} &= \sum_{P'\in\APaths{\Raut[X/0]}} \semanticsapath{P'} &\\
    &= \sum_{P\in\APaths{\Raut}, P \textrm{ has no }X \textrm{-transition}} \semanticsapath{P} &\\
    &= \semanticsaut{\Raut}[X/0] && \text{(\cite[Lem.\ D.1]{redip})}
\end{align*}
We can also see that the corresponding path $P'$ is accepting on $\APaths{\Raut[X/1]}$ with $\semanticsapath{P} = \semanticsapath{P'}X^{n_P}$ where $n_P$ is the number of $X$-transitions of $P$. Hence

\begin{align*}
      \semanticsaut{\Raut}[X/1] &= \left(\sum_{P \in \APaths{\Raut}} \semanticsapath{P}\right)[X/1] & \\
  &= \left(\sum_{P'\in\APaths{\Raut[X/1]}}\semanticsapath{P'}X^{n_P} \right)[X/1] \\
  &= \sum_{P' \in\APaths{\Raut[X/1]}}\semanticsapath{P'} & \\
  &= \semanticsaut{\Raut[X/1]} 
\end{align*}

\noindent\textit{Case 2}: Let $P = s_1,\hdots, s_n \in \APaths{ \Raut_1}$ and $P'=s'_1,\hdots,s'_{n'} \in\APaths{\Raut_2}$. We can see that if $P$ contains at least one $Y$-transition, we can take an arbitrary accepting path from $ \Raut_2$ and end up back in $\Raut_1$ (by construction). Assume $P$ has $m$ $Y$-transition, indexed by $i_1,\hdots,i_m$ with $i_1 < \cdots < i_m$. Hence the corresponding new path (assuming we always choose $P'$) is given by 
\[
 P[Y/P'] := s_0,\hdots,s_{i_1},\hdots,s'_1,\hdots,s'_{n'},s_{i+1}+1,\hdots,s_{i_m},s'_1,\hdots,s'_{n'},s_{i_m}+1,\hdots,s_n
\]
We can see that $\semanticsapath{P[Y/P']} = \semanticsapath{P}[Y/1]\prod_{m} \semanticsapath{P'}$. 
Since $P'$ is arbitrary and can be chosen independently each time, we can now examine the values of the accepting paths that are \say{obtained} from $P$: 
\begin{align*}
&\quad \sum_{P_1\in\APaths{\Raut_2}}\cdots\sum_{P_m \in\APaths{\Raut_2}} \semanticsapath{P}[Y/1]\prod_{i=1}^m \semanticsapath{P_i}  \\
=&\quad\semanticsapath{P}[Y/1]\Bigl(\sum_{P_1\in\APaths{\Raut_2}}\cdots\sum_{P_m \in\APaths{\Raut_2}} \prod_{i=1}^m \semanticsapath{P_i}\Bigr) \\
=&\quad \semanticsapath{P}\Bigl[Y/\sum_{P\in\APaths{\Raut_2}}\semanticsapath{P}\Bigr]\\
=&\quad \semanticsapath{P}[Y/\semanticsaut{\Raut_2}]
\end{align*}
Thus we have \begin{align*}
    \semanticsaut{\Raut[Y/\Raut_2]} &= \sum_{P\in\APaths{\Raut[Y/\Raut_2]}} \semanticsapath{P} \\
    &= \sum_{P'\in\APaths{\Raut}} \semanticsapath{P}[Y/\semanticsaut{\Raut_2}] \\
    &= \semanticsaut{\Raut_1}[Y/\semanticsaut{\Raut_2}]
\end{align*}

\noindent\textit{Case 3}: Let $P_1\in\APaths{\Raut_1}$ and $P_2 \in\APaths{\Raut_2}$. We can see that the corresponding paths $P_1', P_2' \in\APaths{\probchoice{\Raut_1}{\Raut_2}{p}{q}}$ with $\semanticsapath{P_1} = p\semanticsapath{P_1'}$ and $\semanticsapath{P_2} = q\semanticsaut{P'_2}$. Hence
\begin{align*}
    \semanticsaut{\probchoice{\Raut_1}{\Raut_2}{p}{q}} &= \sum_{P \in\APaths{\probchoice{\Raut_1}{\Raut_2}{p}{q}}} \semanticsapath{P} \\
    &= \sum_{P_1 \in\APaths{\Raut_1}} p \semanticsapath{P_1} ~+~ \sum_{P_2\in\APaths{\Raut_2}} q\semanticsapath{P_2} \\
    &= p\cdot\sum_{P_1 \in\APaths{\Raut_1}} \semanticsapath{P_1} ~+~ q\cdot\sum_{P_2 \in\APaths{\Raut_2}} \semanticsapath{P_2}  \\
    &= p\semanticsaut{\Raut_1} + q\semanticsaut{\Raut_2}
\end{align*}

\noindent\textit{Case 4}: We can see that for every $P_1 \in\APaths{\Raut_1}, P_2 \in\APaths{\Raut_2}$ there exists a path $P \in\APaths{\Raut_1\concat\Raut_2}$ with $\semanticsapath{P} = \semanticsapath{P_1}\cdot\semanticsapath{P_2}$. Hence
\begin{align*}
    \semanticsaut{\Raut_1\concat\Raut_2} &= \sum_{P\in\APaths{\Raut_1\concat \Raut_2}} \semanticsapath{P} \\
    &= \sum_{P_1\in\APaths{\Raut_1}, P_2 \in\APaths{\Raut_2}} \semanticsapath{P_1}\cdot \semanticsapath{P_2} \\
    &= \left(\sum_{P_1\in\APaths{\Raut_1}} \semanticsapath{P_1}\right)\left(\sum_{P_2\in\APaths{\Raut_2}}\semanticsapath{P_2}\right) \\
    &= \semanticsaut{\Raut_1}\cdot\semanticsaut{\Raut_2}
\end{align*}

\noindent\textit{Case 5}: Let $P \in\APaths{\Raut}$. Then there are two possibilities: 
\begin{enumerate}
    \item $P$ contains at least one $X$-transition
    \item $P$ contains zero $X$-transitions
\end{enumerate}
We can see that we \say{partition} the set of all accepting paths using the product with $\Baut_{X>0}$, i.e.\ only accepting paths of Case 1 are considered included in $\Raut\times\Baut_{X>0}$ and all other accepting paths are considered in $\Raut[X/0]$. Assume $P\in\APaths{\Raut\times\Baut_{X>0}}$ with $P = s_0,\hdots,s_n$. Let $i_1,\hdots,i_m \in\set{1,\hdots, n-1}$ such that $\supp{M_{i_j,i_j + 1}} \supseteq\set{X}$ (for all $j \in \set{1,\hdots, m}$), i.e.\ the indices of all outgoing $X$-transitions, with $i_1 < \cdots < i_m$. By construction of $\Raut^{\decr{X}}$ we substitute $X$ in the \emph{first} $X$-transition (i.e.\ the transition $M_{i_1,i_1 +1}$) by 1 (resulting in $P'$). We therefore have $m -1$ $X$-transitions remaining.  Thus $\semanticsapath{P} = \semanticsapath{P'}X^{-1}$. All accepting paths satisfying $X=0$ are collected in $\Raut[X/0]$ and not modified further. Hence we have $\semanticsaut{\Raut^{\decr{X}}} = \semanticsaut{\Raut\times\Baut_{X>0}}X^{-1} + \semanticsaut{\Raut}[X/0]$.
\end{proof}

We now examine the reachability probabilities in the Markov chain as defined in  \Cref{def:operationalMarkovChain}. 
\begin{lemma}\label{thm:non-norm-equiv}%
Let $\Raut$ be a normalized PGA. Then for any \redip program $\prog$ (without \texttt{iid}-statements) and $\sigma'\in\N^V$:
\begin{enumerate}
\item $\Pr^{\markovchain{\Raut}{\prog}}(\Diamond\langle \downarrow, \sigma'\rangle) = \semanticsaut{\semantics{\prog}(\Raut)}(\sigma')$
\item $\Pr^{\markovchain{\Raut}{\prog}}(\Diamond\errorstate) = \probmass{\Raut} -  \probmass{\semantics{\prog}(\Raut)}$
\end{enumerate}
\end{lemma}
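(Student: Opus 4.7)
I propose to prove \Cref{thm:non-norm-equiv} by structural induction on $\prog$, treating both parts simultaneously. The first step is a \emph{linearity} observation: since $\markovinit(\opstate{\prog}{\sigma}) = \semanticsaut{\Raut}(\sigma)$ and each trajectory starts in exactly one such state, the law of total probability gives
\[
\Pr^{\markovchain{\Raut}{\prog}}(\Diamond t) ~=~ \sum_{\sigma\in\N^V}\semanticsaut{\Raut}(\sigma)\cdot\Pr^{\markovchain{\Raut_\sigma}{\prog}}(\Diamond t)
\]
for any target state $t$, where $\Raut_\sigma$ is any normalized PGA encoding the Dirac on $\sigma$. On the automata side, \Cref{lemma:semantics-constructions} implies that every construction in \Cref{tab:semantics} is linear in $\semanticsaut{\Raut}$, so it suffices to prove Part~1 for Dirac inputs $\Raut=\Raut_\sigma$. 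Part~2 then follows from Part~1 together with termination of loop-free \redip: every trajectory in $\markovchain{\Raut}{\prog}$ is absorbed either in some $\opstate{\downarrow}{\sigma'}$ or in $\errorstate$, so $\sum_{\sigma'}\Pr(\Diamond\opstate{\downarrow}{\sigma'}) + \Pr(\Diamond\errorstate) = \probmass{\Raut}$, and substituting Part~1 yields Part~2.

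For the base cases---$\assign{X}{0}$, $\incr{X}{n}$, $\incr{X}{D}$, $\incr{X}{Y}$, $\decr{X}$, and $\observe{\guard}$---I would unfold one (or, for $\decr{X}$, two) steps of the Markov chain using the rules in \Cref{fig:opsem} and match the resulting transition probabilities against the outgoing weights of the PGA construction listed in \Cref{tab:semantics}. Each atomic case boils down to an application of the appropriate clause of \Cref{lemma:semantics-constructions}, with \Cref{thm:filter} handling the guard-product in the $\observe{\guard}$ and $\decr{X}$ cases. The $\observe{\guard}$ case is the only base case in which $\errorstate$ is populated, and exactly the Dirac mass $\semanticsaut{\Raut_\sigma}(\sigma)$ with $\sigma\not\models\guard$ is routed there, matching the coefficient dropped by $\Raut_\sigma\times\Baut_\guard$.

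The inductive cases rely on first-step analysis and the Markov property. For $\coinflip{\prog_1}{p}{\prog_2}$ and $\ite{\guard}{\prog_1}{\prog_2}$ I decompose trajectories by their first transition (rules \textsc{Choice-L}/\textsc{Choice-R} and \textsc{If-T}/\textsc{If-F}, respectively), apply the IH to each resulting branch, and recognize the sum as the disjoint-union construction via \Cref{lemma:semantics-constructions}(3) combined with \Cref{thm:filter}. The crucial case is sequential composition $\prog_1\fatsemi\prog_2$: I split each trajectory at the moment where $\prog_1$ either terminates (rule \textsc{Seq-2}) or throws (\textsc{Seq-3}), yielding
\[
\Pr^{\markovchain{\Raut_\sigma}{\prog_1\fatsemi\prog_2}}(\Diamond\opstate{\downarrow}{\sigma'}) ~=~ \sum_{\sigma''}\Pr^{\markovchain{\Raut_\sigma}{\prog_1}}(\Diamond\opstate{\downarrow}{\sigma''})\cdot\Pr^{\markovchain{\Raut_{\sigma''}}{\prog_2}}(\Diamond\opstate{\downarrow}{\sigma'}),
\]
after which the IH applied to both factors, together with linearity, identifies the right-hand side as $\semanticsaut{\semantics{\prog_2}(\semantics{\prog_1}(\Raut_\sigma))}(\sigma') = \semanticsaut{\semantics{\prog_1\fatsemi\prog_2}(\Raut_\sigma)}(\sigma')$.

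The main technical obstacle is justifying the splitting identity above. Because rule \textsc{Seq-1} interleaves $\prog_1$-steps with an intact continuation $\prog_2$, I must argue that the sub-chain on states of the form $\opstate{\prog_1'\fatsemi\prog_2}{\cdot}$ is, via projection onto the first component, isomorphic to the chain on $\opstate{\prog_1'}{\cdot}$ up until $\prog_1$ reduces to $\downarrow$ (handing control to $\prog_2$) or to $\errorstate$. Isolating this as a standalone lemma about prefixes of operational chains is the bookkeeping-heavy step. A secondary subtlety is $\decr{X}$, where the ``monus'' convention forces a two-step operational reading (implicit branching on $X>0$, then decrementing); here I would combine \Cref{lemma:semantics-constructions}(5) with the observation that multiplying the $X>0$-filtered portion of the PGF by $X^{-1}$ shifts the $X$-marginal down by one, exactly as \Cref{def:decr} prescribes on the automaton.
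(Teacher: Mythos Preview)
Your proposal is correct and follows the same structural-induction skeleton as the paper: decompose the initial distribution into Diracs $\Raut_\sigma$, unfold one step of the operational chain per case, and match against the PGA construction via \Cref{lemma:semantics-constructions} (plus \Cref{thm:filter} for guards). The one genuine difference is your treatment of Part~2: you derive it once, globally, from Part~1 and the fact that the loop-free operational chain is absorbing, whereas the paper re-establishes Part~2 by explicit calculation in every case of the induction. Your route is shorter and avoids the mildly tedious mass-bookkeeping the paper does for $\coinflip{\prog_1}{p}{\prog_2}$, $\ite{\guard}{\prog_1}{\prog_2}$, and $\prog_1\fatsemi\prog_2$; the price is that you must separately argue almost-sure absorption of $\markovchain{\Raut}{\prog}$, which is immediate here (the chain depth is bounded by $\sizeprog{\prog}$) but worth stating. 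One small inaccuracy: the operational semantics executes $\decr{X}$ in a \emph{single} \textsc{Assgn} step with $E = X \monus 1$, not two; the branching on $X>0$ you describe is a case split in the proof, not in the chain.
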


\begin{proof}
We prove this claim by structural induction on $\prog$. Let $\sigma\in\N^V$ with $\semanticsaut{\Raut}(\sigma) = p$.

\noindent\textit{Case }$\prog = \texttt{x := E}$:
\begin{center}
\begin{tikzpicture}[node distance=3.5cm]
\node (i) {$\opstate{\mathtt{x += D}}{\val}$};
\node[phantom] (init) [left=0.5cm of i] {};

\node[right of =i] (term) {$\opstate{\done}{
\val[X\gets E(\val)]}$};

\path[->] (init) edge node[above] {$p$} (i);
\path[->] (i) edge (term);
\end{tikzpicture}
\end{center}
We can see that 
\[
\markovprob{\mathtt{x := E}}{\Diamond\langle\downarrow,\sigma'\rangle} = \begin{cases}
p & \sigma' = \sigma[X\gets E(\sigma)] \\
0 & else 
\end{cases}
\]
Case distinction over the allowed types of expressions:

\textit{E = 0}: Then 

\begin{align*}
    \semanticsaut{\semantics{\mathtt{x := 0}}(\Raut)}(\sigma') &= \semanticsaut{\Raut[X/1]}(\sigma') &\\
 &= \bigl(\semanticsaut{\Raut}[X/1]\bigr)(\sigma') && \text{(\Cref{lemma:semantics-constructions}(1))}\\
 &= \begin{cases}
\semanticsaut{\Raut}(\sigma) & \sigma' = \sigma[X\gets 0] \\
0 & else
\end{cases}    
\end{align*}

\textit{E = x + n}: Then 

\begin{align*}
    \semanticsaut{\semantics{\mathtt{x += n}}(\Raut)}(\sigma') &= \semanticsaut{\Raut\concat\Raut_{\dirac{n}{X}}}(\sigma') \\
&= \bigl(\semanticsaut{\Raut}\cdot\semanticsaut{\Raut_{\dirac{n}{X}}}\bigr)(\sigma') && \text{(\Cref{lemma:semantics-constructions}(4))} \\
&= \bigl(\semanticsaut{\Raut}\cdot X^n\bigr)(\sigma') \\
&= \begin{cases}
\semanticsaut{\Raut}(\sigma) & \sigma' = \sigma[X\gets \sigma(X) + n] \\
0 & else
\end{cases}  
\end{align*}

\textit{E = x + y}: Then
\begin{align*}
    \semanticsaut{\semantics{\mathtt{x += y}}(\Raut)}(\sigma') &= \semanticsaut{\Raut[Y/\inlineTransTwo{Y}{X}]}(\sigma') \\
&= \bigl(\semanticsaut{\Raut}[Y/\semanticsaut{\inlineTransTwo{Y}{X}}]\bigr)(\sigma') && \text{(\Cref{lemma:semantics-constructions}(2))} \\
&= \bigl(\semanticsaut{\Raut}[Y/YX]\bigr)(\sigma') \\
&= \begin{cases}
\semanticsaut{\Raut}(\sigma) & \sigma' = \sigma[X \gets \sigma(X) + \sigma(Y)] \\
0 & else
\end{cases}   
\end{align*}
%\newpage
\textit{E = x $\monus$ 1}, where $a \monus b = \max \set{a - b, 0}$: Then
\begin{align*}
\semanticsaut{\semantics{\decr{x}}(\Raut)}(\sigma') &= \semanticsaut{\Raut^{\decr{X}}}(\sigma') \\
&= \bigl(\semanticsaut{\Raut\times\Baut_{X>0}}X^{-1} + \semanticsaut{\Raut}[X/0]\bigr)(\sigma') & \text{(\Cref{lemma:semantics-constructions}(5))} \\
&= \begin{cases}
\semanticsaut{\Raut}(\sigma) & \sigma' = \sigma[X \gets \sigma(X) \monus 1] \\
0 & else
\end{cases}
\end{align*}
For all expressions $E$ as distinguished above, we have
\[
\markovprob{\mathtt{x := E}}{\Diamond\errorstate} = 0 = \displaystyle\probmass{\Raut} - \underbrace{\probmass{\semantics{\mathtt{x := E}}(\Raut)}}_{= \probmass{\Raut}}
\]
\noindent\textit{Case }$\prog = \texttt{x += D}$: Assume $D \in {Distr}(\N)$ and $D(n) = q$.
\begin{center}
\begin{tikzpicture}[node distance=3.5cm]
\node (i) {$\opstate{\mathtt{x += D}}{\val}$ };
\node[phantom] (init) [left=0.5cm of i] {};

\node[right of =i] (term) {$\opstate{\done}{\val[X\gets \val(X) + n]}$};

\path[->] (init) edge node[above] {$p$} (i);
\path[->] (i) edge node[above]{$q$}(term);
\end{tikzpicture}
\end{center}
We infer $\markovprob{\mathtt{x += D}}{\Diamond\langle \downarrow,\sigma'\rangle} = \begin{cases}
pq & \sigma' = \sigma[X \gets \sigma(X) + n] \land D(n) = q \\
0 & else
\end{cases}$

\noindent Let $\sigma_{X=n}$ such that $\sigma_{X=n}(X) = n$.
\begin{align*}
    &\semanticsaut{\semantics{\mathtt{x += D}}(\Raut)}(\sigma') \\
=\;& \semanticsaut{\Raut \concat \Raut_{D_X}} \\
=\;& \semanticsaut{\Raut}\cdot\semantics{\Raut_{D_X}} && \text{(\Cref{lemma:semantics-constructions}(4))}\\
=\;& \begin{cases}
\semanticsaut{\Raut}(\sigma)\cdot \semanticsaut{\Raut_{D_X}}(\sigma_{X=n}) & \sigma' = \sigma[X \gets \sigma(X) + n] \\
0 & else
\end{cases} \\[1.25em]
=\;& \begin{cases}
pq & \sigma' = \sigma[X \gets \sigma(X) + n] \land D(n) = q \\
0 & else
\end{cases}
\end{align*}
Since $\probmass{\Raut_{D_X}} = 1$ by requirement we have
\[
\markovprob{\mathtt{x += D}}{\Diamond\errorstate} = 0 = \displaystyle\probmass{\Raut} - \underbrace{\probmass{\semantics{\mathtt{x += D}}(\Raut)}}_{= \probmass{\Raut}}
\]
\noindent\textit{Case }$\prog = \observe{\guard}$: Assume $\sigma \models \guard$. Then
\begin{center}
\begin{tikzpicture}[node distance=3.5cm]
\node (i) {$\opstate{\mathtt{\observe{\guard}}}{\val}$ };
\node[phantom] (init) [left=0.5cm of i] {};

\node[right of =i] (term) {$\opstate{\done}{\val}$};
\path[->] (init) edge node[above] {$p$} (i);
\path[->] (i) edge(term);
\end{tikzpicture}
\end{center}
and thus \[
\markovprob{\observe{\guard}}{\Diamond\opstate{\done}{\val'}} = \begin{cases}
p & \sigma' = \sigma \\
0 & else
\end{cases}
\]
By \Cref{thm:filter} we have
\begin{align*}
\semanticsaut{\semantics{\observe{\guard}}(\Raut)}(\sigma') &= \semanticsaut{\Raut \times \Baut_{\guard}}(\sigma') \\
&= \begin{cases}
p & \sigma' = \sigma\\
0 & else
\end{cases}
\end{align*}
Assume $\sigma\not\models\guard$.
\begin{center}
\begin{tikzpicture}[node distance=3.5cm]
\node (i) {$\opstate{\mathtt{\observe{\guard}}}{\val}$ };
\node[phantom] (init) [left=0.5cm of i] {};

\node[right of =i] (term) {$\errorstate$};

\path[->] (init) edge node[above] {$p$} (i);
\path[->] (i) edge(term);

\end{tikzpicture}
\end{center}
We can see that $\markovprob{\observe{\guard}}{\Diamond\langle\downarrow,\sigma'\rangle} = 0 = \semanticsaut{\Raut\times\Baut_{\guard}}(\sigma')$. It is also clear that $\markovprob{\observe{\guard}}{\Diamond\errorstate} = \sum_{\sigma\not\models\guard} \semanticsaut{\Raut}(\sigma)$, i.e.\ the sum of initial probabilities of all states that violate the guard. Hence,

\begin{align*}
\probmass{\Raut} - \probmass{\semantics{\observe{\guard}}(\Raut)}&= \probmass{\Raut} - \probmass{\Raut\times\Baut_{\guard}} \\
&= \probmass{\Raut} - \sum_{\sigma\models\guard} \semanticsaut{\Raut}(\sigma) \\
&= \sum_{\sigma\not\models\guard} \semanticsaut{\Raut}(\sigma)
\end{align*}

\noindent\textit{Induction Hypothesis (IH)}: We assume that for any normalized PGA $\Raut$, \redip program $\prog$ (without \texttt{iid}-statements) and $\sigma' \in\N^V$ we have $\markovprob{\prog}{\Diamond\opstate{\done}{\val'}} = \semanticsaut{\semantics{\prog}(\Raut)}(\sigma')$ and $\markovprob{\prog}{\Diamond\errorstate} = \probmass{\Raut} - \probmass{\semantics{\prog}(\Raut)}$.

\noindent\textit{Case }$\prog = \coinflip{\prog_1}{q}{\prog_2}$:

Let $q \in [0,1]$.
\begin{center}
\begin{tikzpicture}[node distance=2.5cm]
\node (i) {$\opstate{\coinflip{\prog_1}{q}{\prog_2}}{\val}$ };
\node[phantom] (init) [left of=i] {};

\node[above right of=i] (p1) {$\opstate{\prog_1}{\val}$};
\node[phantom, right of=p1] (php1) {$\hdots$};

\node[below right of=i] (p2) {$\opstate{\prog_2}{\val}$};

\node[phantom, right of=p2] (php2) {$\hdots$};

\path[->] (init) edge node[above] {$p$} (i);
\draw[edge, bend left] (i) to node[left]{$q$} (p1);
\draw[edge, bend right] (i) to node[left]{$1-q$} (p2);

\draw[->, decorate, decoration={snake}]
    (p1) to (php1);
    
\draw[->, decorate, decoration={snake}]
    (p2) to (php2);
\end{tikzpicture}
\end{center}
Let $\Raut_\sigma$ be a normalized PGA s.t.\ $\semanticsaut{\Raut_\sigma}(\sigma) = 1$.
We have
\begin{align*}
    &\markovprob{\coinflip{\prog_1}{q}{\prog_2}}{\Diamond\langle\downarrow,\sigma'\rangle} \\
=\;& \sum_{\sigma\in\N^V} \Bigl( \semanticsaut{\Raut}(\sigma)\cdot\bigl(q\cdot\mathrm{Pr}^{\markovchain{\Raut_\sigma}{\prog_1}}(\Diamond\langle\downarrow,\sigma'\rangle) + (1-q) \mathrm{Pr}^{\markovchain{\Raut_\sigma}{\prog_2}}(\Diamond\langle\downarrow,\sigma'\rangle)\bigr) \Bigr)\\
=\;& \probmass{\Raut}\cdot\bigl(q\cdot \semanticsaut{\semantics{\prog_1}(\Raut_\sigma)}(\sigma') + (1-q) \semanticsaut{\semantics{\prog_2}(\Raut_\sigma)}(\sigma')\bigr) \quad\quad\text{(IH)}\\
=\;& \probmass{\Raut}\cdot q\cdot \semanticsaut{\semantics{\prog_1}(\Raut_\sigma)}(\sigma') + \semanticsaut{\Raut}(\sigma)(1-q) \semanticsaut{\semantics{\prog_2}(\Raut_\sigma)}(\sigma') \\
=\;& \probmass{\Raut}\cdot q\cdot \semanticsaut{\semantics{\prog_1}(\Raut_\sigma)}(\sigma') + \sum_{\sigma\in\N^V}\semanticsaut{\Raut}(\sigma)(1-q) \semanticsaut{\semantics{\prog_2}(\Raut_\sigma)}(\sigma') \\
=\;& q\left(\sum_{\sigma\in\N^V} \semanticsaut{\Raut}(\sigma)\cdot\semanticsaut{\semantics{\prog_1}(\Raut_\sigma)}(\sigma')\right) + (1-q)\left(\sum_{\sigma\in\N^V}\semanticsaut{\Raut}(\sigma) \cdot\semanticsaut{\semantics{\prog_2}(\Raut_\sigma)}(\sigma')\right)  \\
=\;& q\cdot \semanticsaut{\semantics{\prog_1}(\Raut)}(\sigma') + (1-q)\semanticsaut{\semantics{\prog_2}(\Raut)}(\sigma') \\
=\;& \semanticsaut{\probchoice{\semantics{\prog_1}(\Raut)}{\semantics{\prog_2}(\Raut)}{q}{1-q}}(\sigma') \\
=\;& \semanticsaut{\semantics{\coinflip{\prog_1}{q}{\prog_2}}(\Raut)}(\sigma')
\end{align*}
For $\markovprob{\coinflip{\prog_1}{q}{\prog_2}}{\Diamond\errorstate}$, we have
\begin{align*}
     &\markovprob{\coinflip{\prog_1}{q}{\prog_2}}{\Diamond\errorstate} \\
=\;& q\cdot \markovprob{\mathtt{\prog_1}}{\Diamond\errorstate} + (1 - q)\markovprob{\mathtt{\prog_1}}{\Diamond\errorstate} \\
=\;& q\Bigl(\probmass{\Raut} - \probmass{\semantics{\prog_1}(\Raut)}\Bigr) + \\
&\quad (1-q)\Bigl(\probmass{\Raut} - \probmass{\semantics{\prog_2}(\Raut)}\Bigr) &\text{(IH)}\\
=\;& q\probmass{\Raut} - q\probmass{\semantics{\prog_1}(\Raut)} + \probmass{\Raut} - \\
&\quad\probmass{\semantics{\prog_2}(\Raut)} 
 - q\probmass{\Raut} + \probmass{\semantics{\prog_2}(\Raut)} \\
=\;& \probmass{\Raut} - q\probmass{\semantics{\prog_1}(\Raut)} - (1-q)\semanticsaut{\semantics{\prog_2}(\Raut)} \\
=\;& \probmass{\Raut} - \probmass{\semantics{\coinflip{\prog_1}{q}{\prog_2}}(\Raut)}
\end{align*}

\noindent\textit{Case }$\prog =
\ite{\guard}{\prog_1}{\prog_2}$: We can see that the probability of reaching a final configuration $\opstate{\done}{\sigma'}$ with the \texttt{if}-statement corresponds to the sum of reaching $\opstate{\done}{\sigma'}$ for each $\sigma \in\N^V$ with $\prog_1$ if $\sigma\models\guard$ or with $\prog_2$ if $\sigma\not\models\guard$. 
\begin{align*}
    &\markovprob{\ite{\guard}{\prog_1}{\prog_2}}{\Diamond\langle\downarrow,\sigma'\rangle} \\
=\;& \sum_{\sigma \in\N^V,\sigma\models\guard} \semanticsaut{\Raut}(\sigma)  \cdot \mathrm{Pr}^{\markovchain{\Raut_\sigma}{\prog_1}}(\Diamond\langle\downarrow,\sigma'\rangle)+ \\
&\quad \sum_{\sigma \in\N^V,\sigma\not\models\guard} \semanticsaut{\Raut}(\sigma)\cdot  \mathrm{Pr}^{\markovchain{\Raut_\sigma}{\prog_2}}(\Diamond\langle\downarrow,\sigma'\rangle) \\
=\;& \sum_{\sigma \in\N^V,\sigma\models\guard} \semanticsaut{\Raut}(\sigma) \cdot \semanticsaut{\semantics{\prog_1}(\Raut_\sigma)}(\sigma') + \\
&\quad \sum_{\sigma \in\N^V,\sigma\not\models\guard} \semanticsaut{\Raut}(\sigma)\cdot  \semanticsaut{\semantics{\prog_2}(\Raut_\sigma)}(\sigma') \quad\qquad \text{(IH)}\\ 
=\;& \semanticsaut{\semantics{\prog_1}(\Raut\times\Baut_{\guard})}(\sigma') + \semanticsaut{\semantics{\prog_2}(\Raut\times\Baut_{\neg\guard})}(\sigma') \\
=\;& \semanticsaut{\weightedunion{\semantics{\prog_1}(\Raut\times\Baut_{\guard})}{ \semantics{\prog_2}(\Raut\times\Baut_{\neg\guard})}}(\sigma')\\
=\;& \semanticsaut{\semantics{\ite{\guard}{\prog_1}{\prog_2}}(\Raut)}(\sigma')
\end{align*}
Similarly, reaching an error state $\errorstate$ corresponds to the sum of probabilities for each $\sigma\in\N^V$ of reaching an error state with $\prog_1$ if $\sigma\models\guard$ and with $\prog_2$ otherwise.
\begin{align*}
&\markovprob{\ite{\guard}{\prog_1}{\prog_2}}{\Diamond\errorstate} \\ 
=& \sum_{\sigma\models\guard}\mathrm{Pr}^{\markovchain{\Raut_\sigma}{\prog_1}}(\Diamond\errorstate) + \sum_{\sigma\not\models\guard} \mathrm{Pr}^{\markovchain{\Raut_\sigma}{\prog_2}}(\Diamond\errorstate) \\
=& \sum_{\sigma\models\guard}\semanticsaut{\Raut}(\sigma)\Bigl(
    \sum_{\sigma'\in\N^V} \semanticsaut{\Raut_{\sigma}}(\sigma') - \sum_{\sigma'\in\N^V} \semanticsaut{\semantics{\prog_1}(\Raut_{\sigma})}(\sigma') \Bigr) + \\
    &\quad \sum_{\sigma\not\models\guard}\semanticsaut{\Raut}(\sigma)\Bigl(
    \sum_{\sigma'\in\N^V} \semanticsaut{\Raut_{\sigma}}(\sigma') - \sum_{\sigma'\in\N^V} \semanticsaut{\semantics{\prog_2}(\Raut_{\sigma})}(\sigma') \Bigr) \quad \text{(IH)}\\
=& \sum_{\sigma\models\guard}\semanticsaut{\Raut}(\sigma)\Bigl(
    1 - \sum_{\sigma'\in\N^V} \semanticsaut{\semantics{\prog_1}(\Raut_{\sigma})}(\sigma') \Bigr) + \\
    &\quad \sum_{\sigma\not\models\guard}\semanticsaut{\Raut}(\sigma)\Bigl(
    1 - \sum_{\sigma'\in\N^V} \semanticsaut{\semantics{\prog_2}(\Raut_{\sigma})}(\sigma') \Bigr) \\
=& \sum_{\sigma\models\guard}\Bigl(\semanticsaut{\Raut}(\sigma) - \semanticsaut{\Raut}(\sigma)\sum_{\sigma' \in\N^V} \semanticsaut{\semantics{\prog_1}(\Raut_{\sigma})}(\sigma')  
    \Bigr) \\
    &\quad\sum_{\sigma\not\models\guard}\Bigl(\semanticsaut{\Raut}(\sigma) - \semanticsaut{\Raut}(\sigma)\sum_{\sigma' \in\N^V} \semanticsaut{\semantics{\prog_2}(\Raut_{\sigma})}(\sigma')  
    \Bigr) \\
=& \probmass{\Raut} - \Bigl(\sum_{\sigma\in\N^V,\sigma\models\guard}\semanticsaut{\Raut}(\sigma) \sum_{\sigma'\in\N^V} \semanticsaut{\semantics{\prog_1}(\Raut_\sigma)}(\sigma') + \\
&\quad \sum_{\sigma\in\N^V,\sigma\not\models\guard} \semanticsaut{\Raut}(\sigma) \sum_{\sigma'\in \N^V} \semanticsaut{\semantics{\prog_2}(\Raut_\sigma)}(\sigma')\Bigr) \\
=& \probmass{\Raut} - \Bigl(\sum_{\sigma\in\N^V} \semanticsaut{\semantics{\prog_1}(\Raut\times\Baut_\guard)}(\sigma) +\sum_{\sigma\in\N^V} \semanticsaut{\semantics{\prog_2}(\Raut\times\Baut_{\neg\guard})}(\sigma)\Bigr) \\
=& \probmass{\Raut} - \probmass{\semantics{\ite{\guard}{\prog_1}{\prog_2}}(\Raut)}  
\end{align*}

\noindent\textit{Case }$\prog = \prog_1\fatsemi\prog_2$:
First, we observe that the probability of eventually reaching $\sigma'$ from $\sigma$ with the program $\prog_1\fatsemi\prog_2$ is reaching an intermediate state $\sigma''$ after executing $\prog_1$ and then reaching $\sigma'$ from $\prog_2$ with $\sigma''$. Hence, 
\begin{align*}
    &\markovprob{\prog_1\fatsemi\prog_2}{\Diamond\opstate{\done}{\val'}} \\ 
    =& \sum_{\sigma\in\N^V} \semanticsaut{\Raut}(\sigma) \sum_{\sigma''\in\N^V}\mathrm{Pr}^{\markovchain{\Raut_\sigma}{\prog_1}}(\Diamond\opstate{\done}{\val''})\cdot \mathrm{Pr}^{\markovchain{\Raut_{\sigma''}}{\prog_2}}(\Diamond\opstate{\done}{\val'}) \\
    =& \sum_{\sigma\in\N^V} \semanticsaut{\Raut}(\sigma) \sum_{\sigma''\in\N^V}\semanticsaut{\semantics{\prog_1}(\Raut_\sigma)}(\sigma'')\cdot\semanticsaut{\semantics{\prog_2}(\Raut_{\sigma''})}(\sigma') \qquad\quad\text{(IH)} \\
    =& \sum_{\sigma\in\N^V}\sum_{\sigma''\in\N^V} \semanticsaut{\Raut}(\sigma)\cdot \semanticsaut{\semantics{\prog_1}(\Raut_\sigma)}(\sigma'')\cdot\semanticsaut{\semantics{\prog_2}(\Raut_{\sigma''})}(\sigma') \\
    =& \semanticsaut{\semantics{\prog_1\fatsemi\prog_2}(\Raut)}(\sigma')
\end{align*}
As for the probability of eventually reaching an error state, we have two options:
\begin{itemize}
    \item Reaching an error state from $\prog_1$ (e.g.\ $\prog_1 = \observe{\bfalse})$
    \item Executing $\prog_1$ with $\sigma$ (obtaining $\sigma'$) and reaching an error state from $\prog_2$ w.r.t.\ $\sigma'$
\end{itemize}
These two probabilities have to be calculated and their sum provides the overall probability of reaching the error state for some $\sigma\in\N^V$ (and the sum over all $\sigma$ provides the probability for all states).

\newcommand{\barprobmass}[1]{\ensuremath{
    \sum_{\bar{\sigma}\in\N^V} \semanticsaut{#1}(\bar{\sigma})
}}
\begin{align*}
    &\markovprob{\prog_1\fatsemi\prog_2}{\Diamond\errorstate} \\ 
    =& \sum_{\sigma\in\N^V} \semanticsaut{\Raut}(\sigma) \Bigl(\mathrm{Pr}^{\markovchain{\Raut_\sigma}{\prog_1}}(\Diamond\errorstate) + \\
    &\quad\sum_{\sigma'\in\N^V} \mathrm{Pr}^{\markovchain{\Raut_\sigma}{\prog_1}}(\Diamond\opstate{\done}{\sigma'})\cdot \mathrm{Pr}^{\markovchain{\Raut_{\sigma'}}{\prog_2}}(\Diamond\errorstate)\Bigr) \\
    =& \sum_{\sigma\in\N^V} \semanticsaut{\Raut}(\sigma) \Biggl(\barprobmass{\Raut_\sigma} - \barprobmass{\semantics{\prog_1}(\Raut_\sigma)} + \\
    &\quad\sum_{\sigma'\in\N^V} \Bigl(\semanticsaut{\semantics{\prog_1}(\Raut_\sigma)}(\sigma')\cdot \bigl(\barprobmass{\Raut_{\sigma'}} - \barprobmass{\semantics{\prog_2}(\Raut_{\sigma'})}   \bigr)\Bigr)\Biggr) \\
    =& \sum_{\sigma\in\N^V} \semanticsaut{\Raut}(\sigma) \Biggl(1 - \barprobmass{\semantics{\prog_1}(\Raut_\sigma)} + \\
    &\quad\sum_{\sigma'\in\N^V} \Bigl(\semanticsaut{\semantics{\prog_1}(\Raut_\sigma)}(\sigma')\cdot \bigl(1 - \barprobmass{\semantics{\prog_2}(\Raut_{\sigma'})}   \bigr)\Bigr)\Biggr) \\
    =& \sum_{\sigma\in\N^V}\Biggl( \semanticsaut{\Raut}(\sigma) - \sum_{\bar{\sigma}\in \N^V}\semanticsaut{\Raut}(\sigma)\cdot\semanticsaut{\semantics{\prog_1}(\Raut_\sigma)}(\bar{\sigma}) + \\
    &\quad \sum_{\sigma'\in\N^V} \Bigl( \semanticsaut{\Raut}(\sigma)\cdot \semanticsaut{\semantics{\prog_1}(\Raut_\sigma)}(\sigma') - \\
    &\quad\sum_{\bar{\sigma}\in\N^V}\semanticsaut{\Raut}(\sigma)\cdot \semanticsaut{\semantics{\prog_1}(\Raut_\sigma)}(\sigma')\cdot \semanticsaut{\semantics{\prog_2}(\Raut_{\sigma'})}(\bar\sigma) \Bigr) \Biggr)\\
    =& \sum_{\sigma\in\N^V} \semanticsaut{\Raut}(\sigma) - \sum_{\sigma\in\N^V}\Biggl( \sum_{\bar{\sigma}\in\N^V} \semanticsaut{\Raut}(\sigma) \cdot \semanticsaut{\semantics{\prog_1}(\Raut_\sigma)}(\bar{\sigma}) -\\
    &\quad \sum_{\sigma'\in\N^V}\Bigl(\semanticsaut{\Raut}(\sigma)\cdot \semanticsaut{\semantics{\prog_1}(\Raut_\sigma)}(\sigma') - \\
    &\quad\sum_{\bar{\sigma}\in\N^V}\semanticsaut{\Raut}(\sigma)\cdot \semanticsaut{\semantics{\prog_1}(\Raut_\sigma)}(\sigma')\cdot \semanticsaut{\semantics{\prog_2}(\Raut_{\sigma'})}(\bar\sigma)  \Bigr) \Biggr) \\
    =& \probmass{\Raut} - \probmass{\semantics{\prog_2}(\semantics{\prog_1}(\Raut))} \\
    =& \probmass{\Raut} - \probmass{\semantics{\prog_1\fatsemi\prog_2}(\Raut)}
\end{align*}
\end{proof}
With this lemma, we can now show the overall operational equivalence for all \redip program (excluding the \texttt{iid}-statement).
\opequiv*

\begin{proof}

\begin{align*}
&\markovprob{\prog}{\Diamond\opstate{\done}{\val}~\vert~\neg\Diamond\errorstate} \\ 
=&\quad \frac{\markovprob{\prog}{\Diamond\opstate{\done}{\val} \land \neg \Diamond\errorstate}}{\markovprob{\prog}{\Diamond\neg\errorstate}} \\
=&\quad \frac{\markovprob{\prog}{\Diamond\opstate{\done}{\val}}}{\markovprob{\prog}{\Diamond\neg\errorstate}}  \\
=&\quad \frac{\markovprob{\prog}{\Diamond\opstate{\done}{\val}}}{\probmass{\Raut}-\markovprob{\prog}{\Diamond\errorstate}} \\
=&\quad \frac{\semantics{\prog}(\Raut)(\val)}{\probmass{\Raut} - (\probmass{\Raut} - \semanticsaut{\semantics{\prog}(\Raut)})} \quad \text{(\Cref{thm:non-norm-equiv})}\\
=&\quad \frac{\semantics{\prog}(\Raut)(\val)}{\semanticsaut{\semantics{\prog}(\Raut)}} \\
=&\quad \normalize{\semantics{\prog}(\Raut)}(\val)
\end{align*}
One can remove the conjunction in step two since terminating implies not reaching an error state. Note that if $\markovprob{\prog}{\Diamond\errorstate} = \probmass{\Raut}$, we have $\probmass{\semantics{\prog}(\Raut)} = 0$, and thus both are undefined.
\end{proof}

\end{document}